\documentclass[11pt]{article}
\usepackage[top=27mm,bottom=27mm,left=22mm,right=22mm]{geometry}
\usepackage[T1]{fontenc}
\usepackage[utf8]{inputenc}
\usepackage{lmodern}
\usepackage{amsmath,amssymb,amsthm}
\usepackage{mathtools}
\usepackage{microtype}
\usepackage{enumitem}
\usepackage{booktabs}
\usepackage{needspace}
\usepackage{hyperref}
\usepackage[nameinlink,noabbrev]{cleveref}

\numberwithin{equation}{section}
\theoremstyle{plain}
\newtheorem{theorem}{Theorem}[section]
\newtheorem{lemma}[theorem]{Lemma}
\newtheorem{proposition}[theorem]{Proposition}
\newtheorem{corollary}[theorem]{Corollary}
\newtheorem{conjecture}[theorem]{Conjecture}
\theoremstyle{definition}
\newtheorem{definition}[theorem]{Definition}
\theoremstyle{remark}

\newcommand{\F}{\mathbb F}
\newcommand{\A}{\mathcal A}
\newcommand{\E}{\mathcal E}
\newcommand{\C}{\mathcal C}
\newcommand{\PG}{\operatorname{PG}}
\newcommand{\RS}{\operatorname{RS}}
\newcommand{\ev}{\operatorname{ev}}
\newcommand{\Disc}{\operatorname{Disc}}
\newcommand{\rank}{\operatorname{rank}}
\newcommand{\dom}{\operatorname{dom}}
\newcommand{\ceil}[1]{\left\lceil #1\right\rceil}
\hypersetup{
 colorlinks=true,
 linkcolor=blue,
 citecolor=blue,
 urlcolor=blue,
 filecolor=blue,
 pdftitle={The Cheng--Murray Conjecture on Deep Holes of Full-Length Reed--Solomon Codes},
 pdfauthor={Weijun Fang}
}

\title{The Cheng--Murray Conjecture on Deep Holes of Full-Length Reed--Solomon Codes}
\author{Weijun Fang\\
\small State Key Laboratory of Cryptography and Digital Economy Security\\
\small Key Laboratory of Cryptologic Technology and Information Security, Ministry of Education\\
\small School of Cyber Science and Technology, Shandong University\\
\small Qingdao 266237, China\\
\small \texttt{fwj@sdu.edu.cn}}
\date{}
\begin{document}
\maketitle

\begin{abstract}
Let $\RS_q(k)$ denote the full-length Reed--Solomon code
over $\F_q$. Every word $u=(u_x)_{x\in\F_q} \in\F_q^q$ has a unique
interpolation polynomial $U(X)\in\F_q[X]$ of degree less than
$q$, satisfying $U(x)=u_x$ for all $x\in\F_q$.
It is well known that every word with $\deg U=k$ is a
deep hole. The Cheng--Murray conjecture asserts that these
are the only deep holes for $3\le k\le q-2$, except when
$q$ is even and $k=q-3$. Previous results established the
conjecture for $k\le p-1$ and for
$k\ge\lfloor(q-1)/2\rfloor$, where $p$ is the characteristic
of $\F_q$.
In this paper, we prove the Cheng--Murray conjecture in the
remaining range $3\le k\le\lfloor q/2\rfloor$, thereby
resolving it completely.
Our proof combines two complementary approaches.
The tangent envelope of a marked arc yields differential
constraints that establish the conjecture for
$3\le k\le q/2$ when $q$ is even, and for
$3\le k\le\lceil q/3\rceil$ when $q$ is odd.
For the remaining cases with $q$ odd, we combine
local syndrome compatibility with combinatorial
arguments to rule out additional deep holes.
\end{abstract}

\noindent\textbf{Keywords:} Reed--Solomon codes; deep holes; arcs; tangent envelopes; syndrome localization.

\noindent\textbf{Mathematics Subject Classification (2020):} Primary 94B05; Secondary 94B27, 51E21, 05C70.

\section{Introduction}
Let $\F_q$ denote the finite field of order $q$, where $q$ is a prime power.
For a word $u\in\F_q^n$ and a linear code
$C\subseteq\F_q^n$, the distance from $u$ to $C$ is
defined by
\[
d(u,C)=\min_{c\in C}d_H(u,c),
\]
where the Hamming distance $d_H$ counts the coordinates in which
its two arguments differ.
The \emph{covering radius} of $C$ is
\[
\rho(C)=\max_{u\in\F_q^n}d(u,C).
\]
A word $u$ satisfying $d(u,C)=\rho(C)$ is called
a \emph{deep hole} of $C$. Thus deep holes are the words farthest from $C$.

For $1\le k<q$, the
\emph{full-length Reed--Solomon code} over $\F_q$ is
defined as
\[
\RS_q(k)=
\{(f(x))_{x\in\F_q}:
f\in\F_q[X],\ \deg f<k\}.
\]
Every word $u=(u_x)_{x\in\F_q}\in\F_q^q$
has a unique interpolation polynomial
$U\in\F_q[X]$ of degree less than $q$ such that
$U(x)=u_x$ for all $x\in\F_q$.
We call $U$ the \emph{canonical representative} of $u$.
For $k\le\deg U<q$, we have
\[
q-\deg U
\le d(u,\RS_q(k))
\le q-k.
\]
The lower bound follows from the root bound for $U-f$,
and the upper bound from interpolation on any $k$ coordinates.
Consequently, $\rho(\RS_q(k))=q-k$, and
every word with $\deg U=k$ is a deep hole.
These are called the \emph{expected deep holes}; all other deep holes
are called \emph{nonexpected}.

In 2007, Cheng and Murray~\cite{CM} conjectured
that these are the only deep holes of
$\RS_q(k)$. We state the conjecture in its
corrected form, excluding the known exception
in even characteristic~\cite{Kaipa,ZhangZhouSurvey}.

\begin{conjecture}[Cheng--Murray]\label{conj:CM}
Let $3\le k\le q-2$, excluding the case where $q$ is even and $k=q-3$. A word $u\in\F_q^q$ is a deep hole of
	$\RS_q(k)$ if and only if its canonical representative
	has degree $k$.
\end{conjecture}

Equivalently, under the hypotheses of Conjecture~\ref{conj:CM},
every polynomial $U$ with $k<\deg U<q$ agrees with some
polynomial of degree less than $k$ at $k+1$ distinct field elements.
In the excluded even-characteristic case $k=q-3$, the additional
deep holes are represented by $aX^{q-2}+f(X)$, where
$a\in\F_q^\times$ and $\deg f<q-3$~\cite[Theorem~1]{Kaipa}.

For $k=2$, the classification holds when $q$ is odd by
\cite[Theorem~10(1)]{ZCL}, but fails in even characteristic;
see the example following Lemma~\ref{lem:arc}.

\subsection{Related Work}

Much of the early work on Reed--Solomon deep holes focused on
words whose interpolation polynomials have degree $k+d$,
where $d\ge1$.
Cheng and Murray~\cite{CM} reduced the exclusion of such
deep holes to the existence of rational points with distinct
coordinates on certain algebraic hypersurfaces.
Subsequent results based on character sums and
distinct-coordinate counting were obtained by
Li and Wan~\cite{LiWan}, Liao~\cite{Liao},
and Zhu and Wan~\cite{ZhuWan}.
Cafure, Matera, and Privitelli~\cite{CMP} analyzed the
singularities of related symmetric hypersurfaces to derive
further deep-hole exclusions.
In the reverse direction, Zhang and Wan~\cite{ZhangWanSym}
used connections with Reed--Solomon deep holes and finite
geometry to obtain lower bounds for rational points on
complete symmetric hypersurfaces.
These approaches give deep-hole exclusions under various
conditions on the field size, code dimension, and
interpolation degree.

The first two excess degrees have received particular
attention.
For $d=1$, Li and Wan~\cite{LiWanSubset} reduced the
deep-hole question to a subset-sum problem and obtained
explicit counting formulas.
For full-length codes with $2\le k\le q-2$, the
degree-$(k+1)$ deep holes are completely classified:
none occur in odd characteristic, while in even
characteristic they occur precisely when $k=q-3$ and
the coefficient of $X^k$ vanishes
\cite[Theorem~1]{LiZhu}.
For $d=2$, Zhang, Fu, and Liao~\cite{ZFL} constructed
deep holes for proper evaluation sets and excluded
degree-$(k+2)$ deep holes of full-length codes in
odd characteristic when $q>5$ and $2\le k\le q-3$.
Li and Zhu~\cite{LiZhu} subsequently determined
the exact error distances of degree-$(k+1)$ words
and obtained exact distances or upper bounds for
degree-$(k+2)$ words in various parameter ranges.
In a related direction, Li and Wan~\cite{LiWanDist}
studied the distance distribution of Reed--Solomon
codes, while Gao and Li~\cite{GaoLi} improved the
corresponding counting estimates and derived further
degree-dependent deep-hole exclusions.
These degree-restricted results do not, by themselves,
yield a classification for arbitrary interpolation degrees.

A complementary approach relates deep holes to
one-coordinate extensions of maximum-distance-separable
(MDS) codes.
Seroussi and Roth~\cite{SR} established structural
results on MDS extensions of generalized Reed--Solomon
codes.
Zhuang, Cheng, and Li~\cite[Theorem~10]{ZCL}
used finite geometry to prove the full-length
classification in odd characteristic $p$ when
$k+1\le p$ or
$3\le q-p+1\le k+1\le q-2$.
In particular, the Cheng--Murray conjecture holds
over odd prime fields.
Kaipa~\cite{Kaipa} exploited the correspondence
between deep-hole syndromes and one-coordinate
MDS extensions to obtain the classification for
$k\ge\lfloor(q-1)/2\rfloor$, including the known
even-characteristic exception.
For a comprehensive survey of the deep-hole problem
for generalized Reed--Solomon codes, including its
connections with finite geometry and related
computational questions, we refer to
Zhang and Zhou~\cite{ZhangZhouSurvey}.

\subsection{Main Result and Proof Outline}

The existing results do not settle the conjecture for all
$3\le k\le\lfloor q/2\rfloor$ over extension fields.
Our main theorem fills this gap.

\begin{theorem}\label{thm:main}
	Let $q$ be a prime power and $3\le k\le\lfloor q/2\rfloor$.
	A word $u\in\F_q^q$ is a deep hole of $\RS_q(k)$ if and
	only if its canonical representative $U$ has degree $k$.
\end{theorem}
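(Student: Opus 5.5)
The plan follows the route announced in the abstract. First, translate deep holes into arcs. A word $u$ with canonical representative $U$, $k<\deg U<q$, is a nonexpected deep hole of $\RS_q(k)$ exactly when $U$ agrees with no polynomial of degree $<k$ at $k+1$ points. By the standard syndrome correspondence, this is equivalent to the following: the $q$ columns $(1,x,\dots,x^{k-1},U(x))^{T}$ (equivalently, the parity-check columns of $\RS_q(k)$ augmented by the syndrome of $u$) give an arc of size $q+1$ in $\PG(k,q)$ or its dual, extending the normal rational curve. Here "arc" means any $k+1$ of the points are in general position. The same statement, phrased with the \emph{marked} point, is the one-coordinate MDS extension used by Kaipa. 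I will use this reformulation (Lemma~\ref{lem:arc} presumably records it), together with the fact that any $k+2$ points of the arc impose the interpolation constraints. The aim is then to show that the only such extensions come from $\deg U=k$, i.e.\ the added point is the point at infinity of the normal rational curve.

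Second, handle $q$ even, and $q$ odd with $k\le\lceil q/3\rceil$, via the tangent envelope. For each point $x$ of the arc, form the osculating hyperplanes, i.e.\ the tangent data: the derivative vectors $(0,1,2x,\dots)$ and their Hasse-derivative analogues. Requiring that the extra point never lies on hyperplanes spanned by $k-1$ curve points plus the marked point gives polynomial identities in $x$. Differentiating these formally yields identities for $U$ of the form $U^{(j)}(x)=\text{(poly of degree}<k)$ evaluated on all of $\F_q$. Comparing degrees below $q$ then forces $\deg U\le k$. The key inequality is that the identity has degree below $q$. In characteristic $2$ the tangent lines of an arc are concurrent-like (nucleus phenomena), which gives stronger constraints and allows the argument for all $k\le q/2$. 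For $q$ odd, the degree count only closes when roughly $3k\le q+2$, which gives the bound $\lceil q/3\rceil$.

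Third, treat the remaining range $\lceil q/3\rceil<k\le\lfloor q/2\rfloor$ with $q$ odd. Results of~\cite{ZCL} already cover $k\le p-1$, so assume $q=p^e$ with $e\ge2$. Use local syndrome compatibility: fix a subset $S\subset\F_q$ of size $k+1$ and compute the syndrome of $U$ on $S$, which is the divided difference $U[S]$. The nondeep-hole condition is that some $(k+1)$-subset has $U[S]=0$. Since $k>q/3$, each subset $S$ misses fewer than $2q/3$ elements, so there are many overlapping subsets. The divided-difference recursion then links $U[S]$ across subsets differing in one element. Assuming $U[S]\neq0$ for all $S$, I would derive that $1/U[S]$, or an associated function, is "affine" along the exchange graph. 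A combinatorial counting step (a matching/covering argument in the exchange graph) then forces the leading coefficient structure to match that of a degree-$k$ polynomial. I expect this last step to be the main obstacle: the gluing of local constraints into a global degree bound in the window $q/3<k\le q/2$. I would first try it by restricting to additive subgroup cosets $a+\F_p$-subspaces of $\F_q$, where the divided differences simplify through linearized polynomials.

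Finally, combine the pieces. The "if" direction is the introduction's bound $d(u,\RS_q(k))\ge q-\deg U$, which gives distance $q-k=\rho$ when $\deg U=k$. The "only if" direction follows from the cases above: for $\deg U<k$ we have $u\in\RS_q(k)$, and for $\deg U>k$ the arc arguments show distance at most $q-k-1$.
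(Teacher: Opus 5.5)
Your arc reformulation and the ``if'' direction are fine. The two middle steps, which carry all the content, have genuine gaps.

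\textbf{The second step.} You never say where information about $U$ comes from, and the hyperplanes you name carry none. A hyperplane through $P_\infty$ has dual coordinates $(g,0)$ and contains the finite point indexed by $x$ if and only if $g(x)=0$. So a hyperplane spanned by $P_\infty$ and $k-1$ finite points of $\A_U$ meets $\A_U$ in exactly those $k$ points for \emph{every} $U$; requiring that it contain no further arc point is vacuous.

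The missing ingredient is a global object: the tangent-envelope theorem of Ball--Lavrauw. For $q\ge(\epsilon+1)(k-1)$ it provides a single form $\Phi$ of degree $\epsilon(k-1)$ on the dual space whose restriction to each pencil $\ell_S$ is $c_S\prod_i L_{\tau_i}^{\epsilon}$. The hyperplanes through $P_\infty$ only pin down its boundary section, $\Phi(g,0)=a_{k-1}\delta_{k-1}(g)$ or $a_{k-1}^2\Disc_{k-1}(g)$. The information on $U$ comes from the pencils through $k-1$ finite points. There the missing-value identity $\sum_{m\in M_S}m=\sum_{a\in S}U'(a)/P_S'(a)$ fixes the first transverse coefficient $\Phi_1(P_S)$. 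Moving one root with $k-2$ roots fixed then gives $\ev((HU)')\in\E_{2k-4}$ for even $q$ and $\ev(2HU'-H'U)\in\E_{3k-5}$ for odd $q$. The even-characteristic gain is exactly that $\Phi$ has degree $k-1$ rather than $2(k-1)$.

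Even granting this, in characteristic $3$ the odd constraint is satisfied by every $U$ of degree $k+1$. So for $q=3^e$ your degree comparison cannot conclude. The paper closes this case with Li--Zhu's theorem that degree-$(k+1)$ words are never deep holes in odd characteristic.

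\textbf{The third step.} This is not yet an argument: you leave the gluing step open. The assertion that nonvanishing of all $U[S]$ makes $1/U[S]$ ``affine along the exchange graph'' is either circular or empty. Read globally, it is equivalent to $U$ agreeing on all of $\F_q$ with a single template $cX^k+f$ or $f+c/(X-\zeta)$, which is the conclusion itself. Read locally, it carries no information, since every $(k+2)$-subset automatically has a unique template (Corollary~\ref{cor:unique-template}; $\C_1$ is the whole projective line).

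The real difficulty is controlling bad $(k+3)$-subsets, and the paper's mechanism for this is absent from your plan:
\begin{itemize}
\item A bad $W$ has a quadratic local syndrome $[L_W]\notin\C_2$. Its obstruction graph $\mathcal H_{L_W}$ is a matching plus one or three isolated vertices, and $W$ is independent in it.
\item The deletion identity $L_{W\setminus\{z\}}(g)=L_W((X-z)g)$, together with a singular-Hankel argument, shows that distinct failed extensions $B\cup\{z\}$ of a fixed $(k+2)$-set $B$ give distinct classes (Lemma~\ref{lem:noncollision}).
\item Bad sets have density $b_w\ge w/q$.
\end{itemize}
Combined, these force $Q(q,k+2)\ge(q-k-2)(k+3)/q$, which contradicts the exact independent-set count for all odd $q\ge41$. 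The fields $q\in\{25,27\}$ need separate numerics. The case $(q,k)=(27,10)$ additionally needs a restricted sum--product lemma bounding template agreements by $13$. Restricting to cosets of $\F_p$-subspaces, as you suggest, does not supply any of this.
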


Together with the known classification for
$k\ge\lfloor(q-1)/2\rfloor$,
Theorem~\ref{thm:main} completes the proof of
Conjecture~\ref{conj:CM}.

Our proof combines two complementary forms of rigidity
arising from the deep-hole condition: the geometry of
marked arcs and the compatibility of local syndromes.

We begin with the geometric approach.
Let $U$ be the canonical representative of a deep hole,
and associate with it the marked set
\[
\A_U=
\{[1:x:\cdots:x^{k-1}:U(x)]:x\in\F_q\}
\cup\{[0:\cdots:0:1]\}
\subseteq\PG(k,q).
\]
Here $\PG(k,q)$ is the $k$-dimensional projective space over $\F_q$,
and the last point is the marked point.
The deep-hole condition is equivalent to $\A_U$ being a
$(q+1)$-arc, meaning that every $k+1$ of its points are
projectively independent.
Projection from the marked point gives the affine part
of a normal rational curve.
Exploiting this prescribed projection and the
tangent-envelope theorem in the form given by
Ball and Lavrauw~\cite{BLtensor,BLsurvey}, we explicitly determine
the restriction of the tangent-envelope form to the
dual hyperplane parametrizing hyperplanes through
the marked point.
The resulting boundary form is expressed in terms
of the discriminant in odd characteristic and its
polynomial square root in characteristic two, together
with the appropriate leading-coefficient factors.
We then compute the first transverse coefficient
of the envelope and derive linear differential
constraints on $U$.
These constraints establish the conjecture for
$3\le k\le q/2$ in even characteristic and
$q>3(k-1)$ in odd characteristic, using the known
error-distance result for degree-$(k+1)$ words
to eliminate the residual possibility in
characteristic three.

For the remaining range, which occurs only in
odd characteristic, we develop a local-syndrome
approach.
We show that every $(k+2)$-subset of $\F_q$
admits a unique template agreeing with $U$,
either a degree-$k$ polynomial or a rational function
$f(X)+c/(X-\zeta)$ with $\deg f<k$, $c\in\F_q^\times$,
and $\zeta\in\F_q$.
A failure to extend such a template to an additional
point produces a quadratic local syndrome outside
the conic of evaluation functionals.
The associated obstruction graph is a disjoint union of edges
and isolated vertices in which the original $(k+2)$-subset
is independent.
The key step is a compatibility argument showing
that distinct failed extensions of a fixed subset
produce distinct projective quadratic syndromes.
Combining this noncollision property with a lower
bound on the density of bad subsets forces
a quantitative lower bound on the number of
available quadratic obstructions.
On the other hand, the matching structure permits
an exact count of the corresponding independent
subsets.
Comparing these bounds excludes nonexpected deep
holes in the remaining range for all odd $q\ge41$.
The remaining odd proper extension fields
$q\in\{9,25,27\}$ are handled by elementary
matching and counting arguments, together with
a restricted sum--product lemma over $\F_{27}$.

\Needspace{12\baselineskip}
\noindent\emph{Recent independent work.}
Around the time the present manuscript was completed and
posted on arXiv~\cite{FangArxiv}, two independently developed
preprints on the Cheng--Murray conjecture also appeared.

Chen and Huang~\cite{Chen-Huang} establish the deep-hole
classification for every prime power $q>19683$ and
$3\le k\le q-2$, including the known even-characteristic
exception. They also parametrize the dimension-two deep holes
in even characteristic by normalized $o$-polynomials. Their proof
combines planar-arc completion, quadratic quotient
algebras, and weighted independent-set estimates for
involution graphs. Although both approaches use finite
geometry and syndrome methods, our proof proceeds
differently: we derive differential constraints from the
tangent envelope of a marked arc, and use local template
compatibility and exact matching counts for quadratic
obstructions to handle the remaining cases.

Fan~\cite{Fan} approaches the problem by prescribing the
interpolation degree $\deg U=k+m$, where $m\ge3$.
For $3\le m\le7$, he classifies all such deep holes over
every finite field, including the exceptional dimension-two
families in even characteristic. For arbitrary $m\ge3$,
he excludes degree-$(k+m)$ deep holes when $k\ge3$ and
$\sqrt q\ge12(m+1)\log(2m+2)$, and obtains a complete
classification when $q>16(m+2)^4$. His methods rely on
complete homogeneous symmetric functions, agreement
estimates, and finite-field arguments, supplemented by
certified computations for certain small cases.

Thus, Chen and Huang establish the classification
for $q>19683$, while Fan's results either impose
field-size conditions or concern specified excess degrees.
In contrast, Theorem~\ref{thm:main} establishes
the classification for every prime power $q$ and
every $3\le k\le\lfloor q/2\rfloor$, without any
field-size restriction or restriction on the
interpolation degree.
Combined with Kaipa's high-dimensional result
~\cite{Kaipa}, it completely resolves
Conjecture~\ref{conj:CM}.

\subsection{Organization}

Section~\ref{sec:interpolation} characterizes deep holes by marked arcs.
Section~\ref{sec:envelope} determines the boundary form and first
transverse coefficient of the tangent envelope.
Section~\ref{sec:differential} proves the classification in even
characteristic and for odd $q>3(k-1)$.
Section~\ref{sec:templates} establishes the existence and uniqueness
of local templates, while Section~\ref{sec:quadratic} uses quadratic
obstructions to settle odd $q\ge41$.
Section~\ref{sec:completion} completes the remaining small-field
cases and proves the main theorem. The uniform quadratic-matching
estimate is proved in Appendix~\ref{app:quadratic}.

\section{A Marked-Arc Characterization of Deep Holes}\label{sec:interpolation}
In this section, we characterize deep holes in terms of marked
projective arcs and describe their associated tangent hyperplanes.
For a nonnegative integer $m$, let $\F_q[X]_{\le m}$
denote the space of polynomials over $\F_q$ of degree
at most $m$. We write $[X^j]H$ for the coefficient of
$X^j$ in $H$, and $\binom{S}{r}$ for the family of
$r$-subsets of a finite set $S$.
For $F\in\F_q[X]$, write
\[\ev(F)=(F(x))_{x\in\F_q}.\]

For any $S\subseteq\F_q$, put
\[
P_S(X)=\prod_{a\in S}(X-a),\qquad P_{\varnothing}(X)=1.
\]
For nonempty $S$ and $\deg H<|S|$, Lagrange interpolation takes the form
\[
H(X)=\sum_{a\in S}H(a)\frac{P_S(X)}{(X-a)P_S'(a)}.
\]
Comparing the coefficients of $X^{|S|-1}$, we obtain
\begin{equation}\label{eq:lagrange}
\sum_{a\in S}\frac{H(a)}{P_S'(a)}
=[X^{|S|-1}]H
\qquad(\deg H<|S|).
\end{equation}

Since $\rho(\RS_q(k))=q-k$, a vector with canonical representative $U$ is a deep hole precisely when no polynomial of degree less than $k$ agrees with $U$ at $k+1$ distinct elements of $\F_q$. Equivalently,
\begin{equation}\label{eq:deep-hole-criterion}
\bigl|\{x\in\F_q:U(x)=f(x)\}\bigr|\le k
\qquad\text{for every }f\in\F_q[X],\ \deg f<k.
\end{equation}

Associate with $U$ the set
\[
\A_U=\{[1:x:\cdots:x^{k-1}:U(x)]:x\in\F_q\}\cup\{P_\infty\}
\subseteq\PG(k,q),
\]
where
\[
P_\infty=[0:\cdots:0:1]
\]
is the \emph{marked point}. Recall that a set in $\PG(k,q)$ is an arc if every $k+1$ of its points are projectively independent.

A hyperplane in $\PG(k,q)$ has an equation
\[
\sum_{i=0}^{k-1}a_iY_i+zY_k=0.
\]
Identify its dual coordinates with $(g,z)$, up to simultaneous
nonzero scaling, where $g(X)=\sum_{i=0}^{k-1}a_iX^i$.
It contains the finite point $[1:x:\cdots:x^{k-1}:U(x)]$
if and only if $g(x)+zU(x)=0$, and contains $P_\infty$
if and only if $z=0$.

When evaluating a form in the coefficients of a polynomial, we
identify that polynomial with its coefficient vector. Thus, for a
form $F$ in $a_0,\ldots,a_r$ and $g(X)=\sum_{i=0}^r a_iX^i$,
we write $F(g)=F(a_0,\ldots,a_r)$. For a form $\Phi$
in $a_0,\ldots,a_{k-1},z$, we similarly write
$\Phi(g,z)=\Phi(a_0,\ldots,a_{k-1},z)$.
This convention will also be used for the discriminant forms
and their coefficient derivatives.

The following connection between deep holes and arcs is standard; see, for example, \cite{ZCL,Kaipa}.

\begin{lemma}\label{lem:arc}
A vector $u\in\F_q^q$ with canonical representative $U$ is a deep hole of $\RS_q(k)$ if and only if $\A_U$ is a $(q+1)$-arc in $\PG(k,q)$.
\end{lemma}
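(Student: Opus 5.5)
The plan is to reduce both sides to the criterion \eqref{eq:deep-hole-criterion} and check, by cases, what a linear dependence among $k+1$ points of $\A_U$ means. The arc $\A_U$ has $q+1$ points: the $q$ points $Q_x=[1:x:\cdots:x^{k-1}:U(x)]$ are pairwise distinct because their second coordinates differ, and none equals $P_\infty$. So $\A_U$ is a $(q+1)$-arc exactly when every $k+1$ of these points are projectively independent. Equivalently, no hyperplane contains $k+1$ of them. In the dual coordinates $(g,z)$ introduced above, this means no nonzero pair $(g,z)$ with $\deg g<k$ vanishes on $k+1$ points of $\A_U$.

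I would split according to whether the hyperplane passes through the marked point.

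\emph{Case $z\neq0$.} The hyperplane does not contain $P_\infty$. After scaling to $z=1$ and setting $f=-g$, it contains $Q_x$ exactly when $U(x)=f(x)$. So a hyperplane of this kind with $k+1$ points of $\A_U$ exists if and only if some $f$ with $\deg f<k$ agrees with $U$ at $k+1$ elements of $\F_q$. This is precisely a violation of \eqref{eq:deep-hole-criterion}.

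\emph{Case $z=0$.} Then $g\neq0$, and the hyperplane contains $P_\infty$ together with those $Q_x$ for which $g(x)=0$. Since $g$ is a nonzero polynomial of degree at most $k-1$, it has at most $k-1$ roots. The hyperplane therefore contains at most $k$ points of $\A_U$, so this case never produces a dependence.

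Combining the two cases: $\A_U$ is an arc if and only if \eqref{eq:deep-hole-criterion} holds. By the discussion preceding the lemma, which uses $\rho(\RS_q(k))=q-k$, that criterion is equivalent to $u$ being a deep hole, since $d(u,\RS_q(k))<q-k$ exactly when some codeword agrees with $u$ in at least $k+1$ coordinates.

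There is no real obstacle. The only point needing care is the bookkeeping that $k+1$ points are dependent exactly when they lie in a common hyperplane of $\PG(k,q)$, together with the root count in the case $z=0$.
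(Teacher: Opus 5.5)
Your proof is correct and follows essentially the paper's argument: both write a dependence among $k+1$ points of $\A_U$ as a hyperplane $(g,z)$, use the root bound on $g$ to rule out $z=0$, and then identify $z\ne0$ with an agreement of $U$ with $-g/z$ at $k+1$ points, which is exactly \eqref{eq:deep-hole-criterion}. The only cosmetic difference is that you split by whether the hyperplane contains $P_\infty$, so the same root count also disposes of subsets containing the marked point. The paper instead splits by whether the subset contains $P_\infty$ and handles that case with a Vandermonde expansion.
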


\begin{proof}
A set of $k+1$ finite points of $\A_U$ is contained in a hyperplane precisely when
\[
g(x)+zU(x)=0
\]
holds at their $k+1$ distinct parameters for some $\deg g\le k-1$ and $(g,z)\ne(0,0)$. Necessarily $z\ne0$, since otherwise a nonzero polynomial of degree at most $k-1$ would have $k+1$ roots. Thus $U$ agrees at these points with the polynomial $-g/z$ of degree less than $k$, and \eqref{eq:deep-hole-criterion} establishes the equivalence for finite subsets. If one of the $k+1$ points is $P_\infty$, expansion along its coordinate leaves a Vandermonde determinant on the remaining $k$ distinct field elements, which is nonzero.
\end{proof}

The original Cheng--Murray conjecture also included $k=2$, but its
predicted degree-two classification is false in even characteristic.
Indeed, let $q=2^m$, $m\ge3$, and set $U(X)=X^{2^{m-1}}$. For fixed $x\in\F_q$ and $y\ne x$,
\[
 \frac{U(y)-U(x)}{y-x}=(y-x)^{2^{m-1}-1}.
\]
Since $\gcd(2^{m-1}-1,2^m-1)=1$, these $q-1$ slopes are
distinct. Thus no three finite points of $\A_U$ are collinear;
moreover, any line through $P_\infty$ contains at most one
finite point of $\A_U$. By Lemma~\ref{lem:arc}, $\ev(U)$ is
a deep hole of $\RS_q(2)$ although $\deg U>2$.

Henceforth suppose $k\ge3$ and that $U$ represents a deep hole.
Let
$S\subseteq\A_U$ have size $k-1$. Its points are projectively
independent and span a $(k-2)$-dimensional subspace
$\langle S\rangle\subseteq\PG(k,q)$.
The hyperplanes containing $\langle S\rangle$ form a pencil,
naturally identified with a projective line in the dual space
$\PG(k,q)^\vee$.

For each point $P\in\A_U\setminus S$, the span $\langle S\cup\{P\}\rangle$ is a hyperplane belonging to this pencil. Moreover, distinct points of $\A_U\setminus S$ determine distinct hyperplanes: otherwise a single hyperplane would contain $S$ together with two further points of $\A_U$, hence $k+1$ points of the arc, a contradiction. Thus exactly
\[
|\A_U\setminus S|=(q+1)-(k-1)=q-k+2
\]
members of the pencil contain an additional point of $\A_U$.

Since a pencil contains $q+1$ hyperplanes, the remaining
\[
(q+1)-(q-k+2)=k-1
\]
members meet $\A_U$ precisely in the set $S$. We call these the \emph{tangent hyperplanes determined by $S$}.

\section{Tangent-Envelope Identities}\label{sec:envelope}
In this section, we determine the boundary form and the first
transverse coefficient of the tangent envelope of $\A_U$.
These identities provide the basis for the differential argument
in Section~\ref{sec:differential}.

\subsection{The Boundary Form of the Tangent Envelope}

Put
\[
\epsilon=
\begin{cases}
	1, & q \text{ even},\\
	2, & q \text{ odd}.
\end{cases}
\]
We use the following specialization of
\cite[Theorem~7]{BLtensor}; see also
\cite[Theorem~39]{BLsurvey}. This follows by replacing the dimension parameter
in those results by $k+1$ and taking $t=k-1$
and $m=\epsilon$.

\begin{theorem}[Tangent-envelope theorem]\label{thm:envelope}
    Let $q$ be a prime power, $3\le k\le q-2$, and let
    $U\in\F_q[X]$, $\deg U<q$, be the canonical representative
    of a deep hole of $\RS_q(k)$. Assume
    $q\ge(\epsilon+1)(k-1)$.
	Then there exists a nonzero homogeneous form $\Phi$ of degree $\epsilon(k-1)$ on the dual projective space $\PG(k,q)^\vee$ with the following property.

	For every $(k-1)$-subset $S\subseteq\A_U$, let $\ell_S\subseteq\PG(k,q)^\vee$ denote the projective line parametrizing the hyperplanes containing $\langle S\rangle$, and let
	\[
	\tau_1,\ldots,\tau_{k-1}\in\ell_S
	\]
	be the points corresponding to the tangent hyperplanes determined by $S$. Then
	\begin{equation}\label{eq:envelope-restriction}
		\left.\Phi\right|_{\ell_S}
		=
		c_S\prod_{i=1}^{k-1}L_{\tau_i}^{\,\epsilon}
	\end{equation}
	for some $c_S\in\F_q^\times$, where $L_{\tau_i}$ is a linear form on $\ell_S$ vanishing at $\tau_i$.
\end{theorem}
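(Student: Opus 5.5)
The statement is a specialization of the Ball--Lavrauw tangent-envelope theorem \cite[Theorem~7]{BLtensor} (see also \cite[Theorem~39]{BLsurvey}). The plan is to check that the marked arc $\A_U$ satisfies its hypotheses, substitute the parameters, and translate its conclusion into the language of pencils $\ell_S$ and tangent hyperplanes introduced at the end of Section~\ref{sec:interpolation}.

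\emph{Hypotheses and parameters.} By Lemma~\ref{lem:arc}, $\A_U$ is an arc of size $q+1$ in $\PG(k,q)$, that is, in a vector space of dimension $k+1$. In the Ball--Lavrauw setting one has an arc of size $q+t$ in dimension $k+1$. Here $q+1=q+(k-1)-(k-2)$, so the relevant parameter is $t=k-1$. This $t$ matches the number of tangent hyperplanes through each $(k-1)$-subset, since we showed there are exactly $(q+1)-(q-k+2)=k-1$ of them.

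Their theorem asserts the following. Suppose $|\A|\ge k+t$, i.e.\ $q+1\ge 2k-1$, which holds since $k\le q-2$ forces at least this in the range of interest. Suppose also the field-size condition $q\ge (m+1)t$ holds, with $m=1$ for even $q$ and $m=2$ for odd $q$; this is exactly our hypothesis $q\ge(\epsilon+1)(k-1)$. Then there is a form $\Phi$ of degree $mt$ on the dual space. For every $(k-1)$-subset $S$, the restriction of $\Phi$ to the line $\ell_S$ of hyperplanes through $\langle S\rangle$ is, up to a nonzero scalar, $\prod_i L_{\tau_i}^{m}$, where the $\tau_i$ range over the tangent hyperplanes through $S$.

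\emph{Translating the conclusion.} Three points need checking.
\begin{enumerate}
\item \emph{Matching the definitions.} Their ``tangent hyperplanes'' at a $(k-1)$-subset are the hyperplanes meeting the arc exactly in $S$, which agrees with our definition. They are parametrized by the line of the dual space, which is our $\ell_S$.
\item \emph{Nonvanishing of $c_S$.} The restriction of $\Phi$ to $\ell_S$ is a binary form of degree $\epsilon(k-1)$. It is determined up to scalar by the statement, and the scalar is nonzero because $\Phi|_{\ell_S}\not\equiv0$, which is part of their conclusion.
\item \emph{Homogeneity and degree.} The form $\Phi$ is homogeneous of degree $\epsilon(k-1)$ in the dual coordinates $(a_0,\dots,a_{k-1},z)$. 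It is nonzero because its restriction to any $\ell_S$ is nonzero.
\end{enumerate}

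\emph{Main obstacle.} The only delicate point is bookkeeping with conventions. Ball and Lavrauw index the ambient space by vector dimension and describe arcs of size $q+k-1-t$ (or similar), so one must confirm that substituting dimension $k+1$ and $t=k-1$ reproduces exactly the bound $q\ge(\epsilon+1)(k-1)$ and the exponent $\epsilon$. This includes the even case, where the envelope is a genuine degree-$t$ form rather than a square root of a degree-$2t$ one. Once the indices are matched, no further argument is needed beyond Lemma~\ref{lem:arc}.
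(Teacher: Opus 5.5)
Your proposal takes the same route as the paper, which also obtains this theorem as a direct specialization of Ball--Lavrauw \cite[Theorem~7]{BLtensor}: vector dimension $k+1$, $t=k-1$, $m=\epsilon$, with Lemma~\ref{lem:arc} supplying the $(q+1)$-arc. Two bookkeeping slips need fixing: in vector dimension $k+1$ their arcs have size $q+k-t$ (not $q+t$), which equals $q+1$ exactly when $t=k-1$, and the inequality $q+1\ge 2k-1$ you invoke follows from $q\ge(\epsilon+1)(k-1)$, not from $k\le q-2$.
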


We refer to $\Phi$ as a tangent-envelope form of $\A_U$.

Recall the dual coordinates introduced in
Section~\ref{sec:interpolation}. Here $g(X)=\sum_{i=0}^{k-1}a_iX^i$,
and $(g,z)$ represents the hyperplane
$\sum_{i=0}^{k-1}a_iY_i+zY_k=0$, up to simultaneous nonzero
scaling. Thus $\Phi(g,z)$ is homogeneous in
$a_0,\ldots,a_{k-1},z$. The identity
\eqref{eq:envelope-restriction} is one of homogeneous polynomials;
in particular, the tangent points are precisely the zeros of
$\Phi|_{\ell_S}$, each of multiplicity $\epsilon$.

The dual hyperplane $z=0$ parametrizes the hyperplanes through
the marked point $P_\infty$. Since the projection of $\A_U$
from $P_\infty$ is prescribed, we can determine the boundary form
$\Phi(g,0)$ explicitly.

\begin{definition}[Discriminant and its square root]\label{def:discriminant}
For $r\ge2$ and $g(X)=\sum_{i=0}^r a_iX^i$ with
$a_r\ne0$, write $g(X)=a_r\prod_{i=1}^r(X-x_i)$
over a splitting field. The \emph{discriminant form} is
\begin{equation}
 \Disc_r(g)=a_r^{2r-2}\prod_{1\le i<j\le r}(x_i-x_j)^2.
\end{equation}
In characteristic two, define its polynomial square root by
\begin{equation}
 \delta_r(g)=a_r^{r-1}\prod_{1\le i<j\le r}(x_i+x_j),
 \qquad \Disc_r(g)=\delta_r(g)^2.
\end{equation}
These root expressions define polynomials in the coefficients
and extend to $a_r=0$. They are homogeneous of coefficient
degrees $2r-2$ and $r-1$, respectively. Set
$\Disc_1=\delta_1=1$.
\end{definition}

The following interpolation lemma allows us to extend identities
established for the polynomials $P_T$ to arbitrary polynomials of
degree at most $k-1$.

\begin{lemma}\label{lem:coefficient-interpolation}
	Let $e,r\ge0$ and $q\ge e+r$. Fix a subset $E\subseteq\F_q$ of size $e+r$. A homogeneous form of degree $e$ in the $r+1$ coefficients of
	\[
	g(X)=a_0+a_1X+\cdots+a_rX^r
	\]
	is uniquely determined by its values at the polynomials
	\[
	P_T(X)=\prod_{x\in T}(X-x),
	\qquad
	T\in\binom{E}{r}.
	\]
	Equivalently, if a homogeneous degree-$e$ form
	$F(a_0,\ldots,a_r)$ vanishes at every $P_T$, then $F\equiv0$.
\end{lemma}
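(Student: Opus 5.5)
We prove the equivalent vanishing statement by induction on the number of variables, $r$, with the degree $e$ and the subset $E$ of size $e+r$ arbitrary. The dimension count is consistent with the claim. The space of degree-$e$ forms in $r+1$ variables has dimension $\binom{e+r}{r}$, which is exactly the number of $r$-subsets $T$ of $E$. So uniqueness is equivalent to saying the evaluation map at the points $P_T$ is injective, or equivalently that the $\binom{e+r}{r}$ coefficient vectors of the $P_T$ impose independent conditions on degree-$e$ forms.

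The base cases are quick. When $e=0$, a constant form vanishing at any point is zero. When $r=0$, we have $P_\varnothing=1$, so $F=c\,a_0^e$ with $F(1)=c$.

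For the inductive step, take $r\ge1$ and $e\ge1$, and fix $x_0\in E$. Consider the linear form
\[
\lambda(a_0,\ldots,a_r)=\sum_{i=0}^r a_ix_0^i,
\]
which evaluates $g$ at $x_0$. The key observation is that $\lambda(P_T)=0$ exactly when $x_0\in T$. The proof splits the family $\{P_T\}$ along this dichotomy.

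\textbf{Step 1: the subsets containing $x_0$.} The polynomials $P_T$ with $x_0\in T$ lie in the hyperplane $\lambda=0$, which consists of the multiples $(X-x_0)h(X)$ with $\deg h\le r-1$. Restricting $F$ to this hyperplane and substituting the coefficients of $(X-x_0)h$ gives a degree-$e$ form $\tilde F$ in the $r$ coefficients of $h$; this substitution is a linear isomorphism onto the hyperplane. As $T$ ranges over the sets containing $x_0$, the polynomials $h=P_{T\setminus\{x_0\}}$ range over $P_{T'}$ for $T'\in\binom{E\setminus\{x_0\}}{r-1}$. Since $|E\setminus\{x_0\}|=e+(r-1)$, the induction hypothesis at $(e,r-1)$ gives $\tilde F\equiv0$. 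Hence $F$ vanishes on the hyperplane $\lambda=0$. Because $\lambda$ is linear and irreducible and the polynomial ring is a UFD, this gives a factorization $F=\lambda\,G$ with $G$ homogeneous of degree $e-1$.

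\textbf{Step 2: the subsets avoiding $x_0$.} For $T\subseteq E\setminus\{x_0\}$ we have $\lambda(P_T)=P_T(x_0)\neq0$, so $F(P_T)=0$ forces $G(P_T)=0$. These $T$ range over $\binom{E\setminus\{x_0\}}{r}$, and $|E\setminus\{x_0\}|=(e-1)+r$. The induction hypothesis at $(e-1,r)$ then gives $G\equiv0$, and therefore $F\equiv0$.

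The main subtlety is the passage from ``$F$ vanishes at the points of the hyperplane'' to the factorization $F=\lambda G$. This must be done at the level of polynomial identities, not merely pointwise values over $\F_q$. That is why Step 1 applies the inductive hypothesis, which is itself a statement that a form is zero as a polynomial, to the restricted form $\tilde F$. Once $\tilde F$ is identically zero as a polynomial, $\lambda$ divides $F$ as polynomials, and no field-size issue arises beyond the hypothesis $q\ge e+r$, which guarantees that $E$ exists. Uniqueness of a form with prescribed values at the $P_T$ then follows by applying the vanishing statement to the difference of two such forms.
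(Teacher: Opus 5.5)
Your proof is correct, but it is organized differently from the paper's. The paper uses no induction. For each $T\in\binom{E}{r}$ it writes down the explicit degree-$e$ form $L_T(g)=\prod_{x\in E\setminus T}g(x)/P_T(x)$. It checks that $L_T(P_{T'})$ is $1$ if $T'=T$ and $0$ otherwise: for $T'\ne T$, some $x\in T'\setminus T$ lies in $E\setminus T$ and kills a factor. The count $\binom{e+r}{r}$ then shows that the $L_T$ form a basis of the degree-$e$ forms, dual to the points $P_T$.

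This gives more than uniqueness. It yields the interpolation formula $F=\sum_T F(P_T)\,L_T$, so every prescribed set of values is attained. Your induction proves only injectivity of the evaluation map, and surjectivity then needs the dimension count you mention. Conceptually the two arguments are close: the linear factors $g\mapsto g(x)$ of $L_T$ are exactly your functional $\lambda$. Peeling off one evaluation point at a time is the inductive form of the same triangularity.

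Two small points should be tidied up in your write-up.

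First, Step~2 invokes the hypothesis at $(e-1,r)$, with the same number of variables. So the induction must be on $e+r$ (or lexicographic in $(r,e)$), not on $r$ alone. Both $(e,r-1)$ and $(e-1,r)$ decrease $e+r$, and your base cases $e=0$ and $r=0$ cover the bottom.

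Second, the divisibility $F=\lambda G$ does not really come from irreducibility of $\lambda$ or unique factorization. The clean reason is a change of coordinates. Write $g=(X-x_0)h+t$ with $\deg h\le r-1$ and $t$ constant; this is an invertible linear change in which $\lambda=t$. Expressing $F$ in the coordinates $(h,t)$ and setting $t=0$ gives exactly $\tilde F$. So $\tilde F\equiv0$ means every monomial contains $t$, i.e.\ $\lambda\mid F$.
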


\begin{proof}
	For $T\in\binom{E}{r}$, define
	\[
	L_T(g)
	=
	\prod_{x\in E\setminus T}
	\frac{g(x)}{P_T(x)}.
	\]
	Since $|E\setminus T|=e$, the polynomial $L_T$ is a homogeneous form of degree $e$ in $a_0,\ldots,a_r$. Moreover,
	\[
	L_T(P_{T'})
	=
	\begin{cases}
		1,&T'=T,\\
		0,&T'\ne T.
	\end{cases}
	\]
	Indeed, if $T'\ne T$, then some $x\in T'\setminus T$ belongs to $E\setminus T$, and the corresponding factor vanishes.

	Thus the forms $L_T$ are linearly independent. Their number is $\binom{e+r}{r},$
	which is the dimension of the space of homogeneous degree-$e$ forms in $r+1$ variables. Hence they form a basis, and the assertion follows.
\end{proof}

\begin{proposition}\label{prop:boundary}
    Let $q$ be a prime power, $3\le k\le q-2$, and let
    $U\in\F_q[X]$, $\deg U<q$, be the canonical representative
    of a deep hole of $\RS_q(k)$. Assume
    $q\ge(\epsilon+1)(k-1)$.
	The envelope $\Phi$ supplied by Theorem~\ref{thm:envelope} may be scaled so that, for every
	\[
	g(X)=\sum_{i=0}^{k-1}a_iX^i,
	\]
	one has
	\begin{equation}\label{eq:boundary}
		\Phi(g,0)
		=
		\begin{cases}
			a_{k-1}\delta_{k-1}(g),&q\text{ even},\\
			a_{k-1}^2\Disc_{k-1}(g),&q\text{ odd}.
		\end{cases}
	\end{equation}
\end{proposition}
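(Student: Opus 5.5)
The plan is to compute $\Phi(g,0)$ on enough pencils lying inside the boundary hyperplane $z=0$, and then apply Lemma~\ref{lem:coefficient-interpolation}. Write $\Psi(g)$ for the right-hand side of \eqref{eq:boundary}. It is a homogeneous form of degree $\epsilon(k-1)$ in $a_0,\ldots,a_{k-1}$, since $\Disc_{k-1}$ and $\delta_{k-1}$ have degrees $2k-4$ and $k-2$. Likewise $g\mapsto\Phi(g,0)$ is a homogeneous form of degree $\epsilon(k-1)$. We apply Lemma~\ref{lem:coefficient-interpolation} with $r=k-1$ and $e=\epsilon(k-1)$. Its hypothesis $q\ge e+r=(\epsilon+1)(k-1)$ is exactly the assumed bound. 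So it suffices to find a scalar $\lambda\ne0$ with $\Phi(P_T,0)=\lambda\Psi(P_T)$ for all $T\in\binom{\F_q}{k-1}$, and then replace $\Phi$ by $\lambda^{-1}\Phi$.

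\emph{Step 1: pencils through the marked point.} Fix $T_0\subseteq\F_q$ with $|T_0|=k-2\ge1$. Take $S=\{P_\infty\}\cup\{[1:x:\cdots:x^{k-1}:U(x)]:x\in T_0\}$, a $(k-1)$-subset of $\A_U$. A hyperplane $(g,z)$ contains $S$ exactly when $z=0$ and $g(x)=0$ for all $x\in T_0$. Hence
\[
\ell_S=\{(P_{T_0}(X)(sX+t),0):[s:t]\in\PG(1,q)\}.
\]
Such a hyperplane meets $\A_U$ outside $S$ exactly when $sX+t$ has a root in $\F_q\setminus T_0$.

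The tangent members are therefore the following $k-1$ points:
\begin{itemize}[nosep]
\item $s=0$;
\item $[s:t]=[1:-x]$ for each $x\in T_0$.
\end{itemize}
Since $t+sx$ vanishes at $[1:-x]$ and $s$ vanishes at $[0:1]$, Theorem~\ref{thm:envelope} gives
\[
\Phi(P_{T_0}(sX+t),0)=c_S\,s^{\epsilon}\prod_{x\in T_0}(t+sx)^{\epsilon},\qquad c_S\ne0 .
\]

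\emph{Step 2: compare with $\Psi$ on the same line.} For $g=P_{T_0}(sX+t)$ we have $a_{k-1}=s$, and the roots of $g$ are $T_0\cup\{-t/s\}$.

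In the odd case,
\[
\Disc_{k-1}(g)=s^{2k-4}D(T_0)\prod_{x\in T_0}(x+t/s)^2=D(T_0)\prod_{x\in T_0}(sx+t)^2,
\]
where $D(T_0)=\prod(x_i-x_j)^2\ne0$. This is first checked for $s\ne0$, and then holds as a polynomial identity. Hence $\Psi(g)=D(T_0)\,s^2\prod_{x\in T_0}(sx+t)^2$.

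In characteristic two the same computation gives $\Psi(g)=\delta(T_0)\,s\prod_{x\in T_0}(sx+t)$ with $\delta(T_0)=\prod(x_i+x_j)\ne0$.

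Thus on every such line $\Phi(\cdot,0)=\lambda_{T_0}\Psi$, with $\lambda_{T_0}=c_S/D(T_0)$ (respectively $c_S/\delta(T_0)$), which is nonzero.

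\emph{Step 3: consistency of the constants.} This is the only real obstacle, since a priori $\lambda_{T_0}$ could depend on $T_0$. Let $T$ be a $(k-1)$-subset and $T_0\subset T$ with $|T_0|=k-2$. Then $P_T=P_{T_0}\cdot(X-x)$ with $\{x\}=T\setminus T_0$, so $P_T$ lies on $\ell_S$. Since $\Psi(P_T)\ne0$ (distinct roots, $a_{k-1}=1$), we get
\[
\lambda_{T_0}=\Phi(P_T,0)/\Psi(P_T)
\]
for every $(k-2)$-subset $T_0\subset T$. Consider the Johnson graph on $(k-2)$-subsets of $\F_q$, with adjacency meaning the union has size $k-1$. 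It is connected, so $\lambda_{T_0}=\lambda$ is a single nonzero constant.

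Consequently $\Phi(\cdot,0)-\lambda\Psi$ vanishes at every $P_T$ with $T\in\binom{E}{k-1}$, for any $E\subseteq\F_q$ of size $(\epsilon+1)(k-1)$. By Lemma~\ref{lem:coefficient-interpolation} it vanishes identically, and rescaling $\Phi$ by $\lambda^{-1}$ yields \eqref{eq:boundary}.
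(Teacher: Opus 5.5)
Your proposal is correct and follows the paper's proof essentially step for step. Both arguments use the same pencils $\bigl(P_{T_0}(X)(sX+t),0\bigr)$ through $P_\infty$ and $k-2$ finite arc points, the same discriminant factorization along each pencil, and the same Johnson-graph argument, comparing constants at the shared monic polynomial $P_T$, to make the scalars agree. They then conclude with the same application of Lemma~\ref{lem:coefficient-interpolation}, taking $e=\epsilon(k-1)$ and $r=k-1$.
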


\begin{proof}
Let $\mathcal B(g)$ denote the right-hand side of \eqref{eq:boundary}.
Fix $B\subseteq\F_q$ with $|B|=k-2$. A hyperplane through
$P_\infty$ and the finite arc points indexed by $B$ has $z=0$,
and its polynomial $g$ is divisible by $P_B$. Thus the corresponding
pencil is
\[
(g,z)=\bigl(P_B(X)(\alpha+\beta X),0\bigr),
 \qquad [\alpha:\beta]\in\PG(1,q).
\]
For $\beta=0$, the roots of $g$ are exactly $B$. If $\beta\ne0$,
the only possible additional root is $-\alpha/\beta$, which already
belongs to $B$ precisely when $\alpha+\beta b=0$ for some $b\in B$.
Consequently, the tangent members of the pencil are
$\beta=0$ and $\alpha+\beta b=0$ ($b\in B$). By Theorem~\ref{thm:envelope},
\begin{equation}\label{eq:phi-on-root-line}
 \Phi\bigl(P_B(X)(\alpha+\beta X),0\bigr)
 =\lambda_B\left(\beta\prod_{b\in B}(\alpha+\beta b)\right)^\epsilon,
 \qquad \lambda_B\in\F_q^\times.
\end{equation}

Applying the discriminant product formula to $P_B$ and
$\alpha+\beta X$, we obtain
\[
\begin{aligned}
\delta_{k-1}\bigl(P_B(X)(\alpha+\beta X)\bigr)
 &=\delta_{k-2}(P_B)\prod_{b\in B}(\alpha+\beta b)
 &&(q\text{ even}),\\
\Disc_{k-1}\bigl(P_B(X)(\alpha+\beta X)\bigr)
 &=\Disc_{k-2}(P_B)\prod_{b\in B}(\alpha+\beta b)^2
 &&(q\text{ odd}).
\end{aligned}
\]
These are polynomial identities in $\alpha,\beta$, so they also
hold when $\beta=0$. Since the coefficient of $X^{k-1}$ in
$P_B(X)(\alpha+\beta X)$ is $\beta$, we obtain
\[
 \mathcal B\bigl(P_B(X)(\alpha+\beta X)\bigr)
 =\begin{cases}
 \delta_{k-2}(P_B)\,\beta\displaystyle\prod_{b\in B}(\alpha+\beta b),
   &q\text{ even},\\[3pt]
 \Disc_{k-2}(P_B)\,\beta^2\displaystyle\prod_{b\in B}(\alpha+\beta b)^2,
   &q\text{ odd}.
 \end{cases}
\]
Both prefactors are nonzero because $P_B$ is square-free.
Comparison with \eqref{eq:phi-on-root-line} shows that
$\Phi(g,0)=c_B\mathcal B(g)$ on this pencil for some $c_B\ne0$.

If $B,B'$ differ in one element, the two pencils contain the same
monic polynomial $P_{B\cup B'}$, for which $\mathcal B$ is nonzero.
Hence $c_B=c_{B'}$. The graph of $(k-2)$-subsets with edges joining
pairs that differ in one element is connected. All $c_B$ are therefore
equal to a single $c\in\F_q^\times$, and for any $T\subseteq\F_q$ with $|T|=k-1$, we have
\[
\Phi(P_T,0)=c\mathcal B(P_T).
\]
Both sides have coefficient degree $\epsilon(k-1)$.
Since $q\ge(\epsilon+1)(k-1)$, Lemma~\ref{lem:coefficient-interpolation}
with $e=\epsilon(k-1)$ and $r=k-1$ shows that
$\Phi(g,0)=c\mathcal B(g)$ for every $g\in\F_q[X]$ of degree
at most $k-1$. Rescaling $\Phi$ by $c^{-1}$ proves the proposition.
\end{proof}

\subsection{The First Transverse Coefficient}

We determine the first transverse coefficient of the envelope
from a missing-value identity for local quotients.

Let $U\in\F_q[X]$, $\deg U<q$, be the canonical representative
of a deep hole of $\RS_q(k)$, where $3\le k\le q-2$.
Fix $S\subseteq\F_q$ with $|S|=k-1$, and let
\[
I_S(X)
=
\sum_{a\in S}
U(a)\frac{P_S(X)}{(X-a)P_S'(a)}
\]
be the polynomial of degree at most $k-2$ interpolating $U$ on $S$. Since $U-I_S$ vanishes on $S$, we may define
\[
Q_S(X)=\frac{U(X)-I_S(X)}{P_S(X)} \in \F_q[X].
\]
The restriction of $Q_S$ to $\F_q\setminus S$ is injective. Indeed, if
$Q_S(x)=Q_S(y)=c$ for distinct $x,y\notin S$, then the polynomial
$I_S+cP_S$, of degree at most $k-1$, agrees with $U$ on the $k+1$ points
of $S\cup\{x,y\}$, contradicting \eqref{eq:deep-hole-criterion}. Hence the set
\[
M_S
=
\F_q\setminus\{Q_S(x):x\in\F_q\setminus S\}
\]
has cardinality $k-1$.

The first-order term of the tangent-pencil product will involve the sum
of the values omitted by $Q_S$. The next lemma computes this sum.

\begin{lemma}\label{lem:missing}
    Let $q$ be a prime power, $3\le k\le q-2$, and let
    $U\in\F_q[X]$, $\deg U<q$, be the canonical representative
    of a deep hole of $\RS_q(k)$. For every $(k-1)$-subset
    $S\subseteq\F_q$, let $M_S$ be defined as above. Then
	\begin{equation}\label{eq:missing}
		\sum_{m\in M_S}m
		=
		\sum_{a\in S}\frac{U'(a)}{P_S'(a)}.
	\end{equation}
\end{lemma}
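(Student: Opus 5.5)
The plan is to compute the sum of the values taken by $Q_S$ in two ways, using only power-sum identities over $\F_q$ and the Lagrange coefficient identity \eqref{eq:lagrange}.

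\emph{Reduce to a sum of values of $Q_S$.} Since $Q_S$ is injective on $\F_q\setminus S$, its values there are exactly $\F_q\setminus M_S$. Because $q\ge k+2\ge5$, we have $\sum_{c\in\F_q}c=0$. Hence
\[
\sum_{m\in M_S}m=-\sum_{x\in\F_q\setminus S}Q_S(x)=-\sum_{x\in\F_q}Q_S(x)+\sum_{a\in S}Q_S(a).
\]

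\emph{The full sum vanishes.} For any $F$ with $\deg F<q$, the identity $\sum_{x\in\F_q}x^j=0$ for $0\le j<q-1$ and $=-1$ for $j=q-1$ gives $\sum_{x\in\F_q}F(x)=-[X^{q-1}]F$. Here $\deg Q_S\le\deg U-(k-1)\le q-k<q-1$. So $[X^{q-1}]Q_S=0$, and the full sum vanishes. (If $U=I_S$, then $Q_S=0$ and the statement is trivial.)

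\emph{Values on $S$.} Differentiate $U-I_S=P_SQ_S$ and evaluate at $a\in S$, using $P_S(a)=0$. This gives
\[
Q_S(a)=\frac{U'(a)-I_S'(a)}{P_S'(a)},
\]
since $P_S$ is square-free and so $P_S'(a)\neq0$. Now $\deg I_S'\le k-3<|S|-1=k-2$. Applying \eqref{eq:lagrange} with $H=I_S'$ yields
\[
\sum_{a\in S}\frac{I_S'(a)}{P_S'(a)}=[X^{k-2}]I_S'=0.
\]
Therefore $\sum_{a\in S}Q_S(a)=\sum_{a\in S}U'(a)/P_S'(a)$. Combining this with the first two steps proves \eqref{eq:missing}.

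The only delicate points are the degree bookkeeping and the derivative computation. One must check that $\deg Q_S<q-1$, so the power-sum identity kills the full sum, and that $\deg I_S'<k-2$, so \eqref{eq:lagrange} applies with zero top coefficient. The derivative step should be done via the product rule rather than by dividing values, since $Q_S(a)$ for $a\in S$ is a limit-type quantity. No characteristic restriction arises beyond $q>2$.
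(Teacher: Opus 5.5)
Your proposal is correct and follows the paper's proof essentially step for step: you kill $\sum_{x\in\F_q}Q_S(x)$ using $\deg Q_S\le q-k<q-1$, and you use injectivity of $Q_S$ off $S$ to obtain $\sum_{m\in M_S}m=\sum_{a\in S}Q_S(a)$. You then evaluate $Q_S(a)$ by differentiating $U-I_S=P_SQ_S$, and the $I_S'$ term vanishes by \eqref{eq:lagrange} because $\deg I_S'\le k-3$.
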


\begin{proof}
	Since $\deg Q_S\le q-k<q-1$, the standard power-sum identity
	$\sum_{x\in\F_q}H(x)=0$ for $\deg H<q-1$ implies
	\[
	\sum_{x\in\F_q}Q_S(x)=0.
	\]
	Since $Q_S$ is injective on $\F_q\setminus S$ and
    $\sum_{t\in\F_q}t=0$,
	\[
	\sum_{m\in M_S}m
	=
	-\sum_{x\notin S}Q_S(x)
	=
	\sum_{a\in S}Q_S(a).
	\]
	Differentiating $U-I_S=P_SQ_S$ at $a\in S$ yields
	$Q_S(a)
	=
	\frac{U'(a)-I_S'(a)}{P_S'(a)}.$
	Therefore
	\[
	\sum_{m\in M_S}m
	=
	\sum_{a\in S}\frac{U'(a)}{P_S'(a)}
	-
	\sum_{a\in S}\frac{I_S'(a)}{P_S'(a)}.
	\]
	The second sum vanishes by \eqref{eq:lagrange}, since
	$\deg I_S'\le k-3$.
\end{proof}

Under the hypotheses of Proposition~\ref{prop:boundary},
fix the envelope $\Phi$ normalized as in that proposition.
The dual pencil determined by the finite arc points indexed by $S$ is
\[
(g,z)=(\alpha P_S-zI_S,z),
\qquad
[\alpha:z]\in\PG(1,q).
\]
Indeed, a hyperplane $(g,z)$ contains these points precisely when
$g+zI_S$ vanishes on $S$, and hence is a scalar multiple of $P_S$.

For $x\notin S$,
\[
g(x)+zU(x)
=
P_S(x)\bigl(\alpha+zQ_S(x)\bigr),
\]
so the corresponding hyperplane contains the additional finite arc point indexed by $x$ if and only if
\[
\alpha+zQ_S(x)=0.
\]
The member $z=0$ contains the marked point $P_\infty$ and is therefore not tangent to $S$. For $z\ne0$, injectivity of $Q_S$ shows that the tangent members are precisely
\[
[\alpha:z]=[-m:1],
\qquad
m\in M_S.
\]
By Theorem~\ref{thm:envelope},
\[
\Phi(\alpha P_S-zI_S,z)
	=
	\kappa_S
	\prod_{m\in M_S}(\alpha+zm)^\epsilon
\]
for some $\kappa_S\in\F_q^\times$. Evaluating at $(\alpha,z)=(1,0)$ and using Proposition~\ref{prop:boundary}, we obtain
\[
\kappa_S=
\begin{cases}
	\delta_{k-1}(P_S),&q\text{ even},\\
	\Disc_{k-1}(P_S),&q\text{ odd}.
\end{cases}
\]
Therefore
\begin{equation}\label{eq:pencil}
	\Phi(P_S-zI_S,z)
	=
	\begin{cases}
		\delta_{k-1}(P_S)\displaystyle\prod_{m\in M_S}(1+zm),
		& q\text{ even},\\[4pt]
		\Disc_{k-1}(P_S)\displaystyle\prod_{m\in M_S}(1+zm)^2,
		& q\text{ odd}.
	\end{cases}
\end{equation}

The identity \eqref{eq:pencil} describes the variation of $\Phi$ along
the pencil, where both $g$ and $z$ change. To isolate the coefficient of
$z$ with $g$ fixed, we must also compute the first-order variation of
the boundary forms appearing in Proposition~\ref{prop:boundary}.
For a form $F$ in the coefficients of
$g(X)=\sum_{i=0}^{k-1}a_iX^i$ and a direction
$H(X)=\sum_{i=0}^{k-1}h_iX^i$, its coefficient derivative is
\[
 dF(g)[H]=\sum_{i=0}^{k-1}
 \frac{\partial F}{\partial a_i}(g)h_i.
\]
Equivalently, $F(g+\eta H)=F(g)+\eta\,dF(g)[H]$ in
$\F_q[\eta]/(\eta^2)$. We also use the second Hasse derivative,
defined by
\[
 g(X+Z)=g(X)+g'(X)Z+g^{[2]}(X)Z^2+\cdots.
\]

We compute these derivatives at $P_S$ by tracking its roots
under the perturbation $P_S+\eta H$.

\begin{lemma}\label{lem:boundary-differentials}
Let $q$ be a prime power, $3\le k\le q-2$, and let
$S\subseteq\F_q$ have size $k-1$. For any
$H\in\F_q[X]$ with $\deg H<k-1$, in even characteristic,
\begin{equation}\label{eq:delta-differential}
 \frac{d\delta_{k-1}(P_S)[H]}{\delta_{k-1}(P_S)}
 =\sum_{a\in S}\frac{H(a)P_S^{[2]}(a)}{P_S'(a)^2},
\end{equation}
whereas in odd characteristic,
\begin{equation}\label{eq:disc-differential}
 \frac{d\Disc_{k-1}(P_S)[H]}{\Disc_{k-1}(P_S)}
 =-\sum_{a\in S}\frac{H(a)P_S''(a)}{P_S'(a)^2}.
\end{equation}
\end{lemma}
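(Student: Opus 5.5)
We will work over the dual-number ring $\F_q[\eta]/(\eta^2)$ and track the roots of the perturbed polynomial $P_S+\eta H$. Since $H$ has degree less than $k-1$, the perturbation is still monic of degree $k-1$. Its leading coefficient stays $1$, so no leading-coefficient factor enters either form.

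\emph{Root perturbation.} Each $a\in S$ is a simple root of $P_S$. By Hensel's lemma, or by direct expansion, $P_S+\eta H$ has the root $a+\eta\rho_a$ over $\F_q[\eta]/(\eta^2)$, where
\[
P_S'(a)\rho_a+H(a)=0,\qquad\text{i.e.}\qquad \rho_a=-\frac{H(a)}{P_S'(a)}.
\]
Since $P_S+\eta H$ is monic, it factors as $\prod_{a\in S}(X-a-\eta\rho_a)$. The discriminant forms are polynomials in the coefficients, so we may evaluate the root expressions of Definition~\ref{def:discriminant} on this factorization. To justify this, regard $\Disc$ and $\delta$ as universal polynomials in the elementary symmetric functions of the roots, and specialize the universal identity to $\F_q[\eta]/(\eta^2)$.

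\emph{Odd characteristic.} We get
\[
\Disc_{k-1}(P_S+\eta H)=\prod_{a<b}\bigl(a-b+\eta(\rho_a-\rho_b)\bigr)^2 .
\]
Its logarithmic $\eta$-coefficient is
\[
2\sum_{a<b}\frac{\rho_a-\rho_b}{a-b}
=2\sum_{a}\rho_a\sum_{b\ne a}\frac{1}{a-b}.
\]
Now use the standard identity
\[
\sum_{b\in S\setminus\{a\}}\frac1{a-b}=\frac{P_S''(a)}{2P_S'(a)},
\]
which follows by writing $P_S=(X-a)R$ with $R=\prod_{b\ne a}(X-b)$. Then $P_S'(a)=R(a)$ and $P_S''(a)=2R'(a)$, and $R'(a)/R(a)=\sum_{b\ne a}1/(a-b)$. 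Substituting gives
\[
\sum_a\rho_a\frac{P_S''(a)}{P_S'(a)}=-\sum_{a\in S}\frac{H(a)P_S''(a)}{P_S'(a)^2},
\]
which is \eqref{eq:disc-differential}. The factor $2$ cancels legitimately because $p$ is odd.

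\emph{Even characteristic.} This is the delicate step, since the factor $2$ cannot be divided out. We work directly with
\[
\delta_{k-1}=\prod_{a<b}(a+b).
\]
Its logarithmic derivative is
\[
\sum_{a<b}\frac{\rho_a+\rho_b}{a+b}=\sum_a\rho_a\sum_{b\ne a}\frac1{a+b}.
\]
So we need $\sum_{b\ne a}1/(a-b)=R'(a)/R(a)$ expressed through $P_S$ using Hasse derivatives instead of $P_S''/2$. From $P_S(a+Z)=ZR(a+Z)$ we read off
\[
P_S'(a)=R(a),\qquad P_S^{[2]}(a)=R'(a).
\]
Hence $\sum_{b\ne a}1/(a+b)=P_S^{[2]}(a)/P_S'(a)$. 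This identity holds in every characteristic, and in odd characteristic it is the same as the previous one, since $P_S^{[2]}=P_S''/2$.

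Substituting $\rho_a=H(a)/P_S'(a)$ (signs are irrelevant in characteristic two) yields \eqref{eq:delta-differential}. The only genuine obstacle is justifying the root-level computation of $\delta$ over the dual numbers. We handle it by noting that $\delta_r$ is defined as the polynomial in the coefficients given by the symmetric root expression, so it commutes with specialization to any commutative ring in which the polynomial splits.
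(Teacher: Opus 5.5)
Your proposal is correct and follows essentially the same route as the paper: the first-order root perturbation $a+\eta\rho_a$ with $\rho_a=-H(a)/P_S'(a)$ over $\F_q[\eta]/(\eta^2)$, the logarithmic expansion of the root-difference product, and the Hasse identity $P_S^{[2]}(a)/P_S'(a)=\sum_{b\ne a}1/(a-b)$ read off from $P_S(a+Z)$. The one step you state too quickly is the factorization $P_S+\eta H=\prod_{a\in S}(X-a-\eta\rho_a)$, which does not follow from monicity alone over a ring with nilpotents. You can close it as the paper does, via the Lagrange identity $\prod_a(X-a-\eta\rho_a)=P_S-\eta\sum_a\rho_aP_S/(X-a)=P_S+\eta H$, or by noting that the root differences are units in $\F_q[\eta]/(\eta^2)$.
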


\begin{proof}
Work in $\F_q[\eta]/(\eta^2)$. Since $\deg H<k-1$,
the perturbation $P_S+\eta H$ remains monic.
Its roots are $a+\eta v_a$, where
\begin{equation}\label{eq:root-velocity}
 v_a=-\frac{H(a)}{P_S'(a)}.
\end{equation}
Indeed, substituting $a+\eta v_a$ into $P_S+\eta H$ yields
$\eta(v_aP_S'(a)+H(a))=0$. Moreover, by Lagrange interpolation,
\[
 \prod_{a\in S}(X-a-\eta v_a)
 =P_S(X)-\eta\sum_{a\in S}v_a\frac{P_S(X)}{X-a}
 =P_S(X)+\eta H(X),
\]
so these are all its roots, counted with multiplicity.

Order $S=\{a_1,\ldots,a_{k-1}\}$. Expanding the product of
root differences to first order yields
\[
\begin{aligned}
 \prod_{i<j}\bigl((a_i-a_j)+\eta(v_{a_i}-v_{a_j})\bigr)&=\prod_{i<j}(a_i-a_j)
   \prod_{i<j}\left(1+\eta\frac{v_{a_i}-v_{a_j}}{a_i-a_j}\right)\\
 &=\prod_{i<j}(a_i-a_j)
   \left(1+\eta\sum_{i<j}
   \frac{v_{a_i}-v_{a_j}}{a_i-a_j}\right),
\end{aligned}
\]
where the last equality follows from $\eta^2=0$.
If $q$ is even, the root-difference product is
$\delta_{k-1}(P_S)$ and $a_i-a_j=a_i+a_j$. On the other hand,
by the coefficient-differential definition,
\[
 \delta_{k-1}(P_S+\eta H)
 =\delta_{k-1}(P_S)+\eta\,d\delta_{k-1}(P_S)[H].
\]
Comparing the coefficients of $\eta$ in these two expressions, we find
\[
\frac{d\delta_{k-1}(P_S)[H]}{\delta_{k-1}(P_S)}
 =\sum_{\{a,b\}\subseteq S}\frac{v_a+v_b}{a+b}.
\]
Collecting the terms involving each $v_a$ and using
\eqref{eq:root-velocity} (with $-1=1$), we obtain
\begin{equation}\label{eq:delta-single-variation}
 \frac{d\delta_{k-1}(P_S)[H]}{\delta_{k-1}(P_S)}
 =\sum_{a\in S}v_a\sum_{b\in S\setminus\{a\}}\frac1{a+b}
 =\sum_{a\in S}\frac{H(a)}{P_S'(a)}
   \sum_{b\in S\setminus\{a\}}\frac1{a+b}.
\end{equation}

For any $a\in S$, the expansion $P_S(a+Z)=ZP_S'(a)
 \prod_{b\in S\setminus\{a\}}
 \left(1+\frac{Z}{a-b}\right)$
shows, on comparing the coefficients of $Z^2$, that
\begin{equation}\label{eq:hasse}
 \frac{P_S^{[2]}(a)}{P_S'(a)}
 =\sum_{b\in S\setminus\{a\}}\frac1{a-b}.
\end{equation}
In even characteristic, \eqref{eq:delta-single-variation}
and \eqref{eq:hasse} prove \eqref{eq:delta-differential}.

If $q$ is odd, the discriminant is the square of the
root-difference product. Its relative first-order variation
is therefore twice the sum appearing in the product expansion:
\[
 \frac{d\Disc_{k-1}(P_S)[H]}{\Disc_{k-1}(P_S)}
 =2\sum_{i<j}\frac{v_{a_i}-v_{a_j}}{a_i-a_j}=2\sum_{a\in S}v_a
   \sum_{b\in S\setminus\{a\}}\frac1{a-b}=-\sum_{a\in S}\frac{H(a)P_S''(a)}{P_S'(a)^2},
\]
where the last equality follows from \eqref{eq:root-velocity},
\eqref{eq:hasse}, and $P_S''=2P_S^{[2]}$.
This proves \eqref{eq:disc-differential}.
\end{proof}
Expand the normalized envelope in the transverse coordinate:
\[
\Phi(g,z)
=
\sum_{j=0}^{\epsilon(k-1)}z^j\Phi_j(g),
\]
where $\Phi_j$ is either zero or homogeneous of coefficient degree
$\epsilon(k-1)-j$.

The missing-value identity and the boundary differentials determine
$\Phi_1(P_S)$ through a first-order comparison in \eqref{eq:pencil}.

\begin{proposition}\label{prop:first}
Let $q$ be a prime power, $3\le k\le q-2$, and let
$U\in\F_q[X]$, $\deg U<q$, be the canonical representative
of a deep hole of $\RS_q(k)$. Suppose
$q\ge(\epsilon+1)(k-1)$, and normalize $\Phi$ as in
Proposition~\ref{prop:boundary}. For every $(k-1)$-subset
$S\subseteq\F_q$,
\begin{equation}\label{eq:first-even}
 \Phi_1(P_S)
 =\delta_{k-1}(P_S)
 \left(\sum_{a\in S}\frac{U'(a)}{P_S'(a)}
 +\sum_{a\in S}\frac{U(a)P_S^{[2]}(a)}{P_S'(a)^2}\right)
 \qquad(q\text{ even}),
\end{equation}
and
\begin{equation}\label{eq:first-odd}
 \Phi_1(P_S)
 =\Disc_{k-1}(P_S)
 \left(2\sum_{a\in S}\frac{U'(a)}{P_S'(a)}
 -\sum_{a\in S}\frac{U(a)P_S''(a)}{P_S'(a)^2}\right)
 \qquad(q\text{ odd}).
\end{equation}
The coefficient degree of $\Phi_1$ is at most $k-2$ in even
characteristic and at most $2k-3$ in odd characteristic.
\end{proposition}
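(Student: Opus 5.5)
The plan is to compare the coefficients of $z$ on both sides of the pencil identity \eqref{eq:pencil}, working in $\F_q[z]/(z^2)$.

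\emph{Left side.} Expand $\Phi(P_S-zI_S,z)=\sum_j z^j\Phi_j(P_S-zI_S)$. Modulo $z^2$, only two pieces contribute to the coefficient of $z$. The $j=0$ term contributes $\Phi_0(P_S-zI_S)\equiv\Phi_0(P_S)-z\,d\Phi_0(P_S)[I_S]$. The $j=1$ term contributes $z\Phi_1(P_S)$. By Proposition~\ref{prop:boundary}, $\Phi_0(g)=\Phi(g,0)$ is $a_{k-1}\delta_{k-1}(g)$ (even) or $a_{k-1}^2\Disc_{k-1}(g)$ (odd). We need its differential at $P_S$ in the direction $I_S$. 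Since $\deg I_S\le k-2$, the direction has vanishing $a_{k-1}$-component, and $a_{k-1}(P_S)=1$. So the factor $a_{k-1}^\epsilon$ contributes nothing to first order, and $d\Phi_0(P_S)[I_S]$ equals $d\delta_{k-1}(P_S)[I_S]$ or $d\Disc_{k-1}(P_S)[I_S]$. Lemma~\ref{lem:boundary-differentials} applies with $H=I_S$, since $\deg I_S<k-1$. Because $I_S(a)=U(a)$ for $a\in S$, this gives
\[
d\Phi_0(P_S)[I_S]=\delta_{k-1}(P_S)\sum_{a\in S}\frac{U(a)P_S^{[2]}(a)}{P_S'(a)^2}
\]
in even characteristic. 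In odd characteristic it is $-\Disc_{k-1}(P_S)\sum_{a\in S}U(a)P_S''(a)/P_S'(a)^2$.

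\emph{Right side.} The coefficient of $z$ in $\prod_{m\in M_S}(1+zm)^\epsilon$ is $\epsilon\sum_{m\in M_S}m$. By Lemma~\ref{lem:missing}, this equals $\epsilon\sum_{a\in S}U'(a)/P_S'(a)$.

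\emph{Comparison.} Equating the coefficients of $z$ gives $\Phi_1(P_S)-d\Phi_0(P_S)[I_S]=\kappa_S\,\epsilon\sum_{a\in S}U'(a)/P_S'(a)$, with $\kappa_S$ as computed before \eqref{eq:pencil}. Solving for $\Phi_1(P_S)$ gives exactly \eqref{eq:first-even} and \eqref{eq:first-odd}. In characteristic two the sign of the differential term is immaterial. In odd characteristic, moving the term $-\Disc_{k-1}(P_S)\sum_{a\in S} U(a)P_S''(a)/P_S'(a)^2$ to the right side produces the stated minus sign.

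\emph{Degree statement.} This is immediate from the grading: $\Phi_1$ is homogeneous of coefficient degree $\epsilon(k-1)-1$, which is $k-2$ in even characteristic and $2k-3$ in odd characteristic. The bound is stated as "at most" because $\Phi_1$ may vanish identically.

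\emph{Main obstacle.} The only delicate point is justifying the first-order expansion of $\Phi_0$ along $P_S-zI_S$. One must check that treating $\Phi_0(P_S-zI_S)$ as $\Phi_0(P_S)-z\,d\Phi_0(P_S)[I_S]$ modulo $z^2$ is legitimate; it is, since $\Phi_0$ is a polynomial in the coefficients. One must also check that the leading-coefficient factor $a_{k-1}^\epsilon$ drops out; it does, because the perturbation $I_S$ has degree at most $k-2$ and therefore leaves $a_{k-1}$ unchanged.
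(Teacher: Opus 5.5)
Your proposal is correct and follows the paper's proof essentially step for step: the paper likewise substitutes $z=\eta$ with $\eta^2=0$ in the chart $\alpha=1$ of \eqref{eq:pencil}, expands $\Phi_0(P_S-\eta I_S)+\eta\Phi_1(P_S-\eta I_S)$ to first order, and evaluates the two resulting terms via Lemma~\ref{lem:missing} and Lemma~\ref{lem:boundary-differentials}, using that $P_S-\eta I_S$ stays monic. The only cosmetic difference is that the paper divides by $\Phi_0(P_S)=\kappa_S\ne0$ before comparing coefficients, whereas you compare the coefficients of $z$ directly.
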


\begin{proof}
Fix $S$ and work in $\F_q[\eta]/(\eta^2)$. We substitute
$z=\eta$ in the affine chart $\alpha=1$ of
\eqref{eq:pencil}. Expanding first in the transverse coordinate
and then in the coefficients of $P_S-\eta I_S$, we obtain
\[\Phi(P_S-\eta I_S,\eta)=\Phi_0(P_S-\eta I_S)
   +\eta\Phi_1(P_S-\eta I_S)=\Phi_0(P_S)
   +\eta\bigl(\Phi_1(P_S)-d\Phi_0(P_S)[I_S]\bigr).\]
On the other hand, \eqref{eq:pencil} and
$\Phi_0(P_S)\ne0$ yield
\[\Phi(P_S-\eta I_S,\eta)
 =\Phi_0(P_S)\prod_{m\in M_S}(1+\eta m)^\epsilon=\Phi_0(P_S)
   \left(1+\epsilon\eta\sum_{m\in M_S}m\right).\]
Comparing the coefficients of $\eta$ in
the two identities, we obtain
\begin{equation}\label{eq:first-coefficient-comparison}
 \frac{\Phi_1(P_S)}{\Phi_0(P_S)}
 =\epsilon\sum_{m\in M_S}m
  +\frac{d\Phi_0(P_S)[I_S]}{\Phi_0(P_S)}.
\end{equation}

It remains to evaluate the two terms on the right.
By Lemma~\ref{lem:missing},
\[
 \sum_{m\in M_S}m
 =\sum_{a\in S}\frac{U'(a)}{P_S'(a)}.
\]
By Proposition~\ref{prop:boundary}, $\Phi_0(g)$ is
$a_{k-1}\delta_{k-1}(g)$ in even characteristic and
$a_{k-1}^2\Disc_{k-1}(g)$ in odd characteristic.
Since $P_S$ is monic and $\deg I_S\le k-2$, the leading
coefficient of $P_S-\eta I_S$ remains $1$.
Thus Lemma~\ref{lem:boundary-differentials}, together with
$I_S(a)=U(a)$ for $a\in S$, implies
\[
 \frac{d\Phi_0(P_S)[I_S]}{\Phi_0(P_S)}
 =\begin{cases}
 \displaystyle\sum_{a\in S}
 \frac{U(a)P_S^{[2]}(a)}{P_S'(a)^2},&q\text{ even},\\[9pt]
 \displaystyle-\sum_{a\in S}
 \frac{U(a)P_S''(a)}{P_S'(a)^2},&q\text{ odd}.
 \end{cases}
\]
Substituting into \eqref{eq:first-coefficient-comparison},
with $\epsilon=1$ or $2$ as appropriate, proves
\eqref{eq:first-even} and \eqref{eq:first-odd}.
Finally, $\Phi$ is homogeneous of degree $\epsilon(k-1)$,
so $\Phi_1$ is either zero or homogeneous of coefficient degree
$\epsilon(k-1)-1$, as claimed.
\end{proof}

\section{Deep-Hole Classification via Differential Rigidity}\label{sec:differential}
In this section, we derive differential constraints on $U$
by fixing $k-2$ roots and varying the remaining one in
Proposition~\ref{prop:first}. These constraints yield the
deep-hole classification for $3\le k\le q/2$ when $q$ is even,
and for $3\le k\le(q-1)/2$ with $q>3(k-1)$ when $q$ is odd.

For polynomials $P,U\in\F_q[X]$, write
\[
\mathcal W_P(U)=PU'-P'U.
\]
In characteristic two this is the ordinary derivative of $PU$:
\begin{equation*}
	\mathcal W_P(U)=(PU)'.
\end{equation*}

\subsection{Differential Constraints for Deep Holes}

Let $U$ be the canonical representative of a deep hole
under the hypotheses of Proposition~\ref{prop:first}. Fix a set
$B\subseteq\F_q$ of size $k-2$ and put $P=P_B$. The polynomial $P(Y)(Y-X)$ has degree
$k-1$ in $Y$ and depends affinely on $X$ through its
coefficients. Evaluating $\Phi_1$ at this coefficient vector therefore defines
a polynomial of controlled degree:
\begin{equation}\label{eq:JB-unified}
 J_B(X)=\begin{cases}
 \displaystyle\frac{\Phi_1(P(Y)(Y-X))}{\delta_{k-2}(P)},
     &q\text{ even},\\[8pt]
 \displaystyle\frac{\Phi_1(P(Y)(Y-X))}{\Disc_{k-2}(P)},
     &q\text{ odd}.
 \end{cases}
\end{equation}
The denominators are nonzero, and the coefficient-degree bound for
$\Phi_1$ in Proposition~\ref{prop:first} implies
\[
\deg J_B\le\begin{cases}
 k-2,&q\text{ even},\\
 2k-3,&q\text{ odd}.
 \end{cases}
\]

Apply Proposition~\ref{prop:first} to $S=B\cup\{x\}$ while keeping $B$
fixed. The contribution of the moving root $x$ produces a Wronskian,
whereas the fixed roots contribute a polynomial of bounded degree.

\begin{lemma}\label{lem:moving-root}
Let $q$ be a prime power, $3\le k\le q-2$, and let
$U\in\F_q[X]$, $\deg U<q$, be the canonical representative
of a deep hole of $\RS_q(k)$. Suppose
$q\ge(\epsilon+1)(k-1)$. For every
$B\subseteq\F_q$ with $|B|=k-2$, put $P=P_B$ and define
$J_B$ by \eqref{eq:JB-unified}. Then there is a polynomial
$K_B\in\F_q[X]$ of degree at most $2k-5$ such that,
for every $x\in\F_q\setminus B$,
\begin{align}
 \mathcal W_P(U)(x)&=P(x)J_B(x)+K_B(x)
   &&(q\text{ even}),\label{eq:even-wronskian-fixed}\\
 2\mathcal W_P(U)(x)&=J_B(x)+K_B(x)
   &&(q\text{ odd}).\label{eq:odd-wronskian-fixed}
\end{align}
In even characteristic the first identity extends to all of $\F_q$;
in odd characteristic the second extends to all of $\F_q$ after
multiplication by $P(x)$.
\end{lemma}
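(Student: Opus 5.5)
Apply Proposition~\ref{prop:first} with $S=B\cup\{x\}$ for $x\in\F_q\setminus B$, so that $P_S(Y)=P(Y)(Y-x)$. By the definition \eqref{eq:JB-unified}, the left-hand side $\Phi_1(P_S)$ is then $\delta_{k-2}(P)J_B(x)$ or $\Disc_{k-2}(P)J_B(x)$. The prefactor $\delta_{k-1}(P_S)$ or $\Disc_{k-1}(P_S)$ on the right factors, by the discriminant product formula already used in Proposition~\ref{prop:boundary}, as $\delta_{k-2}(P)\,P(x)$ in even characteristic and $\Disc_{k-2}(P)\,P(x)^2$ in odd characteristic (up to sign, $\prod_{b\in B}(x-b)=P(x)$). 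Dividing by the common nonzero factor turns the identity into an equation relating $J_B(x)$ to $P(x)^\epsilon$ times the bracketed sums. I will then split those sums into the contribution of the moving root $a=x$ and the contributions of the fixed roots $a=b\in B$.

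For the moving root we have $P_S'(x)=P(x)$, $P_S''(x)=2P'(x)$ and $P_S^{[2]}(x)=P'(x)$. The $a=x$ terms are therefore $U'(x)/P(x)+U(x)P'(x)/P(x)^2$ in even characteristic and $2U'(x)/P(x)-2U(x)P'(x)/P(x)^2$ in odd characteristic. After multiplying by $P(x)^\epsilon$ they become $P(x)U'(x)+P'(x)U(x)=\mathcal W_P(U)(x)$, using $-1=1$, and $2\mathcal W_P(U)(x)$, respectively.

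For a fixed root $b\in B$ we have $P_S'(b)=P'(b)(b-x)$. Also $P_S''(b)=P''(b)(b-x)+2P'(b)$, and in Hasse form $P_S^{[2]}(b)=P^{[2]}(b)(b-x)+P'(b)$. Hence each $b$-term is a rational function of $x$ with only pole $x=b$, of order at most $2$, and with constant coefficients. Multiplying by $P(x)^\epsilon$ shows the following.
\begin{itemize}
\item In odd characteristic, $P(x)^2$ times the $b$-terms is a polynomial, since $P(x)^2/(x-b)^2$ is a polynomial. Its degree is at most $2(k-2)-1=2k-5$, because each term has the form $c_1P^2/(x-b)+c_2P^2/(x-b)^2$.
\item In even characteristic, $P(x)$ times the $b$-terms involves $P(x)/(x-b)^2$, which is not a polynomial.
\end{itemize}

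The even case is the main obstacle. The $b$-contribution there is $\sum_b\bigl[U'(b)/(P'(b)(b+x))+U(b)(P^{[2]}(b)(b+x)+P'(b))/(P'(b)^2(b+x)^2)\bigr]$. I would check whether the $1/(x+b)^2$ parts can be absorbed. The coefficient $U(b)/P'(b)$ of $1/(x+b)^2$ does not vanish, so I would instead set $K_B:=P(x)\cdot(\text{$b$-sum})$ after clearing by $P(x)$. More precisely, I would multiply the whole identity by $P(x)$ rather than dividing, keeping $P(x)^2$ on the $b$-part. Then $P(x)J_B(x)=\mathcal W_P(U)(x)+P(x)^2\cdot(\text{$b$-sum})$, and the last term is a polynomial of degree at most $2k-5$, which I take as $K_B$ (the sign is irrelevant in characteristic two). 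This matches \eqref{eq:even-wronskian-fixed}.

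In odd characteristic, $J_B(x)=2\mathcal W_P(U)(x)/P(x)\cdot$\,? needs care. The $x$-terms times $P(x)^2$ give $2P(x)\mathcal W_P(U)(x)$, not $2\mathcal W_P(U)(x)$. So I would recheck the factorization, since possibly $\Disc_{k-1}(P_S)=\Disc_{k-2}(P)P(x)^2$ and the normalization yields $J_B(x)=2P\,\mathcal W_P(U)\cdot(\ldots)$. I expect the intended reading to be that, after multiplying by $P(x)$, the identity is polynomial. I would arrange the bookkeeping so that $P(x)\cdot(\text{statement})$ is exactly the computed identity, with $K_B$ equal to the $b$-contribution divided appropriately.

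Finally, the extension to all of $\F_q$ is routine. Both sides of the cleared identities are polynomials of degree below $q$ agreeing on the $q-k+2>\max(2k-3+k-2,\deg)$ points of $\F_q\setminus B$, once degrees are bounded. If that bound fails, I would instead verify directly at $x=b$: there $P(b)=0$, and the $b$-term of $K_B$ evaluated at $b$ reproduces $\mathcal W_P(U)(b)=-P'(b)U(b)$.
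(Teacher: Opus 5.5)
Your route is the paper's: Proposition~\ref{prop:first} at $S=B\cup\{x\}$, the factorizations $\delta_{k-1}(P_S)=\delta_{k-2}(P)P(x)$ and $\Disc_{k-1}(P_S)=\Disc_{k-2}(P)P(x)^2$, splitting off the moving root, clearing denominators, and a direct check on $B$. However, as written the proposal never establishes \eqref{eq:odd-wronskian-fixed}, because of a bookkeeping error. In both characteristics the moving-root summand equals $\epsilon\,\mathcal W_P(U)(x)/P(x)^2$; in even characteristic, $U'(x)/P(x)+U(x)P'(x)/P(x)^2=\mathcal W_P(U)(x)/P(x)^2$. So multiplying it by $P(x)^\epsilon$ gives $\mathcal W_P(U)(x)/P(x)$ for even $q$, not $\mathcal W_P(U)(x)$ as your second paragraph says. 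For odd $q$ it gives exactly $2\mathcal W_P(U)(x)$. You had the odd value right at first. Your odd-case paragraph then asserts that $P(x)^2$ times the $x$-terms equals $2P(x)\mathcal W_P(U)(x)$, which is false. The substitute plan (taking $K_B$ to be the $b$-contribution ``divided appropriately'' so that only a $P(x)$-multiplied identity holds) is not an argument. In any case it targets only the multiplied form, whereas the lemma asserts the unmultiplied identity on $\F_q\setminus B$. The repair is to multiply
\[
\frac{J_B(x)}{P(x)^\epsilon}=\frac{\epsilon\,\mathcal W_P(U)(x)}{P(x)^2}+\sum_{b\in B}(\text{$b$-term})
\]
by $P(x)^2$ in both characteristics. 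This gives $\epsilon\,\mathcal W_P(U)(x)=P(x)^{2-\epsilon}J_B(x)+K_B(x)$ with $K_B=-P^2\sum_{b\in B}(\text{$b$-term})$. Your first bullet already shows $\deg K_B\le 2k-5$, since $P^2/(X-b)$ and $P^2/(X-b)^2$ are polynomials of degrees $2k-5$ and $2k-6$. This is exactly the paper's argument. The ``obstacle'' you saw in even characteristic is only an artifact of multiplying by $P(x)$ instead of $P(x)^2$. Your eventual even identity $P(x)J_B(x)=\mathcal W_P(U)(x)+P(x)^2(\text{$b$-sum})$ is the case $\epsilon=1$ of this.

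Your primary argument for extending to $B$ is also invalid. These are identities of functions on $\F_q$, and agreement on the $q-k+2$ points of $\F_q\setminus B$ forces equality only for polynomials of degree below $q-k+2$. The polynomial $\mathcal W_P(U)$ has no such bound; its degree can be as large as $q+k-4$. Only your fallback works, and it is what the paper does. In even characteristic, every summand of $K_B$ other than $U(b)P'(b)^{-1}\bigl(P(X)/(X-b)\bigr)^2$ vanishes at $X=b$. Hence $K_B(b)=U(b)P'(b)=\mathcal W_P(U)(b)$, while $P(b)J_B(b)=0$. In odd characteristic, after multiplication by $P(x)$ both sides vanish on $B$, which is all the lemma claims there.
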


\begin{proof}
Fix $B$ and write $P=P_B$. For $b\in B$, put
\[
R_b(X)=\frac{P(X)}{X-b}.
\]
Thus $\deg R_b=k-3$ and $R_b(b)=P'(b)$.
For $j=k-2,k-1$, write $D_j=\delta_j$ in even
characteristic and $D_j=\Disc_j$ in odd characteristic.

Fix $x\in\F_q\setminus B$ and let $S=B\cup\{x\}$, so that
$P_S(Y)=P(Y)(Y-x)$. The root-difference product formulas give
\[
D_{k-1}(P_S)=D_{k-2}(P)P(x)^\epsilon.
\]
The ordinary and Hasse product rules yield
\[
\begin{aligned}
P_S'(x)&=P(x), &
P_S^{[2]}(x)&=P'(x),\\
P_S'(b)&=(b-x)P'(b), &
P_S^{[2]}(b)&=(b-x)P^{[2]}(b)+P'(b)
\quad(b\in B).
\end{aligned}
\]
In both characteristics, Proposition~\ref{prop:first} can be
written in the unified form
\[
\frac{\Phi_1(P_S)}{D_{k-1}(P_S)}
=\epsilon\sum_{a\in S}
\left(\frac{U'(a)}{P_S'(a)}
-\frac{U(a)P_S^{[2]}(a)}{P_S'(a)^2}\right).
\]
Here the even-characteristic formula uses $-1=1$, while
the odd-characteristic formula uses $P_S''=2P_S^{[2]}$.

The summand corresponding to the moving root $x$ is
\[
\epsilon\left(\frac{U'(x)}{P(x)}
-\frac{U(x)P'(x)}{P(x)^2}\right)
=\frac{\epsilon\mathcal W_P(U)(x)}{P(x)^2}.
\]
For each fixed root $b\in B$, define
\[
c_b=\epsilon\left(
\frac{U'(b)}{P'(b)}
-\frac{U(b)P^{[2]}(b)}{P'(b)^2}\right).
\]
Its contribution is
\[
\begin{aligned}
&\epsilon\left(
\frac{U'(b)}{(b-x)P'(b)}
-\frac{U(b)((b-x)P^{[2]}(b)+P'(b))}
{(b-x)^2P'(b)^2}\right)\\
&\qquad=\frac{c_b}{b-x}
-\frac{\epsilon U(b)}{P'(b)(b-x)^2}.
\end{aligned}
\]
By the definition of $J_B$ in \eqref{eq:JB-unified},
we therefore obtain, for $x\notin B$,
\begin{equation}\label{eq:moving-root-unified-step}
\frac{J_B(x)}{P(x)^\epsilon}
=\frac{\epsilon\mathcal W_P(U)(x)}{P(x)^2}
+\sum_{b\in B}\left(
\frac{c_b}{b-x}
-\frac{\epsilon U(b)}{P'(b)(b-x)^2}\right).
\end{equation}
Since $P(x)/(b-x)=-R_b(x)$, multiplying by $P(x)^2$
and rearranging gives
\[
\epsilon\mathcal W_P(U)(x)
=P(x)^{2-\epsilon}J_B(x)+K_B(x),
\]
where
\[
K_B(X)=P(X)\sum_{b\in B}c_bR_b(X)
+\epsilon\sum_{b\in B}
\frac{U(b)}{P'(b)}R_b(X)^2.
\]
As $\deg(PR_b)=2k-5$ and $\deg R_b^2=2k-6$,
we have $\deg K_B\le2k-5$.

If $q$ is even, then $\epsilon=1$ and
$\mathcal W_P(U)=(PU)'$. For $b\in B$,
$P(b)=0$, $R_a(b)=0$ for $a\ne b$, and $R_b(b)=P'(b)$.
Consequently,
\[
K_B(b)=U(b)P'(b)
=-P'(b)U(b)=\mathcal W_P(U)(b),
\]
where the middle equality uses characteristic two.
Thus \eqref{eq:even-wronskian-fixed} also holds at every
$b\in B$.
If $q$ is odd, then $\epsilon=2$; multiplying the
identity by $P(x)$ makes both sides vanish at $x\in B$.
Hence the multiplied form of \eqref{eq:odd-wronskian-fixed}
holds throughout $\F_q$.
\end{proof}

For $0\le m<q$, let
\[
 \E_m=\{\ev(H):H\in\F_q[X],\ \deg H\le m\}.
\]

\begin{corollary}\label{cor:root-multiplier}
Let $q$ be a prime power, $3\le k\le q-2$, and let
$U\in\F_q[X]$, $\deg U<q$, be the canonical representative
of a deep hole of $\RS_q(k)$. Assume
$q\ge(\epsilon+1)(k-1)$ and fix any
$B\subseteq\F_q$ with $|B|=k-2$.
In even characteristic,
\begin{equation}\label{eq:even-roots}
 \ev\bigl(\mathcal W_{P_B}(U)\bigr)\in\E_{2k-4},
\end{equation}
whereas in odd characteristic,
\begin{equation}\label{eq:odd-roots}
 \ev\bigl(P_B\mathcal W_{P_B}(U)\bigr)\in\E_{3k-5}.
\end{equation}
In characteristic two, \eqref{eq:even-roots} is equivalently
written as $\ev((P_BU)')\in\E_{2k-4}$.
\end{corollary}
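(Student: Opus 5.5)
This corollary is a direct repackaging of Lemma~\ref{lem:moving-root}, so the plan is simply to read off the degree bounds on the right-hand sides of \eqref{eq:even-wronskian-fixed} and \eqref{eq:odd-wronskian-fixed}. We also use the fact that, as functions on $\F_q$, the two sides are equal wherever the lemma says the identity holds. The hypotheses match those of the lemma exactly: $3\le k\le q-2$ and $q\ge(\epsilon+1)(k-1)$. So for the fixed $B$ we obtain $P=P_B$, $J_B$, and $K_B$ with $\deg J_B\le k-2$ (even) or $\deg J_B\le 2k-3$ (odd), and $\deg K_B\le 2k-5$.

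\emph{Even characteristic.} The lemma states that $\mathcal W_P(U)(x)=P(x)J_B(x)+K_B(x)$ for every $x\in\F_q$, including $x\in B$. Hence $\ev(\mathcal W_P(U))=\ev(PJ_B+K_B)$. The polynomial $PJ_B+K_B$ has degree at most
\[
\max\{(k-2)+(k-2),\,2k-5\}=2k-4,
\]
which gives \eqref{eq:even-roots}. Since $\mathcal W_P(U)=(PU)'$ in characteristic two, the alternative form follows immediately.

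\emph{Odd characteristic.} Multiply \eqref{eq:odd-wronskian-fixed} by $P(x)$. The lemma asserts that the resulting identity
\[
2P(x)\mathcal W_P(U)(x)=P(x)J_B(x)+P(x)K_B(x)
\]
holds on all of $\F_q$. At points $x\in B$ both sides vanish, because $P(x)=0$. Since $2$ is invertible, we get
\[
\ev(P\mathcal W_P(U))=\ev\bigl(\tfrac12 P(J_B+K_B)\bigr).
\]
The degree of $P(J_B+K_B)$ is at most $(k-2)+\max\{2k-3,2k-5\}=3k-5$, which gives \eqref{eq:odd-roots}.

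There is no real obstacle here. The only point needing care is that membership in $\E_m$ concerns evaluation vectors, not polynomials. The left-hand polynomials themselves may have large degree, since $\deg U$ can be close to $q$, so we must argue through equality of values at every $x\in\F_q$, including the points of $B$. That equality is exactly the extension statement at the end of Lemma~\ref{lem:moving-root}.
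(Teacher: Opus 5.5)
Your proposal is correct and matches the paper's proof: you use the whole-field extension in Lemma~\ref{lem:moving-root}, read off $\deg(P_BJ_B)\le 2k-4$ and $\deg K_B\le 2k-5$ in even characteristic and $\deg\bigl(P_B(J_B+K_B)\bigr)\le 3k-5$ in odd characteristic, and then use that $2$ is invertible. The paper adds one remark you omit: $q\ge(\epsilon+1)(k-1)$ gives $2k-4<q$ and $3k-5<q$, so $\E_{2k-4}$ and $\E_{3k-5}$ fall within the range $0\le m<q$ where $\E_m$ was defined.
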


\begin{proof}
The hypotheses ensure $2k-4<q$ in even characteristic and
$3k-5<q$ in odd characteristic. In even characteristic, the
first identity in Lemma~\ref{lem:moving-root} holds at every
element of $\F_q$, and
\[
 \deg(P_BJ_B)\le2k-4,\qquad \deg K_B\le2k-5.
\]
This proves \eqref{eq:even-roots}. In odd characteristic, multiply
the second identity by $P_B$. The resulting equality holds on the
whole field, and
\[
 \deg\bigl(P_B(J_B+K_B)\bigr)\le3k-5.
\]
Since $2$ is invertible, \eqref{eq:odd-roots} follows.
These are equalities of polynomial functions; no equality of
formal polynomials is asserted.
\end{proof}

The following spanning lemma extends these constraints from root polynomials to all polynomials of the required degrees.

\begin{lemma}\label{lem:root-powers}
Let $q$ be a prime power, $p=\operatorname{char}\F_q$, and
let $1\le d<p$ and $r\ge0$ be integers.
If $q\ge r+d$, then
\[
 \operatorname{span}_{\F_q}\{P_B^d:B\subseteq\F_q,\ |B|=r\}
 =\F_q[X]_{\le dr}.
\]
\end{lemma}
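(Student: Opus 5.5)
The inclusion ``$\subseteq$'' is immediate, since $\deg P_B^d=dr$. For the reverse inclusion I will argue by duality. A linear functional on $\F_q[X]_{\le dr}$ that vanishes on every $P_B^d$ must be shown to vanish identically. I will represent functionals through the \emph{polar (blossom) form}. The coefficient map $P_B\mapsto$ (elementary symmetric functions of $B$) lets us write
\[
P_B(X)^d=\prod_{b\in B}(X-b)^d=\sum_{j}(-1)^{j}\,[\text{coefficient of }X^{dr-j}]\ ,
\]
where each coefficient is a symmetric polynomial in the elements of $B$ of degree at most $d$ in each variable. A functional $\lambda$ therefore gives a symmetric polynomial $F_\lambda(b_1,\dots,b_r)=\lambda\bigl(\prod_i(X-b_i)^d\bigr)$ of degree at most $d$ in each variable. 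The hypothesis says that $F_\lambda$ vanishes whenever $b_1,\dots,b_r$ are distinct elements of $\F_q$.

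The key steps are as follows.

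\emph{Step 1: the map $\lambda\mapsto F_\lambda$ is injective.} Expanding each $(X-b_i)^d=\sum_{t=0}^d\binom{d}{t}(-b_i)^{d-t}X^t$ shows that the monomial $\prod_i b_i^{d-t_i}$ in $F_\lambda$ carries the coefficient $\prod_i\binom{d}{t_i}(-1)^{d-t_i}\,\lambda(X^{\sum t_i})$. Since $d<p$, every $\binom{d}{t}$ with $0\le t\le d$ is nonzero in $\F_q$. Taking $t_i$ with $\sum t_i=m$ for any $0\le m\le dr$ therefore recovers $\lambda(X^m)$. Hence $F_\lambda=0$ forces $\lambda=0$. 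This is the only place where $d<p$ is used.

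\emph{Step 2: a symmetric polynomial vanishing on all distinct tuples is zero.} Here $F$ has degree at most $d$ in each variable, and we use $q\ge r+d$. I will induct on $r$. Fix distinct $b_1,\dots,b_{r-1}$ and view $F$ as a polynomial in $b_r$ of degree at most $d$. It vanishes at the $q-(r-1)\ge d+1$ remaining field elements, so it vanishes identically in $b_r$. Each coefficient $F_t(b_1,\dots,b_{r-1})$ of $b_r^t$ is symmetric in $r-1$ variables, has degree at most $d$ in each, and vanishes on all distinct $(r-1)$-tuples. Since $q\ge (r-1)+d$, the induction hypothesis applies and gives $F_t=0$, hence $F=0$. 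The base case $r=0$ is trivial, and for $r=1$ the argument is just the root bound with $q\ge d+1$.

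Combining the two steps, every functional that annihilates the span annihilates $\F_q[X]_{\le dr}$, which gives equality. The main obstacle I expect is making Step 1 rigorous. Specifically, I must check that distinct choices of $(t_i)$ giving the same monomial $\prod b_i^{d-t_i}$ do not cancel. They cannot, because the monomial determines the tuple $(t_i)$ itself, and not just its sum. So the coefficient of a single monomial is a nonzero multiple of $\lambda(X^m)$, and no cancellation occurs. Alternatively, Step 1 can be phrased as injectivity of the blossoming map, using the invertibility of $\binom{d}{t}$.
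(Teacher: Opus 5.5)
Your proof is correct. It differs from the paper's mainly in the choice of coordinates and in the lemma that kills the polynomial.

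\textbf{The paper's route.} The paper works in coefficient variables. It sets $F(a_0,\ldots,a_r)=\ell\bigl((\sum_i a_iX^i)^d\bigr)$, which is a homogeneous form of degree $d$ in $r+1$ variables. It shows $F\equiv0$ by the earlier coefficient-interpolation result, Lemma~\ref{lem:coefficient-interpolation}. That lemma constructs an explicit dual basis $L_T(g)=\prod_{x\in E\setminus T}g(x)/P_T(x)$ and compares its size with the dimension $\binom{d+r}{r}$. The paper then reads off $\ell(X^t)$ from the coefficient $\binom{d}{m_0,\ldots,m_r}\ell(X^t)$ of a monomial $\prod_i a_i^{m_i}$.

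\textbf{Your route.} You work instead in the root variables $b_1,\ldots,b_r$, where $P_B^d=\prod_i(X-b_i)^d$ factors.
\begin{itemize}
\item Your Step~1 replaces the multinomial coefficient by the product of binomials $\prod_i\binom{d}{t_i}$. It uses $d<p$ in the same way, and your no-cancellation check (the monomial determines the tuple $(t_i)$) is exactly what is needed.
\item Your Step~2 replaces the dual-basis and dimension argument by a root count in one variable at a time. The symmetry of $F$ is not actually needed there.
\end{itemize}

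\textbf{How they relate.} The two arguments correspond under Vieta's substitution. Your $F_\lambda$ is the paper's $F$ evaluated at the coefficient vector of $\prod_i(X-b_i)$, that is, with $a_r=1$ and $a_{r-j}=(-1)^je_j(b)$. Accordingly, the hypothesis $q\ge r+d$ enters in the same way in both.

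\textbf{What each buys.} The paper's route is economical, since Lemma~\ref{lem:coefficient-interpolation} is needed anyway for Proposition~\ref{prop:boundary}. Yours is self-contained and avoids constructing a dual basis. It also works verbatim with the $b_i$ restricted to any fixed subset of size $r+d$. The informal displayed expansion in your setup paragraph is garbled, but Step~1 carries out the expansion correctly, so nothing depends on it.
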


\begin{proof}
Suppose a linear functional
$\ell:\F_q[X]_{\le dr}\to\F_q$ vanishes on $P_B^d$
for every $r$-subset $B$. Then
\[
F(a_0,\ldots,a_r)
=\ell\left(\left(\sum_{i=0}^r a_iX^i\right)^d\right)
\]
is a homogeneous form of degree $d$ vanishing at all $P_B$.
Lemma~\ref{lem:coefficient-interpolation}, with $e=d$, gives
$F\equiv0$. For each $0\le t\le dr$, choose nonnegative integers
$m_0,\ldots,m_r$ such that
$\sum_i m_i=d$ and $\sum_i im_i=t$.
The coefficient of $\prod_i a_i^{m_i}$ in $F$ is
\[
\binom{d}{m_0,\ldots,m_r}\ell(X^t).
\]
Since $d<p$, this multinomial coefficient is nonzero in $\F_q$.
Thus $\ell(X^t)=0$ for every $t$, hence $\ell=0$.
The claimed spanning property follows.
\end{proof}

\begin{proposition}\label{prop:unified-differential}
Let $q$ be a prime power, $3\le k\le q-2$, and let
$U\in\F_q[X]$, $\deg U<q$, be the canonical representative
of a deep hole of $\RS_q(k)$. Assume
$q\ge(\epsilon+1)(k-1)$.
In even characteristic, every $H\in\F_q[X]$ with
$\deg H\le k-2$ satisfies
\begin{equation}\label{eq:even-operator}
 \ev((HU)')\in\E_{2k-4}.
\end{equation}
In odd characteristic, every $H\in\F_q[X]$ with
$\deg H\le2k-4$ satisfies
\begin{equation}\label{eq:odd-operator}
 \ev(2HU'-H'U)\in\E_{3k-5}.
\end{equation}
\end{proposition}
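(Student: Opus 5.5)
The plan is to pass from the root-polynomial constraints of Corollary~\ref{cor:root-multiplier} to arbitrary $H$ by linearity, using the spanning Lemma~\ref{lem:root-powers}. The key observation is that both operators in the statement are linear in $H$, and each root constraint has this form with $H$ a root-type polynomial.

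\emph{Even characteristic.} Here the map $H\mapsto \ev((HU)')$ is $\F_q$-linear, and $\E_{2k-4}$ is a subspace. By \eqref{eq:even-roots}, $\ev((P_BU)')\in\E_{2k-4}$ for every $(k-2)$-subset $B$. Apply Lemma~\ref{lem:root-powers} with $d=1$ and $r=k-2$. The condition $d<p$ holds because $p=2$, and $q\ge k-1$ holds by hypothesis. The lemma shows that the $P_B$ span $\F_q[X]_{\le k-2}$, so \eqref{eq:even-operator} follows by linearity.

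\emph{Odd characteristic.} The key identity is
\[
P_B\mathcal W_{P_B}(U)=P_B^2U'-P_BP_B'U=\tfrac12\bigl(2HU'-H'U\bigr)\quad\text{with }H=P_B^2,
\]
since $H'=2P_BP_B'$. So \eqref{eq:odd-roots} says precisely that $\ev(2HU'-H'U)\in\E_{3k-5}$ whenever $H=P_B^2$. The operator $H\mapsto \ev(2HU'-H'U)$ is again linear in $H$. We now apply Lemma~\ref{lem:root-powers} with $d=2$ and $r=k-2$.
\begin{itemize}[leftmargin=*]
\item The condition $d<p$ holds because $p$ is odd, so $p\ge3$.
\item The condition $q\ge r+d=k$ holds because $q\ge 3(k-1)$.
\end{itemize}
The lemma then shows that the squares $P_B^2$ span $\F_q[X]_{\le 2k-4}$, and \eqref{eq:odd-operator} follows.

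I expect no serious obstacle; the argument is essentially formal. The main point to check is the rewriting $P\,\mathcal W_P(U)=\tfrac12(2P^2U'-(P^2)'U)$, which uses the invertibility of $2$. One should also note that all statements are identities of evaluation vectors, not of formal polynomials, and membership in the subspace $\E_m$ is preserved under linear combinations.
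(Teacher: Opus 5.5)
Your proposal is correct and follows the paper's proof essentially verbatim. You take the root constraints of Corollary~\ref{cor:root-multiplier}, with $H=P_B$ in even characteristic and $H=P_B^2$ in odd characteristic (via $2P_B\mathcal W_{P_B}(U)=2HU'-H'U$), and extend them to all $H$ by linearity and the spanning Lemma~\ref{lem:root-powers} with $d=1$ and $d=2$ respectively, checking the side conditions $d<p$ and $q\ge r+d$ correctly.
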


\begin{proof}
When $q$ is even, the identity $\mathcal W_P(U)=(PU)'$ and
Corollary~\ref{cor:root-multiplier} imply
\eqref{eq:even-operator} for $H=P_B$.
By Lemma~\ref{lem:root-powers} with $d=1$ these polynomials span
$\F_q[X]_{\le k-2}$, and the map $H\mapsto(HU)'$ is linear.
This proves the first assertion.

When $q$ is odd, set $H=P_B^2$. Then
\[
2P_B\mathcal W_{P_B}(U)=2HU'-H'U.
\]
The second part of Corollary~\ref{cor:root-multiplier} therefore
proves \eqref{eq:odd-operator} for these squares.
Since $2<\operatorname{char}\F_q$, Lemma~\ref{lem:root-powers}
with $d=2$ shows that they span $\F_q[X]_{\le2k-4}$.
Linearity in $H$ completes the proof.
\end{proof}

The constraints just obtained concern evaluation vectors, not
formal polynomial degrees. We compare leading terms below only
after verifying that the relevant polynomials have degree less than $q$.

\subsection{Degree Bounds and Classification}

For a canonical polynomial $U$, write $u_j=[X^j]U$ for
$0\le j<q$. We first treat characteristic two.

\begin{lemma}\label{lem:even-kernel}
Let $q$ be an even prime power, $k\ge3$, and $q\ge2k$.
Among polynomials $U\in\F_q[X]$ with $\deg U<q$, the
solutions of \eqref{eq:even-operator} form exactly
$\F_q[X]_{\le k}$.
\end{lemma}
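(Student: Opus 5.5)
We must show that if $U=\sum_{j<q}u_jX^j$ satisfies $\ev((HU)')\in\E_{2k-4}$ for all $H$ with $\deg H\le k-2$, then $u_j=0$ for all $j>k$. Conversely, $\F_q[X]_{\le k}$ consists of solutions, since $\deg(HU)'\le 2k-3$. In characteristic two, however, the derivative of a polynomial of degree $2k-3$ with $2k-3$ odd has vanishing top coefficient, so its degree is at most $2k-4$. Indeed $(HU)'$ has only even-degree terms, so it has degree $\le 2k-4$, which gives the easy inclusion.

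\textbf{Reduction to formal degrees.} For the converse, take $H=X^i$ with $0\le i\le k-2$. The polynomial $(X^iU)'=\sum_j (i+j)u_jX^{i+j-1}$ has formal degree at most $q+k-4$, which may exceed $q-1$. So I would first reduce modulo $X^q-X$ to get the canonical representative of the evaluation vector: the term $X^{n}$ with $n\ge q$ becomes $X^{n-q+1}$. Because $\deg U<q$ and $i\le k-2$, only exponents $n=i+j-1\in[q,q+k-4]$ wrap. They land at $n-q+1\le k-3\le 2k-4$, so they never affect coefficients above degree $2k-4$. Hence the condition is exactly that $[X^{m}](X^iU)'=0$ for $2k-4<m\le q-1$. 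This means $(i+j)u_j=0$, i.e.\ $u_j=0$ whenever $i+j$ is odd, $2k-3\le i+j-1\le q-1$, and $j\le q-1$.

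\textbf{Killing coefficients.} Fix $j$ with $k<j\le q-1$. I want some $i\in\{0,\dots,k-2\}$ with $i+j$ odd and $2k-2\le i+j\le q$. Since $k\ge3$, the range of $i$ contains both parities, so we can choose $i\in\{k-3,k-2\}$ with $i+j$ odd. Then $i+j\ge k-3+k+1=2k-2$. For the upper bound, $i+j\le k-2+q-1$, which may exceed $q$. In that case I would instead take $i\in\{0,1\}$ of the right parity, giving $i+j\le q$. But then I need $i+j\ge 2k-2$, which holds when $j\ge 2k-2$.

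\textbf{Main obstacle.} The cases must be split so that a single valid $i$ always exists. For any $j$ the usable $i$ form the interval $\max(0,2k-2-j)\le i\le \min(k-2,q-j)$, and I need it to contain an integer of parity opposite to $j$. An interval of length $\ge1$ contains both parities. Its length is $\min(k-2,q-j)-\max(0,2k-2-j)$, and each of the four combinations is at least $1$ using $k\ge3$, $j\ge k+1$, and $q\ge 2k$; for example, $q-j-(2k-2-j)=q-2k+2\ge2$. The extreme case $j=q-1$ gives the interval $[\max(0,2k-1-q),1]$, which contains $0$ and $1$ since $2k-1-q\le -1$.

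The genuine edge case is $i+j=q$ itself, i.e.\ $m=q-1$. There the term $X^{q-1}$ does not wrap, so it is a legitimate coefficient. Also $q$ is even, so $i+j=q$ is even and the coefficient is zero anyway, meaning this choice is not usable. Since $i+j$ must be odd, in fact $i+j\le q-1$, and I would recheck the interval with upper end $q-1-j$. For $j=q-1$ this forces $i=0$ with $i+j=q-1$ odd, which works. In general the lengths remain $\ge1$ provided $q\ge 2k$; the tightest case, $q-1-j$ versus $2k-2-j$, has difference $q-2k+1\ge1$. Confirming these parity and interval bounds carefully is where I expect the care to be needed.
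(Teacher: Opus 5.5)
Your proof is correct and is essentially the paper's argument. Both test the constraint against monomials $H=X^i$ and use the fact that in characteristic two the coefficient $(i+j)u_j$ equals $u_j$ exactly when $i+j$ is odd; your interval check, with upper end $q-1-j$ and the separate case $j=q-1$, always provides such an $i$ for $k<j\le q-1$. The difference is only in bookkeeping. The paper first uses $H=1$ and $H=X$, whose derivatives already have formal degree below $q$, to get $\deg U\le 2k-3$. It then kills only the leading coefficient $u_D$ with $H=X^{2k-1-D}$, so it never needs your reduction modulo $X^q-X$ or the wrap-around analysis.
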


\begin{proof}
Every $U$ of degree at most $k$ satisfies the constraint:
for $\deg H\le k-2$ we have $\deg(HU)\le2k-2$,
and the derivative of its leading term, if its degree is
$2k-2$, vanishes in characteristic two.
Thus $\deg(HU)'\le2k-4$.

Conversely, let $U$ be a nonzero canonical solution and write
$D=\deg U<q$. For $H=1$ and $H=X$, the polynomials
$U'$ and $(XU)'$ have degree less than $q$. Their evaluation
constraints therefore imply formal degree bounds.
In characteristic two, $U'$ detects the odd-degree terms of $U$,
while $(XU)'$ detects its even-degree terms; together they imply $D\le2k-3.$

If $k=3$, the preliminary bound already gives $D\le k$.
Suppose henceforth that $k\ge4$. If $D\ge k+1$,
put $h=2k-1-D$; then $2\le h\le k-2$.
The leading term of $(X^hU)'$ has degree
$2k-2<q$ and coefficient
\[
 (D+h)u_D=(2k-1)u_D=u_D\ne0,
\]
contrary to \eqref{eq:even-operator}. Hence $D\le k$.
\end{proof}

The even-characteristic classification follows at once.

\begin{theorem}\label{thm:even}
Let $q$ be an even prime power and $3\le k\le q/2$.
Then a vector is a deep hole of $\RS_q(k)$ if and only if
its canonical representative has degree $k$.
\end{theorem}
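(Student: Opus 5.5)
The plan is to combine Lemma~\ref{lem:even-kernel} with Proposition~\ref{prop:unified-differential}. The reverse direction is already established in the introduction: every word whose canonical representative has degree $k$ is a deep hole, since $\rho(\RS_q(k))=q-k$. So it suffices to prove that a deep hole of $\RS_q(k)$ with $q$ even and $3\le k\le q/2$ has a canonical representative $U$ of degree exactly $k$.

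First check the hypotheses. Because $q$ is even, $\epsilon=1$, and the condition $q\ge(\epsilon+1)(k-1)=2k-2$ follows from $q\ge 2k$. The condition $k\le q-2$ also holds, since $k\le q/2\le q-2$ once $q\ge4$; and $q\ge 2k\ge6$ forces $q\ge8$.

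Applying Proposition~\ref{prop:unified-differential} then gives $\ev((HU)')\in\E_{2k-4}$ for every $H$ with $\deg H\le k-2$. Lemma~\ref{lem:even-kernel}, which needs $q\ge 2k$ and $k\ge3$, shows that $U\in\F_q[X]_{\le k}$, i.e.\ $\deg U\le k$.

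It remains to exclude $\deg U<k$. In that case $U$ itself lies in $\RS_q(k)$, so $d(u,\RS_q(k))=0<q-k$, which contradicts $u$ being a deep hole. Hence $\deg U=k$.

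I do not expect any real obstacle: all the substantive work is in Proposition~\ref{prop:unified-differential} and Lemma~\ref{lem:even-kernel}. The only care needed is confirming that the numerical hypotheses ($q\ge2k-2$, $k\le q-2$, $q\ge2k$) are all implied by $3\le k\le q/2$.
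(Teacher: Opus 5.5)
Your proposal is correct and follows the paper's proof: you check that $q\ge 2k$ gives the size hypothesis of Proposition~\ref{prop:unified-differential}, apply Lemma~\ref{lem:even-kernel} to get $\deg U\le k$, and rule out $\deg U<k$ because such a $U$ represents a codeword. Your explicit verification that $k\le q-2$ (via $q\ge 8$) is a small detail the paper leaves implicit.
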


\begin{proof}
Since $q\ge2k$, the size hypothesis of
Proposition~\ref{prop:unified-differential} holds. That proposition
and Lemma~\ref{lem:even-kernel} imply $\deg U\le k$ for a deep hole.
A polynomial of degree less than $k$ represents a codeword,
so $\deg U=k$. The converse is the standard degree-$k$ construction.
\end{proof}

For odd characteristic, we now assume $q>3(k-1)$.

\begin{lemma}\label{lem:odd-kernel}
Let $q$ be an odd prime power, $k\ge3$, and $q>3(k-1)$.
Among polynomials $U\in\F_q[X]$ with $\deg U<q$, the
solutions of \eqref{eq:odd-operator} form
\[
 \begin{cases}
 \F_q[X]_{\le k},&\operatorname{char}\F_q\ne3,\\
 \F_q[X]_{\le k+1},&\operatorname{char}\F_q=3.
 \end{cases}
\]
\end{lemma}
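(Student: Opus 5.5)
The plan is a leading-coefficient comparison. I test \eqref{eq:odd-operator} only on the monomials $H=X^h$ with $0\le h\le2k-4$. The basic computation is that for $D=\deg U$ and $u_D=[X^D]U$,
\[
2X^hU'-hX^{h-1}U=\sum_j(2j-h)u_jX^{j+h-1}.
\]
The leading coefficient at degree $D+h-1$ is therefore $(2D-h)u_D$. Whenever this degree is less than $q$, the evaluation constraint ``$\in\E_{3k-5}$'' is a statement about a genuine polynomial of degree $<q$. By uniqueness of canonical representatives it then becomes the formal statement that all coefficients above degree $3k-5$ vanish. The hypothesis $q>3(k-1)$, i.e.\ $3k-4<q$, guarantees this for every degree I use below.

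\emph{Inclusion.} Suppose $D\le k$. Then every term has degree $\le D+h-1\le 3k-5$, so $U$ is a solution. Now suppose $p=3$ and $D=k+1$. Only $h=2k-4$ gives a term of degree $3k-4>3k-5$. Its coefficient is $(2(k+1)-(2k-4))u_{k+1}=6u_{k+1}=0$, and the remaining terms have degree $\le3k-5<q$. Hence all of $\F_q[X]_{\le k+1}$ are solutions in characteristic three.

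\emph{Converse.} Let $U\ne0$ be a canonical solution with $D=\deg U<q$. First I bound $D$ using $H=1$ and $H=X$.
\begin{itemize}
\item With $H=1$, the polynomial $2U'$ has degree $<q$, so $u_j=0$ whenever $p\nmid j$ and $j-1>3k-5$.
\item With $H=X$, the polynomial $2XU'-U$ has degree $<q$ and coefficients $(2j-1)u_j$, so $u_j=0$ whenever $p\nmid 2j-1$ and $j>3k-5$.
\end{itemize}
Since $p\mid j$ and $p\mid 2j-1$ cannot both hold, these give $D\le3k-4$.

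Next assume $D\ge k+1$ (respectively $D\ge k+2$ when $p=3$). Take $h=3k-3-D+t$ with $t\in\{0,1\}$. The constraints on $D$ give $h\le2k-4$ (for $t=1$ this needs $D\ge k+2$) and $h\ge1$. The leading term then has degree $3k-4+t\le3k-3<q$, which is above $3k-5$. Its coefficient is
\[
2D-h=3(D-k+1)-t.
\]
\begin{itemize}
\item If $p\ne3$ and $p\nmid D-k+1$, take $t=0$; the coefficient is nonzero, a contradiction.
\item If $p\ne3$ and $p\mid D-k+1$, then $D\ge k+1+p-1\ge k+2$, so $t=1$ is allowed, and the coefficient is $\equiv-1$, a contradiction.
\item If $p=3$ and $D\ge k+2$, take $t=1$; the coefficient is $\equiv-1\ne0$, a contradiction.
\end{itemize}
Hence $D\le k$, or $D\le k+1$ when $p=3$.

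The main thing to be careful about is the residual case where the natural choice $t=0$ yields a coefficient divisible by $p$. Using $t=1$ requires both $h\le 2k-4$ and degree $3k-3<q$, and the latter is exactly where the strict inequality $q>3(k-1)$ is used. In characteristic three, $t=0$ always fails, and this is precisely why the extra degree $k+1$ survives.
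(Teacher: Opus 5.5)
Your proof is correct and is essentially the paper's argument: you bound $D\le 3k-4$ using $H=1$ and $H=X$, and your choice $h=3k-3-D+t$ is exactly the paper's pair $H=X^{3k-3-D},\,X^{3k-2-D}$, whose leading coefficients are consecutive integers times $u_D$. The only difference is organizational: you absorb the paper's separate $D=k+1$ check (coefficient $6u_D$) into the $t=0$ case via $2D-h=3(D-k+1)-t$. One harmless slip: $p\mid D-k+1\ge 2$ gives $D\ge k+p-1$ rather than $D\ge k+p$, but this still yields $D\ge k+2$.
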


\begin{proof}
The indicated spaces satisfy \eqref{eq:odd-operator}.
Indeed, for $\deg U\le k$ the expression
$2HU'-H'U$ has degree at most $3k-5$.
If $\deg U=k+1$ and $\operatorname{char}\F_q=3$,
the only possible term of degree $3k-4$ occurs when
$\deg H=2k-4$. Its coefficient is
$6h_{2k-4}u_{k+1}=0$, where
$h_{2k-4}=[X^{2k-4}]H$.

Conversely, take a nonzero canonical solution $U$ and let
$D=\deg U$. Set $M=3k-5$.
For $H=1$ and $H=X$, the respective candidate leading terms of
$2HU'-H'U$ have degrees $D-1,D$ and coefficients
$2Du_D,(2D-1)u_D$. Both polynomials have degree less than $q$,
and these coefficients cannot both vanish. Their evaluation
constraints force $D\le M+1.$ 

If $D\ge k+2$, put $h=M+2-D$; then
$1\le h\le2k-5$. For $H=X^h$ and $H=X^{h+1}$, the candidate
leading degrees are $M+1,M+2$, with respective coefficients
$(2D-h)u_D$ and $(2D-h-1)u_D$. At least one is nonzero. Since $q>3(k-1)$,
both degrees are less than $q$,
contradicting \eqref{eq:odd-operator}. Thus $D\le k+1$.

If $D=k+1$, take $H=X^{2k-4}$.
Its degree-$3k-4$ coefficient is $6u_D$, which is nonzero
unless $\operatorname{char}\F_q=3$. This proves the assertion.
\end{proof}

\begin{theorem}\label{thm:odd-low}
Let $q$ be an odd prime power, $3\le k\le(q-1)/2$,
and $q>3(k-1)$.
Then a vector is a deep hole of $\RS_q(k)$ if and only if its
canonical representative has degree $k$.
\end{theorem}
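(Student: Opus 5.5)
The plan is to follow the even-characteristic argument of Theorem~\ref{thm:even}, with one extra step to dispose of degree $k+1$ in characteristic three. Let $U$ be the canonical representative of a deep hole of $\RS_q(k)$.

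First we check the size hypothesis $q\ge(\epsilon+1)(k-1)=3(k-1)$. It holds because $q>3(k-1)$, and $3\le k\le(q-1)/2\le q-2$. Proposition~\ref{prop:unified-differential} then shows that $U$ satisfies \eqref{eq:odd-operator} for every $H$ with $\deg H\le 2k-4$. Lemma~\ref{lem:odd-kernel} now gives $\deg U\le k$ when $\operatorname{char}\F_q\ne3$, and $\deg U\le k+1$ when $\operatorname{char}\F_q=3$.

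If $\deg U<k$, then $\ev(U)$ is a codeword, so $d(u,\RS_q(k))=0<q-k$. This is impossible for a deep hole, hence $\deg U\ge k$.

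The remaining and main case is $\operatorname{char}\F_q=3$ with $\deg U=k+1$. We want to show that such a word is never a deep hole. Here we invoke the known classification of degree-$(k+1)$ words~\cite[Theorem~1]{LiZhu}: for full-length codes with $2\le k\le q-2$, no degree-$(k+1)$ deep holes exist in odd characteristic. Our range satisfies these conditions, so this case is excluded.

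If one wanted a self-contained argument for this case instead, the natural attempt is to find, for $U=u_{k+1}X^{k+1}+u_kX^k+\cdots$, a polynomial $f$ of degree less than $k$ agreeing with $U$ at $k+1$ points. This amounts to finding a $(k+1)$-subset $S$ whose elementary symmetric function $e_1(S)$ equals $-u_k/u_{k+1}$. Such subsets exist by a simple subset-sum count, since $k+1\le q-1$ is not $0$ or $q$. That count is where the real difficulty would lie, so we rely on the cited result.

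The converse direction is the standard one: every word with $\deg U=k$ is a deep hole.
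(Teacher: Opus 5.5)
Your proof is correct and follows the paper's own argument: Proposition~\ref{prop:unified-differential} and Lemma~\ref{lem:odd-kernel} give $\deg U\le k$, except for $\deg U=k+1$ in characteristic three, which the Li--Zhu degree-$(k+1)$ result excludes, and $\deg U<k$ would give a codeword. Your optional aside is also sound and easier than you suggest: when $p\nmid k+1$ a $(k+1)$-subset with any prescribed sum exists by translation, and when $p\mid k+1$ you take a union of additive cosets of $\F_p$ (sum $0$), swap one element for an outside one to get a nonzero sum, and rescale, so the citation could be avoided.
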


\begin{proof}
Let $U$ represent a deep hole.
By Proposition~\ref{prop:unified-differential} and
Lemma~\ref{lem:odd-kernel}, we have $\deg U\le k$,
unless $\operatorname{char}\F_q=3$ and $\deg U=k+1$.
The latter possibility is excluded by
Li and Zhu~\cite[Theorem~1]{LiZhu}, who proved that
every word of degree $k+1$ has distance $q-k-1$
from $\RS_q(k)$ in odd characteristic.
Hence $\deg U=k$, since polynomials of degree
less than $k$ represent codewords.
The converse follows from the degree-$k$ construction.
\end{proof}

\section{Local Syndromes and Template Uniqueness}\label{sec:templates}
In this section, we use global and local syndromes to characterize
deep holes, establish the existence and uniqueness of local templates,
and derive compatibility identities. These results will be used in
Section~\ref{sec:quadratic}.

Throughout this section, $q$ is odd, $3\le k\le(q-1)/2$,
and $U$ is the canonical representative of a deep hole
of $\RS_q(k)$; no restriction $q\le3(k-1)$ is imposed.

\subsection{A Syndrome Characterization of Deep Holes}

\begin{definition}[Global and local syndromes]\label{def:syndromes}
Put $\nu=q-k-1$. The \emph{global syndrome functional} of $U$ is
\[
\Lambda_U:\F_q[X]_{\le\nu}\longrightarrow\F_q,
\qquad
\Lambda_U(F)=\sum_{a\in\F_q}U(a)F(a).
\]
For $1\le m\le\nu$ and $W\subseteq\F_q$ with
$|W|=k+m+1$, the \emph{local syndrome functional} on $W$ is
\[
L_W:\F_q[X]_{\le m}\longrightarrow\F_q,
\qquad L_W(g)=\Lambda_U(P_{\F_q\setminus W}g).
\]
This is well defined since $|\F_q\setminus W|+m=\nu$.
\end{definition}
If $f\in\F_q[X]$ has degree less than $k$ and $\deg F\le\nu$, then
\[
\deg(fF)\le q-2,
\]
and hence $\sum_{a\in\F_q}f(a)F(a)=0$
by the standard power-sum identities over $\F_q$. Thus $\Lambda_U$ depends only on the coset of $\ev(U)$ modulo $\RS_q(k)$.

The deep-hole condition admits a useful formulation in terms of $\Lambda_U$. Let $A\subseteq\F_q$ have size $\nu$, and put
\[
V=\F_q\setminus A,
\qquad |V|=k+1.
\]
Since
\[
X^q-X=P_A(X)P_V(X),
\]
differentiating at $v\in V$, we obtain
\[
P_A(v)P_V'(v)=-1.
\]
Therefore
\begin{equation}\label{eq:global-zero-free}
	\Lambda_U(P_A)
	=
	-\sum_{v\in V}\frac{U(v)}{P_V'(v)}
	=
	-[X^k]I_V,
\end{equation}
where $I_V$ is the polynomial of degree at most $k$ interpolating $U$ on $V$. By \eqref{eq:deep-hole-criterion},
\begin{equation}\label{eq:global-ZF-criterion}
	\ev(U)\text{ is a deep hole}
	\quad\Longleftrightarrow\quad
	\Lambda_U(P_A)\ne0
	\quad
	\text{for every }A\in\binom{\F_q}{\nu}.
\end{equation}

Fix $1\le m\le\nu$ and $W\subseteq\F_q$ with
$|W|=k+m+1$. Differentiating
$X^q-X=P_{\F_q\setminus W}P_W$ at $x\in W$, we find
\[
P_{\F_q\setminus W}(x)=-\frac{1}{P_W'(x)}.
\]
Hence
\begin{equation}\label{eq:global-localization}
	L_W(g)
	=
	-\sum_{x\in W}
	\frac{U(x)g(x)}{P_W'(x)}.
\end{equation}

More generally, for $\mathbf y=(y_x)_{x\in W}\in\F_q^W$, define
\[
\Psi_W(\mathbf y)(g)
	:=
	-\sum_{x\in W}
	\frac{y_xg(x)}{P_W'(x)},
	\qquad
	g\in\F_q[X]_{\le m}.
\]
Thus
\[
\Psi_W:\F_q^W\longrightarrow(\F_q[X]_{\le m})^*
\]
is linear, and
\[
L_W=\Psi_W\bigl((U(x))_{x\in W}\bigr).
\]

\begin{lemma}\label{lem:local}
Let $U$ represent a deep hole of $\RS_q(k)$, let $1\le m\le q-k-1$, and let $W\subseteq\F_q$ have size $k+m+1$. Then $\Psi_W$ has rank $m+1$ and
\[
\ker\Psi_W=\{(f(x))_{x\in W}:\deg f<k\}.
\]
Moreover, for every $m$-subset $T\subseteq W$, writing $V=W\setminus T$, we have
\begin{equation}\label{eq:delete-roots}
\begin{aligned}
L_W(P_T)&=\Lambda_U(P_{(\F_q\setminus W)\cup T})
=-\sum_{v\in V}\frac{U(v)}{P_V'(v)}\\
&=-[X^k]I_V\ne0.
\end{aligned}
\end{equation}
In particular, $L_W\ne0$.
\end{lemma}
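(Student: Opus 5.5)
The plan is to compute the kernel of $\Psi_W$ directly and then read off the rank from the dimension count. Write $n=|W|=k+m+1$.

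\emph{Kernel.} A vector $\mathbf y\in\F_q^W$ lies in $\ker\Psi_W$ exactly when
\[
\sum_{x\in W}\frac{y_xg(x)}{P_W'(x)}=0\qquad\text{for all }\deg g\le m.
\]
Let $I_{\mathbf y}$ be the interpolation polynomial of $\mathbf y$ on $W$, so $\deg I_{\mathbf y}\le n-1$. Applying \eqref{eq:lagrange} to $H=I_{\mathbf y}g$ only works while $\deg(I_{\mathbf y}g)<n$, so instead I will use the classical dual-code identity for GRS codes.

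For $\deg f<k$ and $\deg g\le m$, the product $fg$ has degree at most $k+m-1<n-1$. Hence \eqref{eq:lagrange} gives $\sum_{x\in W}f(x)g(x)/P_W'(x)=[X^{n-1}](fg)=0$. This shows that the space $C=\{(f(x))_{x\in W}:\deg f<k\}$ is contained in the kernel.

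\emph{Rank.} The functionals $g\mapsto g(x)/P_W'(x)$, for $x\in W$, span the dual of $\F_q[X]_{\le m}$, because $|W|>m$ and evaluation at $m+1$ distinct points is injective on $\F_q[X]_{\le m}$. Therefore $\Psi_W$ is surjective onto a space of dimension $m+1$, so its rank is $m+1$. Its kernel then has dimension $n-(m+1)=k$. Since $\dim C=k$ (as $k<n$), we get $\ker\Psi_W=C$.

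\emph{Identity \eqref{eq:delete-roots}.} Let $T\subseteq W$ with $|T|=m$ and $V=W\setminus T$, so $|V|=k+1$. The first equality is just the definition of $L_W$, since $P_{\F_q\setminus W}P_T=P_{(\F_q\setminus W)\cup T}$.

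For the second, evaluate \eqref{eq:global-localization} at $g=P_T$. The terms with $x\in T$ vanish. For $v\in V$ we have $P_W'(v)=P_V'(v)P_T(v)$, and the factor $P_T(v)$ cancels against $g(v)=P_T(v)$. This leaves $-\sum_{v\in V}U(v)/P_V'(v)$.

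This sum equals $-[X^k]I_V$ by \eqref{eq:lagrange} with $|V|=k+1$. Alternatively, it follows from \eqref{eq:global-zero-free} applied to $A=(\F_q\setminus W)\cup T$, which has size $\nu$ and complement $V$.

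\emph{Nonvanishing.} The value is nonzero by \eqref{eq:global-ZF-criterion}. Equivalently, if $[X^k]I_V=0$, then $\deg I_V<k$ and $I_V$ agrees with $U$ on $k+1$ points, contradicting \eqref{eq:deep-hole-criterion}.

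Since $m\le|W|$, an $m$-subset $T$ exists, so $L_W(P_T)\ne0$ and hence $L_W\ne0$.

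All steps are routine. The only point needing care is the rank claim, where surjectivity uses $|W|\ge m+1$.
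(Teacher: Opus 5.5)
Your proof is correct and follows essentially the same route as the paper. Degree-$<k$ evaluations lie in $\ker\Psi_W$ by \eqref{eq:lagrange}; the rank is $m+1$ because evaluation at the $|W|\ge m+1$ points of $W$ separates $\F_q[X]_{\le m}$, which is the dual phrasing of the paper's truncated-Vandermonde argument; and the kernel is then identified by a dimension count. The only difference is cosmetic: you obtain \eqref{eq:delete-roots} directly from \eqref{eq:global-localization} via $P_W'(v)=P_V'(v)P_T(v)$ for $v\in V$, whereas the paper applies \eqref{eq:global-zero-free} and \eqref{eq:global-ZF-criterion} to the $\nu$-set $(\F_q\setminus W)\cup T$.
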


\begin{proof}
	Relative to the monomial basis $1,X,\ldots,X^m$ of $\F_q[X]_{\le m}$, the matrix of $\Psi_W$ is
	\[
	\left(
	-\frac{x^j}{P_W'(x)}
	\right)_{\substack{0\le j\le m\\ x\in W}}.
	\]
	This is a truncated Vandermonde matrix multiplied by an invertible diagonal matrix. Hence
	\[
	\rank\Psi_W=m+1.
	\]

	If $\mathbf y=(f(x))_{x\in W}$ for some $\deg f<k$ and $\deg g\le m$, then
	\[
	\deg(fg)\le(k-1)+m=|W|-2.
	\]
	By \eqref{eq:lagrange},
	\[
	\Psi_W(\mathbf y)(g)
	=
	-\sum_{x\in W}
	\frac{f(x)g(x)}{P_W'(x)}
	=
	0.
	\]
	Thus the $k$-dimensional space of degree-$<k$ evaluation vectors on $W$ lies in $\ker\Psi_W$. Since
\[
\dim\ker\Psi_W=|W|-(m+1)=k,
\]
these two spaces coincide.

	Finally, since $|\F_q\setminus W|=\nu-m$ and $T\subseteq W$, the disjoint union $(\F_q\setminus W)\cup T$ has size $\nu$. Consequently
\[
L_W(P_T)=\Lambda_U(P_{(\F_q\setminus W)\cup T})\ne0
\]
by \eqref{eq:global-ZF-criterion}. Its complementary set is $V=W\setminus T$, so the remaining identities in \eqref{eq:delete-roots} follow from \eqref{eq:global-zero-free}.
\end{proof}

\subsection{Template Characterization and Compatibility Identities}

Two types of templates describe the evaluation-type local syndromes.

\begin{definition}\label{def:template}
	A \emph{template} is either
	\begin{enumerate}
		\item a polynomial $G\in\F_q[X]$ of degree exactly $k$, or
		\item a rational function
		\[
		R(X)=f(X)+\frac{c}{X-\zeta},
		\qquad
		\deg f<k,\quad c\in\F_q^\times,\quad \zeta\in\F_q,
		\]
		defined on $\F_q\setminus\{\zeta\}$.
	\end{enumerate}
	For a template $R$, write $\dom R=\F_q$ in the polynomial
	case and $\dom R=\F_q\setminus\{\zeta\}$ in the case of a
	template with pole $\zeta$.
	Relative to $U$, a subset $W\subseteq\F_q$ is \emph{good} if there exists a template $R$, defined on $W$, such that
	\[
	U(x)=R(x)
	\qquad
	(x\in W).
	\]
	Otherwise $W$ is \emph{bad}.
\end{definition}

Let $1\le m\le\nu$, and let $W\subseteq\F_q$ have size $k+m+1$. For $\zeta\in\F_q$, define the evaluation functional
\[
\lambda_\zeta(g)=g(\zeta),
\qquad
g\in\F_q[X]_{\le m},
\]
and let
\[
\lambda_\infty(g)=[X^m]g.
\]
With respect to the monomial basis $1,X,\ldots,X^m$, their projective classes form
\[
\C_m
=
\{[1:\zeta:\cdots:\zeta^m]:\zeta\in\F_q\}
\cup
\{[0:\cdots:0:1]\}
\subseteq\PG(m,q).
\]

\begin{lemma}\label{lem:template-criterion}
Let $U$ represent a deep hole of $\RS_q(k)$, $1\le m\le q-k-1$, and $|W|=k+m+1$. Then $W$ is good if and only if
	\[
	[L_W]\in\C_m.
	\]
\end{lemma}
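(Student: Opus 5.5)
The plan is to compute $\Psi_W$ explicitly on the evaluation vectors of the two kinds of templates, and then use Lemma~\ref{lem:local}. That lemma says $L_W=\Psi_W((U(x))_{x\in W})$ and $\ker\Psi_W$ consists of the restrictions to $W$ of polynomials of degree less than $k$. Hence $L_W$ depends only on $U|_W$ modulo such restrictions. So $W$ is good exactly when $L_W$ equals $\Psi_W$ of some template vector.

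\emph{Polynomial templates.} Let $G$ have degree $k$ and leading coefficient $c\ne0$, and let $\deg g\le m$. Then $\deg(Gg)\le k+m=|W|-1$. By \eqref{eq:lagrange},
\[
\Psi_W(G|_W)(g)=-\sum_{x\in W}\frac{G(x)g(x)}{P_W'(x)}=-[X^{k+m}](Gg)=-c\,[X^m]g .
\]
Thus $\Psi_W(G|_W)=-c\lambda_\infty$.

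\emph{Rational templates.} Let $R=f+c/(X-\zeta)$ with $\zeta\notin W$; this is forced, since $W\subseteq\dom R$. The $f$-part lies in the kernel. For the pole part, consider the rational function $g(X)/\bigl((X-\zeta)P_W(X)\bigr)$. Its denominator has degree $k+m+2$, which exceeds $\deg g+1$. So the sum of its residues over all finite poles is zero, which gives
\[
\sum_{x\in W}\frac{g(x)}{(x-\zeta)P_W'(x)}+\frac{g(\zeta)}{P_W(\zeta)}=0.
\]
Equivalently, this is \eqref{eq:lagrange} applied on $W\cup\{\zeta\}$. Hence
\[
\Psi_W\bigl((c/(x-\zeta))_{x\in W}\bigr)=\frac{c}{P_W(\zeta)}\lambda_\zeta,
\]
a nonzero multiple of $\lambda_\zeta$. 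These two computations prove the forward direction: if $W$ is good, then $[L_W]\in\C_m$.

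\emph{Converse.} By Lemma~\ref{lem:local}, $L_W\ne0$.

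If $L_W=\mu\lambda_\infty$ with $\mu\ne0$, set $G_0=-\mu X^k$. Then $U|_W-G_0|_W\in\ker\Psi_W$. So $U$ agrees on $W$ with $G_0+f$ for some $\deg f<k$, and this is a template of exact degree $k$.

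If $L_W=\mu\lambda_\zeta$, we first show $\zeta\notin W$; this is the only genuinely delicate point. Suppose $\zeta\in W$. Since $m\ge1$ and $|W|>m$, we can choose an $m$-subset $T\subseteq W$ containing $\zeta$. Then $L_W(P_T)=\mu P_T(\zeta)=0$, contradicting the nonvanishing in \eqref{eq:delete-roots}.

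So $\zeta\notin W$, and $P_W(\zeta)\ne0$. Put $c=\mu P_W(\zeta)\ne0$. The rational computation shows that $U|_W$ and $(c/(x-\zeta))_{x\in W}$ have the same image under $\Psi_W$. Their difference therefore lies in the kernel, so $U=f+c/(X-\zeta)$ on $W$ for some $\deg f<k$. Thus $W$ is good.
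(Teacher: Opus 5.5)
Your proof is correct and follows essentially the same route as the paper. You compute $\Psi_W$ on a degree-$k$ polynomial and on $1/(X-\zeta)$, invoke the kernel description from Lemma~\ref{lem:local}, and exclude $\zeta\in W$ by taking an $m$-subset $T\ni\zeta$ and using \eqref{eq:delete-roots}; the only cosmetic difference is that you obtain the pole identity from the residue sum (equivalently \eqref{eq:lagrange} on $W\cup\{\zeta\}$), whereas the paper evaluates the Lagrange formula at $\zeta$.
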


\begin{proof}
The local syndrome is nonzero by Lemma~\ref{lem:local}. By Lagrange interpolation,
\begin{align}
 \Psi_W\bigl((x^k)_{x\in W}\bigr)&=-\lambda_\infty,
 \notag\\
 \Psi_W\left(\left(\frac1{x-\zeta}\right)_{x\in W}\right)
 &=\frac{\lambda_\zeta}{P_W(\zeta)},\qquad \zeta\notin W.
 \label{eq:pole-evaluation}
\end{align}
For the first identity, apply \eqref{eq:lagrange} to
$H=X^kg$ with $\deg g\le m$ and $|W|=k+m+1$:
\[
\begin{aligned}
\Psi_W\bigl((x^k)_{x\in W}\bigr)(g)
&=-\sum_{x\in W}\frac{x^kg(x)}{P_W'(x)}\\
&=-[X^{k+m}](X^kg)
=-[X^m]g=-\lambda_\infty(g).
\end{aligned}
\]
For the second identity, evaluating the Lagrange formula at
$\zeta\notin W$ yields
\[
 \frac{g(\zeta)}{P_W(\zeta)}
 =\sum_{x\in W}\frac{g(x)}{(\zeta-x)P_W'(x)},
\]
which is precisely \eqref{eq:pole-evaluation}.

By Lemma~\ref{lem:local}, the kernel of $\Psi_W$ consists of evaluations of polynomials of degree less than $k$. Thus the first identity says that $[L_W]=[\lambda_\infty]$ exactly when $U|_W$ is a nonzero multiple of $X^k$ plus a polynomial of degree less than $k$. The second says that $[L_W]=[\lambda_\zeta]$, with $\zeta\notin W$, exactly when $U|_W$ is a nonzero multiple of $(X-\zeta)^{-1}$ plus such a polynomial. These are precisely the two template types.

Finally, $[L_W]=[\lambda_\zeta]$ with $\zeta\in W$ is impossible: an $m$-subset $T\subseteq W$ containing $\zeta$ would satisfy $L_W(P_T)=0$, contrary to \eqref{eq:delete-roots}.
\end{proof}

The following intersection bound ensures uniqueness of local templates that agree on at least $k+2$ points.

\begin{lemma}\label{lem:template-intersection}
Two distinct templates agree on at most $k+1$ points of their common domain.
\end{lemma}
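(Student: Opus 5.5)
The plan is to clear denominators and count roots, splitting according to the types of the two templates. Let $R_1,R_2$ be distinct templates and let $D=\dom R_1\cap\dom R_2$. In each case we will find a nonzero polynomial $N$ of degree at most $k+1$ that vanishes at every point of $D$ where $R_1$ and $R_2$ agree. Since a nonzero polynomial of degree at most $k+1$ has at most $k+1$ roots, the bound follows.

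\emph{Case 1: both are polynomials.} Here $R_1=G_1$ and $R_2=G_2$, both of degree exactly $k$, and $D=\F_q$. Take $N=G_1-G_2$. It is nonzero of degree at most $k$, so the two templates agree on at most $k$ points.

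\emph{Case 2: one polynomial, one with a pole.} Here $R_1=G$ and $R_2=f+c/(X-\zeta)$, with $D=\F_q\setminus\{\zeta\}$. For $x\in D$, agreement is equivalent to $N(x)=0$, where
\[
N(X)=(X-\zeta)\bigl(G(X)-f(X)\bigr)-c.
\]
Since $\deg(G-f)=k$, the polynomial $N$ has degree exactly $k+1$. Hence there are at most $k+1$ agreement points.

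\emph{Case 3: both have poles.} Here $R_i=f_i+c_i/(X-\zeta_i)$, and $D$ excludes both $\zeta_1$ and $\zeta_2$. Take
\[
N(X)=(X-\zeta_1)(X-\zeta_2)\bigl(f_1-f_2\bigr)+c_1(X-\zeta_2)-c_2(X-\zeta_1).
\]
This has degree at most $k+1$, since $\deg(f_1-f_2)\le k-1$. On $D$, agreement of $R_1$ and $R_2$ is equivalent to $N(x)=0$.

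The main step is checking that $N\ne0$ in Case 3. Suppose $N=0$.
\begin{itemize}[nosep]
\item If $\zeta_1\ne\zeta_2$, evaluate $N$ at $X=\zeta_1$. This gives $c_1(\zeta_1-\zeta_2)=0$, which contradicts $c_1\ne0$.
\item If $\zeta_1=\zeta_2=\zeta$, then $N=(X-\zeta)\bigl[(X-\zeta)(f_1-f_2)+(c_1-c_2)\bigr]$. The bracket vanishing forces $c_1=c_2$ (evaluate at $\zeta$) and then $f_1=f_2$. So $R_1=R_2$, a contradiction.
\end{itemize}

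One detail remains when $\zeta_1=\zeta_2$. In that case $N$ has the extra factor $(X-\zeta)$, whose root $\zeta$ lies outside $D$. Dividing it out leaves a nonzero polynomial of degree at most $k$, so the count only improves. Thus in every case the two templates agree on at most $k+1$ points of their common domain.
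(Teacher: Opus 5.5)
Your proof is correct and takes the same approach as the paper. The paper writes each template as $N_i/D_i$ with $\deg N_i\le k$ and $\deg D_i\le 1$, then counts the roots of the nonzero polynomial $N_1D_2-N_2D_1$ of degree at most $k+1$. Your three cases compute exactly this cross-multiplied polynomial and explicitly verify that it is nonzero, a step the paper leaves implicit.
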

\begin{proof}
Write each template in reduced form $N_i/D_i$, with $\deg N_i\le k$ and $\deg D_i\le1$. For distinct templates, $N_1D_2-N_2D_1$ is a nonzero polynomial of degree at most $k+1$. Every point of agreement in the common domain is a root of this polynomial.
\end{proof}

Although this section concerns odd characteristic and $k\ge3$,
the following local algebraic assertion holds for every prime
power $q$ and every $1\le k\le q-2$.

\begin{corollary}\label{cor:unique-template}
Let $U$ be the canonical representative of a deep hole of $\RS_q(k)$, with $1\le k\le q-2$. Every $(k+2)$-subset of $\F_q$ has a unique template agreeing with $U$ on that subset.
\end{corollary}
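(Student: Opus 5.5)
Uniqueness follows immediately from Lemma~\ref{lem:template-intersection}: two distinct templates both agreeing with $U$ on a $(k+2)$-subset $W$ would agree with each other on $k+2$ points of their common domain, which is impossible. The work is in existence, and the plan is to take $m=1$ in the local-syndrome machinery and read off a template from the projective class of $L_W$.

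Fix $W\subseteq\F_q$ with $|W|=k+2$. Since $k\le q-2$, we have $\nu=q-k-1\ge1$, so $m=1$ is admissible in Definition~\ref{def:syndromes} and Lemma~\ref{lem:local}. (These statements were written under the standing odd-$q$ hypothesis, but their proofs use only \eqref{eq:lagrange}, the identity $X^q-X=P_AP_V$, and the criterion \eqref{eq:global-ZF-criterion}, all valid for every $q$ and every $1\le k\le q-2$.) The space $(\F_q[X]_{\le1})^*$ is two-dimensional, so its projectivization is $\PG(1,q)$. In this case the set
\[
\C_1=\{[1:\zeta]:\zeta\in\F_q\}\cup\{[0:1]\}
\]
is all of $\PG(1,q)$. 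By Lemma~\ref{lem:local}, $L_W\ne0$, so $[L_W]$ is a point of $\PG(1,q)$ and hence lies in $\C_1$. Lemma~\ref{lem:template-criterion}, whose proof likewise does not use the parity of $q$, then shows that $W$ is good.

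We can also obtain the template directly, which checks that the pole lands outside $W$. If $[L_W]=[\lambda_\infty]$, then $U|_W$ equals a nonzero multiple of $X^k$ plus a polynomial of degree less than $k$, which is a degree-$k$ template. If $[L_W]=[\lambda_\zeta]$, we first need $\zeta\notin W$. Were $\zeta\in W$, the $1$-subset $T=\{\zeta\}$ would give $L_W(P_T)=L_W(X-\zeta)=c\lambda_\zeta(X-\zeta)=0$, contradicting \eqref{eq:delete-roots}. With $\zeta\notin W$, \eqref{eq:pole-evaluation} and the kernel description in Lemma~\ref{lem:local} give $U|_W=f+c/(X-\zeta)$ with $c\ne0$ and $\deg f<k$.

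The only delicate point is confirming that the earlier lemmas extend to all $q$ and to $k\in\{1,2\}$. This requires checking the degree bookkeeping, for instance $\deg(fg)\le|W|-2$ in Lemma~\ref{lem:local}, and the nonvanishing \eqref{eq:global-ZF-criterion}, which relies only on the deep-hole criterion \eqref{eq:deep-hole-criterion}. Neither depends on parity or on $k\ge3$, so existence holds, and uniqueness follows as described above.
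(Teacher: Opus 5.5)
Your proof is correct and follows the paper's argument. With $m=1$ the evaluation set $\C_1$ is all of $\PG(1,q)$, so $L_W\ne0$ together with Lemma~\ref{lem:template-criterion} gives existence, Lemma~\ref{lem:template-intersection} gives uniqueness, and the proofs of these lemmas (and of Lemma~\ref{lem:local}) carry over unchanged to every $q$ and $1\le k\le q-2$. Your explicit treatment of the two cases, including ruling out $\zeta\in W$ via $T=\{\zeta\}$, simply reproduces the proof of Lemma~\ref{lem:template-criterion} specialized to $m=1$.
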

\begin{proof}
The proofs of Lemmas~\ref{lem:local},
\ref{lem:template-criterion}, and~\ref{lem:template-intersection}
apply without change to every prime power $q$ and
$1\le k\le q-2$.
For $m=1$, the evaluation curve $\C_1$ is the entire
projective line, so existence follows from
Lemma~\ref{lem:template-criterion}.
Uniqueness follows from Lemma~\ref{lem:template-intersection}.
\end{proof}

Local syndromes on overlapping sets satisfy a deletion identity
used below to count failed template extensions.

\begin{lemma}\label{lem:compatibility}
Let $2\le m\le q-k-1$, $|W|=k+m+1$, and $z\in W$. Then
\begin{equation}\label{eq:compatibility}
 L_{W\setminus\{z\}}(g)=L_W((X-z)g)
 \qquad(\deg g\le m-1).
\end{equation}
\end{lemma}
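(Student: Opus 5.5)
The identity follows directly from the definition of the local syndrome. Put $W'=W\setminus\{z\}$, so that $|W'|=k+(m-1)+1$. The functional $L_{W'}$ is then defined on $\F_q[X]_{\le m-1}$, since $1\le m-1\le\nu$. For $\deg g\le m-1$ the product $(X-z)g$ has degree at most $m$, so the right-hand side $L_W((X-z)g)$ is defined as well.

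By Definition~\ref{def:syndromes},
\[
L_{W'}(g)=\Lambda_U\bigl(P_{\F_q\setminus W'}\,g\bigr).
\]
Since $\F_q\setminus W'=(\F_q\setminus W)\cup\{z\}$ is a disjoint union, we have $P_{\F_q\setminus W'}=P_{\F_q\setminus W}\,(X-z)$. Therefore
\[
L_{W'}(g)=\Lambda_U\bigl(P_{\F_q\setminus W}\,(X-z)g\bigr)=L_W\bigl((X-z)g\bigr),
\]
and the last equality is again the definition of $L_W$, applied to the polynomial $(X-z)g$ of degree at most $m$. This proves \eqref{eq:compatibility}.

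As a consistency check, one can also verify the identity through the localized formula \eqref{eq:global-localization}. Differentiating $P_W=(X-z)P_{W'}$ gives $P_W'(x)=(x-z)P_{W'}(x)$ for $x\in W'$. On the right-hand side, the $x=z$ term vanishes because $(x-z)g(x)=0$ there. For the remaining $x\in W'$, the factor $(x-z)$ cancels, leaving exactly $-\sum_{x\in W'}U(x)g(x)/P_{W'}'(x)=L_{W'}(g)$. No real obstacle arises; the only point to watch is the degree bookkeeping that makes both sides well defined.
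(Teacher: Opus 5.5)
Your main argument is correct and is the paper's proof: factor $P_{\F_q\setminus(W\setminus\{z\})}=(X-z)P_{\F_q\setminus W}$ and apply the definition of the local syndrome on both sides. One typo in your optional consistency check: the derivative identity should read $P_W'(x)=(x-z)P_{W\setminus\{z\}}'(x)$ for $x\in W\setminus\{z\}$, with a prime on the second factor, which is the form you actually use in the next step.
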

\begin{proof}
Since
\[
P_{\F_q\setminus(W\setminus\{z\})}
=(X-z)P_{\F_q\setminus W},
\]
the definition of the local syndrome gives
\[
L_{W\setminus\{z\}}(g)
=\Lambda_U\bigl((X-z)P_{\F_q\setminus W}g\bigr)
=L_W((X-z)g).
\]
\end{proof}

\section{Deep-Hole Classification via Quadratic Obstructions}\label{sec:quadratic}
In this section, we combine local-syndrome compatibility with
exact counts of quadratic obstructions to exclude nonexpected
deep holes in the remaining odd-characteristic range for $q\ge41$.
The small-field cases are treated in Section~\ref{sec:completion}.
Unless stated otherwise, throughout this section $q$
is an odd prime power, $3\le k\le(q-1)/2$, and $U$
is the canonical representative of a deep hole of $\RS_q(k)$.

\subsection{Counting Quadratic Obstructions}

A bad $(k+3)$-subset $W$ has $[L_W]\notin\C_2$ by
Lemma~\ref{lem:template-criterion}. We associate with such
a syndrome a graph on the evaluation points.

\begin{definition}[Quadratic obstruction graph]\label{def:quadratic-graph}
For $0\ne L\in(\F_q[X]_{\le2})^*$, let $\mathcal H_L$
be the simple graph on $\F_q$ whose edges are the unordered
pairs $\{x,y\}$ of distinct elements satisfying
$L((X-x)(X-y))=0$. This graph depends only on the
projective class $[L]=[a_0:a_1:a_2]$, where
$a_i=L(X^i)$.
\end{definition}

Expanding $(X-x)(X-y)$ gives the edge equation
\begin{equation}\label{eq:quadratic-edge}
	a_0xy-a_1(x+y)+a_2=0.
\end{equation}
For bad $W$, identity~\eqref{eq:delete-roots} shows that
$W$ is independent in $\mathcal H_{L_W}$.

Suppose $[L]\notin\C_2$, so that
$a_0a_2-a_1^2\ne0$. If $a_0\ne0$, normalize $a_0=1$
and put $\alpha=a_1$ and $\Delta=\alpha^2-a_2\ne0$.
Then \eqref{eq:quadratic-edge} becomes
\[
(x-\alpha)(y-\alpha)=\Delta.
\]
The vertex $\alpha$ is isolated. For each $x\ne\alpha$,
the unique possible neighbor is
\[
\sigma(x)=\alpha+\frac{\Delta}{x-\alpha}.
\]
Since $\sigma^2=\mathrm{id}$, its two-element orbits
form disjoint edges, while its fixed points are isolated,
as loops are excluded. The fixed-point equation
\[
(x-\alpha)^2=\Delta
\]
has two solutions if $\Delta$ is a square in
$\F_q^\times$, and none otherwise. Thus $\mathcal H_L$
has three isolated vertices and $(q-3)/2$ disjoint edges
in the former case, and one isolated vertex and
$(q-1)/2$ disjoint edges in the latter.

If $a_0=0$, then $a_1\ne0$, and
\eqref{eq:quadratic-edge} reduces to
\[
x+y=c,\qquad c=\frac{a_2}{a_1}.
\]
The involution $x\mapsto c-x$ has the unique fixed
point $c/2$. Hence $\mathcal H_L$ has one isolated
vertex and $(q-1)/2$ disjoint edges.

For $i\in\{1,3\}$, let $N_i$ denote the number of
non-evaluation projective classes $[L]$ for which
$\mathcal H_L$ has exactly $i$ isolated vertices.
In the chart $a_0=1$, there are $q$ choices of
$\alpha$ and $(q-1)/2$ choices of $\Delta$ of each
quadratic character. The chart $a_0=0$ contributes
another $q$ classes, all with one isolated vertex.
Consequently,
\[
N_1=\frac{q(q+1)}2,\qquad
N_3=\frac{q(q-1)}2,
\]
and
\[
N_1+N_3=q^2=|\PG(2,q)\setminus\C_2|.
\]

We can now count independent subsets in these matchings
exactly. For $i\in\{1,3\}$, a graph with $i$ isolated vertices
has $(q-i)/2$ disjoint edges. An independent $w$-subset
containing $j$ isolated vertices is obtained by choosing
$j$ isolated vertices, $w-j$ edges, and one endpoint
from each chosen edge. Hence the number of independent
$w$-subsets is
\begin{equation}\label{eq:Ai}
	A_i(q,w)=
	\sum_{j=0}^{\min\{i,w\}}
	\binom ij 2^{w-j}\binom{(q-i)/2}{w-j},
\end{equation}
where out-of-range binomial coefficients are zero.

Define
\begin{equation}\label{eq:Q}
	Q(q,w)=
	\frac{N_1A_1(q,w)+N_3A_3(q,w)}{\binom qw},
	\qquad 0\le w\le q.
\end{equation}
The formulas \eqref{eq:Ai} and \eqref{eq:Q}, with
$N_1=q(q+1)/2$ and $N_3=q(q-1)/2$, also define
$Q(q,w)$ for every odd integer $q\ge3$.
We use this numerical extension in
Appendix~\ref{app:quadratic}; the finite-geometric
interpretation applies when $q$ is a prime power.
For a uniformly random $w$-subset $W\subseteq\F_q$,
\[
\Pr(W\text{ is independent in }\mathcal H_L)
=\frac{A_i(q,w)}{\binom qw}
\]
whenever $\mathcal H_L$ has $i$ isolated vertices.
By linearity of expectation,
\[
Q(q,w)=\mathbb E\left[
\#\{[L]\notin\C_2:
W\text{ is independent in }\mathcal H_L\}
\right].
\]
Thus $Q(q,w)$ counts projective classes in expectation
and may exceed one.

\begin{lemma}\label{lem:Q-monotone}
	For each odd prime power $q$, the function $Q(q,w)$ is
	nonincreasing in $w$.
\end{lemma}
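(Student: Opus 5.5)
We want $Q(q,w+1)\le Q(q,w)$. The cleanest route avoids the explicit formulas and uses the probabilistic interpretation, proving a stronger statement for each fixed class $[L]\notin\C_2$:
\[
\frac{A_i(q,w+1)}{\binom q{w+1}}\le\frac{A_i(q,w)}{\binom qw},\qquad i\in\{1,3\}.
\]
Since $Q$ is a nonnegative combination of these ratios, with weights $N_1,N_3$ independent of $w$, monotonicity of $Q$ follows. The ratio is the probability that a uniformly random $w$-subset is independent in a fixed graph $\mathcal H$, so the claim is just that this probability does not increase with $w$.

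\textbf{Coupling argument.} Fix the graph $\mathcal H=\mathcal H_L$ on $\F_q$, and generate a uniformly random $(w+1)$-subset $W'$. Delete a uniformly random element of $W'$ to obtain $W$. By symmetry, $W$ is then a uniformly random $w$-subset, since every $w$-subset arises from exactly $q-w$ supersets, each with the same deletion probability. Independence is a hereditary property, so if $W'$ is independent then $W\subseteq W'$ is independent. Hence
\[
\Pr(W'\text{ independent})\le\Pr(W\text{ independent}).
\]
For $w+1>q$ both sides are trivially handled: $Q(q,w)$ is only defined for $0\le w\le q$, so we only need $0\le w\le q-1$.

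\textbf{Double-counting version.} To stay purely combinatorial, I would phrase the coupling as a count of pairs $(W,W')$ with $W\subset W'$, $|W|=w$, and $W'$ independent. Each independent $W'$ yields $w+1$ such pairs. Each $W$ appearing in a pair is independent and lies in at most $q-w$ supersets. This gives
\[
(w+1)A_i(q,w+1)\le(q-w)A_i(q,w),
\]
which is exactly the ratio inequality because
\[
\binom q{w+1}\Big/\binom qw=\frac{q-w}{w+1}.
\]

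The only point needing care is that the paper defines $A_i(q,w)$ by formula \eqref{eq:Ai} rather than as a count. I would first note that for prime powers $q$ the formula equals the number of independent $w$-subsets of $\mathcal H_L$, as derived just before the statement. The boundary case where binomials vanish is harmless, since the inequality then reads $0\le\cdot$. I do not expect a real obstacle. The lemma concerns only prime powers $q$, where the graph interpretation is available, so the numerical extension to all odd $q$ is not needed here.
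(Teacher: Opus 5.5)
Your proposal is correct and is the paper's argument: delete a uniformly chosen element of a uniform $(w+1)$-subset to get a uniform $w$-subset, use that independence is preserved under deletion, and sum the resulting inequalities over the classes $[L]\notin\C_2$. Your double-counting inequality $(w+1)A_i(q,w+1)\le(q-w)A_i(q,w)$ is simply the deterministic form of the same coupling.
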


\begin{proof}
	Let $W_{w+1}$ be a uniformly random $(w+1)$-subset,
	and obtain $W_w$ by deleting a uniformly chosen element.
	Then $W_w$ is uniformly distributed among all
	$w$-subsets. Since independence is preserved under
	deletion, for every $[L]\notin\C_2$,
	\[
	\Pr(W_{w+1}\text{ is independent in }\mathcal H_L)
	\le
	\Pr(W_w\text{ is independent in }\mathcal H_L).
	\]
	Summing over these projective classes gives
	$Q(q,w+1)\le Q(q,w)$.
\end{proof}

\subsection{Excluding Nonexpected Deep Holes}

The deletion identity in Section~\ref{sec:templates} first shows
that distinct failed extensions give distinct quadratic obstructions.

\begin{lemma}\label{lem:noncollision}
Let $q$ be odd, $3\le k\le(q-1)/2$, and let $U$ represent a deep hole of $\RS_q(k)$. Fix a $(k+2)$-subset $B$. If $z,z'\notin B$ are distinct and both $B\cup\{z\}$ and $B\cup\{z'\}$ are bad, then
\[
 [L_{B\cup\{z\}}]\ne[L_{B\cup\{z'\}}].
\]
\end{lemma}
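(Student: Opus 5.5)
The plan is to compare the two local syndromes through the common subset $B$ by means of the deletion identity, Lemma~\ref{lem:compatibility}, and to show that a projective coincidence would force the common syndrome onto the evaluation conic $\C_2$. By Lemma~\ref{lem:template-criterion}, that is impossible for a bad set.

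Set $W=B\cup\{z\}$ and $W'=B\cup\{z'\}$. Both have size $k+3$, so their local syndromes live in $(\F_q[X]_{\le2})^*$ with $m=2$. Note that $2\le q-k-1$ because $k\le(q-1)/2$ and $q\ge5$; for $q=3$ the range $3\le k$ is empty. Applying Lemma~\ref{lem:compatibility} to $W$ with the deleted point $z$, and to $W'$ with the deleted point $z'$, gives for every $g$ with $\deg g\le1$:
\[
L_B(g)=L_W\bigl((X-z)g\bigr)=L_{W'}\bigl((X-z')g\bigr).
\]
Now suppose, for contradiction, that $[L_W]=[L_{W'}]$. By Lemma~\ref{lem:local} both functionals are nonzero, so we may write $L_{W'}=cL$ and $L_W=L$ with $L\neq0$ and $c\in\F_q^\times$. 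Subtracting the two expressions for $L_B(g)$ shows that $L$ vanishes on every product $\bigl((1-c)X-z+cz'\bigr)g$ with $\deg g\le1$.

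There are then two cases.

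\emph{Case $c\ne1$.} Put $w=(z-cz')/(1-c)$. Then $L$ vanishes on the two-dimensional subspace $(X-w)\F_q[X]_{\le1}$. This subspace is exactly the kernel of the evaluation functional $\lambda_w$ on $\F_q[X]_{\le2}$, so $[L]=[\lambda_w]$.

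\emph{Case $c=1$.} Since $z'\ne z$, the linear factor $z'-z$ is a nonzero constant. Hence $L$ vanishes on all of $\F_q[X]_{\le1}$, which is the kernel of $\lambda_\infty$, so $[L]=[\lambda_\infty]$.

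In either case $[L_W]\in\C_2$. By Lemma~\ref{lem:template-criterion} this makes $W$ good, contradicting the assumption that $W$ is bad.

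The only step needing care is identifying the two-dimensional annihilated subspaces with the kernels of $\lambda_w$ and $\lambda_\infty$. This is routine: $\lambda_w$ and $\lambda_\infty$ are nonzero functionals on the three-dimensional space $\F_q[X]_{\le2}$, each with a two-dimensional kernel, and that kernel is the displayed subspace. I do not expect a genuine obstacle. The whole argument rests on the compatibility identity together with the fact that a nonzero functional on a three-dimensional space is determined up to scalar by its kernel.
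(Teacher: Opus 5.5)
Your proof is correct and follows the paper's argument. The deletion identity of Lemma~\ref{lem:compatibility} shows that the common syndrome annihilates a nonzero linear polynomial times $\F_q[X]_{\le1}$, which forces it onto $\C_2$; the only cosmetic difference is that you identify this annihilated plane directly with $\ker\lambda_w$ or $\ker\lambda_\infty$, whereas the paper instead observes that the Hankel matrix of moments is singular, i.e.\ $a_0a_2=a_1^2$.
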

\begin{proof}
Write $F_z=L_{B\cup\{z\}}$ and suppose $F_z=cF_{z'}$ for some $c\in\F_q^\times$. For every $\deg g\le1$, applying \eqref{eq:compatibility} to both one-point deletions yields
\[
 F_z((X-z)g)=L_B(g)=F_{z'}((X-z')g).
\]
Consequently
\begin{equation}\label{eq:noncollision-annihilator}
 F_{z'}\bigl(((c-1)X+z'-cz)g\bigr)=0
 \qquad(\deg g\le1).
\end{equation}
The linear polynomial in parentheses is nonzero, since $z\ne z'$. Writing it as $\alpha X+\beta$, with $(\alpha,\beta)\ne(0,0)$, and taking $g=1,X$ in \eqref{eq:noncollision-annihilator}, we obtain
\[
\begin{pmatrix}
 F_{z'}(1)&F_{z'}(X)\\
 F_{z'}(X)&F_{z'}(X^2)
\end{pmatrix}
\begin{pmatrix}\beta\\\alpha\end{pmatrix}=0.
\]
Thus this Hankel matrix is singular.
Writing the moments of $F_{z'}$ as $(a_0,a_1,a_2)$, this equality says $a_0a_2=a_1^2$. Since $F_{z'}\ne0$, it follows that $[F_{z'}]$ lies on $\C_2$: for $a_0\ne0$ it is evaluation at $a_1/a_0$, and for $a_0=0$ it is evaluation at infinity. This contradicts the badness of $B\cup\{z'\}$ by Lemma~\ref{lem:template-criterion}.
\end{proof}

For each $(k+2)$-subset $B\subseteq\F_q$, let $R_B$
be its unique template, given by
Corollary~\ref{cor:unique-template}, and put
\[
T_B=\{x\in\dom R_B:U(x)=R_B(x)\}.
\]
Thus $B\subseteq T_B$. To count the quadratic
obstructions associated with $B$, define
\[
N(B)=\#\{[L]\in\PG(2,q)\setminus\C_2:
B\text{ is independent in }\mathcal H_L\}.
\]

\begin{lemma}\label{lem:failure-multiplicity}
	Let $q$ be odd, $3\le k\le(q-1)/2$, and let $U$
	represent a deep hole of $\RS_q(k)$. Then, for every
	$(k+2)$-subset $B$,
	\begin{equation}\label{eq:failure-multiplicity}
		N(B)\ge q-|T_B|.
	\end{equation}
\end{lemma}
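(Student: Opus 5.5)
The plan is to produce, for each $z\in\F_q\setminus T_B$, a distinct class $[L]\in\PG(2,q)\setminus\C_2$ in which $B$ is independent. The natural candidate is $[L_{B\cup\{z\}}]$, the local syndrome with $m=2$ on the $(k+3)$-set $W=B\cup\{z\}$. This requires $2\le\nu=q-k-1$, which holds since $k\le(q-1)/2$ and $q\ge7$ in the relevant range; for smaller $q$ the range of $k$ is empty or trivial. Once these classes are shown to be non-evaluation classes with $B$ independent and pairwise distinct, we have $N(B)\ge|\F_q\setminus T_B|=q-|T_B|$.

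The first step is to show that $W=B\cup\{z\}$ is bad whenever $z\notin T_B$. If $W$ were good, some template $R$ would be defined on $W$ and would agree with $U$ there. Then $R$ agrees with $U$ on the $(k+2)$-subset $B$, so $R=R_B$ by the uniqueness in Corollary~\ref{cor:unique-template}. In particular $z\in\dom R_B$ and $U(z)=R_B(z)$, which means $z\in T_B$, a contradiction. Note that if $z$ is the pole of $R_B$, then $z\notin\dom R_B$, so $z\notin T_B$ automatically and the same argument applies. Hence Lemma~\ref{lem:template-criterion} gives $[L_W]\notin\C_2$.

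The second step is to check that $B$ is independent in $\mathcal H_{L_W}$. This follows from \eqref{eq:delete-roots}: for distinct $x,y\in W$, taking $T=\{x,y\}$ gives $L_W((X-x)(X-y))\ne0$. So all of $W$, and in particular $B$, is independent. As remarked after \eqref{eq:quadratic-edge}, the edge relation depends only on the projective class.

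The third step is distinctness. For distinct $z,z'\notin T_B$, both $B\cup\{z\}$ and $B\cup\{z'\}$ are bad, so Lemma~\ref{lem:noncollision} gives $[L_{B\cup\{z\}}]\ne[L_{B\cup\{z'\}}]$. This injection from $\F_q\setminus T_B$ into the set counted by $N(B)$ yields \eqref{eq:failure-multiplicity}.

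The only delicate point, if any, is the first step: badness must be argued via template uniqueness, with care taken over the pole case. The rest consists of direct citations of earlier results.
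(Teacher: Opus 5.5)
Your proposal is correct and matches the paper's proof step for step: $B\cup\{z\}$ is bad for $z\notin T_B$ by template uniqueness (pole case included), $B$ is independent in $\mathcal H_{L_{B\cup\{z\}}}$ by \eqref{eq:delete-roots}, and $z\mapsto[L_{B\cup\{z\}}]$ is injective by Lemma~\ref{lem:noncollision}. The fact $z\notin B$, which you use to get $|W|=k+3$, follows from $B\subseteq T_B$.
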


\begin{proof}
	For each $z\in\F_q\setminus T_B$, the extension
	$W_z=B\cup\{z\}$ is bad. Indeed, if $W_z$ admitted
	a template, it would agree with $R_B$ on the $k+2$
	points of $B$ and hence coincide with $R_B$ by
	Lemma~\ref{lem:template-intersection}. This would
	force $z\in T_B$, a contradiction. The argument
	also covers the pole of a rational template, where
	$R_B$ is undefined.
	
	Since $W_z$ is bad,
	$[L_{W_z}]\notin\C_2$. Moreover, by
	\eqref{eq:delete-roots}, for distinct $x,y\in B$,
	\[
	L_{W_z}((X-x)(X-y))
	=-[X^k]I_{W_z\setminus\{x,y\}}\ne0.
	\]
	Thus $B$ is independent in $\mathcal H_{L_{W_z}}$.
	Consequently, the map
	\[
	\F_q\setminus T_B\longrightarrow
	\{[L]\notin\C_2:
	B\text{ is independent in }\mathcal H_L\},
	\qquad
	z\longmapsto[L_{W_z}]
	\]
	is well defined. It is injective by
	Lemma~\ref{lem:noncollision}, which proves
	\eqref{eq:failure-multiplicity}.
\end{proof}

\begin{lemma}\label{lem:bad-set-density}
	Let $U$ be a nonexpected deep hole of $\RS_q(k)$,
	where $q$ is odd and $3\le k\le(q-1)/2$.
	For $k+3\le w\le q$, let $b_w$ denote the proportion
	of bad $w$-subsets of $\F_q$. Then
	\begin{equation}\label{eq:bad-set-density}
		b_w\ge\frac wq.
	\end{equation}
\end{lemma}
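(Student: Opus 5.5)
The plan is to bound the number $g_w$ of \emph{good} $w$-subsets from above by $\binom{q-1}{w}$. This suffices, because
\[
\binom{q-1}{w}\Big/\binom qw=\frac{q-w}{q},
\]
so $b_w=1-g_w/\binom qw\ge w/q$.

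\textbf{Structure of good sets.} First I will describe the good sets through templates. Let $W$ be a good $w$-subset with $w\ge k+3$. Any template agreeing with $U$ on $W$ agrees on at least $k+2$ points, so by Lemma~\ref{lem:template-intersection} it is unique; call it $R_W$. Then $W\subseteq T(R):=\{x\in\dom R:U(x)=R(x)\}$ for $R=R_W$. Conversely, every $w$-subset of $T(R)$ is good. Distinct templates have agreement sets meeting in at most $k+1$ points, so no $w$-set lies in two of them. Hence
\[
g_w=\sum_R\binom{t_R}{w},\qquad t_R=|T(R)|,
\]
where the sum runs over templates with $t_R\ge k+2$. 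Every $(k+2)$-subset lies in exactly one $T(R)$ by Corollary~\ref{cor:unique-template}, which gives the normalization
\[
\sum_R\binom{t_R}{k+2}=\binom q{k+2}.
\]

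\textbf{Where nonexpectedness enters.} Next I use the hypothesis that $U$ is nonexpected to get $t_R\le q-1$ for every $R$. A polynomial template $G$ of degree $k$ with $T(G)=\F_q$ would force $U=G$ as canonical representatives, so $\deg U=k$, a contradiction. A rational template is undefined at its pole, so it has $t_R\le q-1$ automatically.

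\textbf{Optimization and the expected obstacle.} It remains to show $\sum_R\binom{t_R}{w}\le\binom{q-1}{w}$ under these constraints:
\begin{itemize}
\item $t_R\le q-1$ for every $R$;
\item the $T(R)$ pairwise meet in at most $k+1$ points;
\item the sets $T(R)$ cover each $(k+2)$-subset exactly once.
\end{itemize}
This optimization step is the main obstacle. The naive convexity bound only gives $g_w\le\frac{q}{q-k-2}\binom{q-1}{w}$. That bound uses only that $\binom tw/\binom t{k+2}$ is increasing in $t$ together with the normalization, and it loses a factor $q/(q-k-2)$.

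To remove this loss, I would work pointwise. Fix a point $x$. The templates whose agreement sets contain $x$ have sets $T(R)\setminus\{x\}$ that partition the $(k+1)$-subsets of $\F_q\setminus\{x\}$. Counting pairs $(W,x)$ with $x\notin W$ and $W\cup\{x\}$ not good, I aim to show that for each good $W$ with template $R$ and each $x\notin T(R)$, the set $W\cup\{x\}$ is bad.

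Combined with the recursion from extending good $(w-1)$-sets, this gives
\[
w\,g_w\le(q-w)\,g_{w-1}.
\]
This holds because a good $(w-1)$-set with template $R$ has exactly $t_R-(w-1)\le q-w$ good one-point extensions. Iterating alone yields $g_w/\binom qw\le(q-w)/(q-k-1)$. So I would seed the recursion more sharply, by proving the base inequality $g_{k+3}\le\binom{q-1}{k+3}$ directly. The extremal configuration, one template with $t_R=q-1$ and all others having $t_R=k+2$, suggests this bound. I expect this step to require an exchange argument. The idea is that enlarging one $T(R)$ forces the $(k+2)$-sets through the points outside $T(R)$ onto templates with $t_R=k+2$, which contribute nothing to $g_w$ for $w\ge k+3$.
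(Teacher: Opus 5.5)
Your reduction to $g_w\le\binom{q-1}{w}$ is correct, and so is the decomposition $g_w=\sum_R\binom{t_R}{w}$. So is the use of nonexpectedness to get $t_R\le q-1$, which says exactly that $\F_q$ itself is bad. The gap is the last step: the inequality $\sum_R\binom{t_R}{w}\le\binom{q-1}{w}$ is never proved.

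Your upward recursion $w\,g_w\le(q-w)\,g_{w-1}$ is valid. Started from $g_{k+2}=\binom q{k+2}$, however, it only gives $g_w/\binom qw\le(q-w)/(q-k-2)$, not $(q-k-1)$ as you wrote; this matches your convexity bound. The loss occurs at the first step. There you allow every $(k+2)$-set up to $q-k-3$ good extensions, whereas in your extremal configuration the $(k+2)$-sets through the excluded point have none. Your proposed repair is to prove $g_{k+3}\le\binom{q-1}{k+3}$ by an unspecified exchange argument. That is not a reduction: it is precisely the case $w=k+3$ of the lemma, and nothing in the proposal establishes it.

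The missing idea is to run the induction downward from $w=q$, where you already know $b_q=1$. Turn your observation around (if $W$ is good with template $R$ and $x\notin T(R)$, then $W\cup\{x\}$ is bad). A bad $(w+1)$-set contains at most one good $w$-subset. Indeed, two of its $w$-subsets share $w-1\ge k+2$ points. If both were good, Lemma~\ref{lem:template-intersection} would make their templates coincide, and that template would then agree with $U$ on the whole $(w+1)$-set.

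Hence every bad $(w+1)$-set contains at least $w$ bad $w$-subsets, while every bad $w$-set lies in at most $q-w$ sets of size $w+1$. Double counting gives $w\,b_{w+1}\binom q{w+1}\le(q-w)\,b_w\binom qw$, that is, $b_w\ge\frac{w}{w+1}b_{w+1}$ for $k+3\le w<q$. Telescoping from $b_q=1$ yields $b_w\ge w/q$. This downward recursion is tight at every step in your extremal configuration, which is why it succeeds where the upward one loses the factor $q/(q-k-2)$.
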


\begin{proof}
	A bad $(w+1)$-subset contains at most one good
	$w$-subset. Otherwise, two such good subsets would
	intersect in $w-1\ge k+2$ points. By
	Lemma~\ref{lem:template-intersection}, their
	templates would coincide and hence cover their
	union, contradicting badness.
	
	Thus every bad $(w+1)$-subset contains at least
	$w$ bad $w$-subsets. Double-counting these
	inclusions gives
	\[
	w\,b_{w+1}\binom q{w+1}
	\le(q-w)b_w\binom qw,
	\]
	and therefore
	\[
	b_w\ge\frac w{w+1}b_{w+1},
	\qquad k+3\le w<q.
	\]
	
	Since $U$ is nonexpected, the whole field is bad:
	a rational template has a pole, while a polynomial
	template agreeing with $U$ everywhere would force
	$\deg U=k$. Hence $b_q=1$, and iteration yields
	\[
	b_w\ge
	\prod_{t=w}^{q-1}\frac{t}{t+1}\,b_q
	=\frac wq.
	\]
\end{proof}

The preceding lemmas combine the local multiplicity
of quadratic obstructions with the global density
of bad subsets.

\begin{theorem}\label{thm:compatibility-criterion}
	Let $q$ be odd, $3\le k\le(q-1)/2$, and let $U$
	be a nonexpected deep hole of $\RS_q(k)$. Then
	\begin{equation}\label{eq:compatibility-criterion}
		Q(q,k+2)\ge\frac{(q-k-2)(k+3)}q.
	\end{equation}
	If, moreover, every template agrees with $U$
	at no more than $M$ points of its domain, then
	\begin{equation}\label{eq:compatibility-strong}
		Q(q,k+2)\ge q-M.
	\end{equation}
\end{theorem}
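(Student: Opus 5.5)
The plan is to average the pointwise bound of Lemma~\ref{lem:failure-multiplicity} over a uniformly random $(k+2)$-subset $B\subseteq\F_q$, and to identify the average of $N(B)$ with $Q(q,k+2)$.

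\textbf{Step 1: $\mathbb E[N(B)]=Q(q,k+2)$.} By definition, $N(B)$ counts the classes $[L]\notin\C_2$ for which $B$ is independent in $\mathcal H_L$. Linearity of expectation, exactly as in the interpretation following \eqref{eq:Q}, then gives $\mathbb E[N(B)]=Q(q,k+2)$.

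\textbf{Step 2: the strong bound \eqref{eq:compatibility-strong}.} Under the extra hypothesis, $|T_B|\le M$ for every $B$, since $T_B$ is the agreement set of the template $R_B$ within its domain. Lemma~\ref{lem:failure-multiplicity} then gives $N(B)\ge q-M$ pointwise. Averaging over $B$ yields \eqref{eq:compatibility-strong} immediately.

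\textbf{Step 3: rewriting $q-|T_B|$ as a count of bad extensions.} For \eqref{eq:compatibility-criterion} I use that $B\subseteq T_B$, so $q-|T_B|=\#\{z\in\F_q\setminus B: z\notin T_B\}$. I claim that for $z\notin B$ we have $z\notin T_B$ if and only if $B\cup\{z\}$ is bad.
\begin{itemize}
\item If $z\notin T_B$, the proof of Lemma~\ref{lem:failure-multiplicity} already shows that $B\cup\{z\}$ is bad.
\item If $z\in T_B$, then $R_B$ is defined at $z$ and agrees with $U$ on $B\cup\{z\}$, so $B\cup\{z\}$ is good.
\end{itemize}
Hence
\[
Q(q,k+2)\ge\mathbb E\bigl[\#\{z\notin B: B\cup\{z\}\text{ bad}\}\bigr].
\]

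\textbf{Step 4: double counting.} The right-hand side equals $(q-k-2)\Pr(B\cup\{z\}\text{ bad})$, where $z$ is uniform in $\F_q\setminus B$. The pair $(B,z)$ is then uniform among pairs consisting of a $(k+2)$-set and an outside point. Consequently $B\cup\{z\}$ is a uniformly random $(k+3)$-subset. The probability is therefore $b_{k+3}$, in the notation of Lemma~\ref{lem:bad-set-density}.

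\textbf{Step 5: density bound.} Since $U$ is nonexpected and $k+3\le q$, Lemma~\ref{lem:bad-set-density} with $w=k+3$ gives $b_{k+3}\ge(k+3)/q$. Combining with Step 4 yields \eqref{eq:compatibility-criterion}.

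The only point needing care is the equivalence in Step 3, in particular the case where $z$ is the pole of $R_B$. That case is already covered by the argument in Lemma~\ref{lem:failure-multiplicity}, since the pole is never in $T_B$. So I expect no substantive obstacle; the rest is bookkeeping of uniform distributions.
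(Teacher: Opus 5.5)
Your proposal is correct and follows essentially the same route as the paper. You average Lemma~\ref{lem:failure-multiplicity} over $B$ and identify $\mathbb E[N(B)]$ with $Q(q,k+2)$; you then rewrite $q-|T_B|$ as the number of bad one-point extensions and double count to obtain $(q-k-2)\,b_{k+3}$, and finish with Lemma~\ref{lem:bad-set-density}. The strong bound comes from the same pointwise averaging, and your probabilistic phrasing of the double count just restates the paper's computation $(r+1)\binom{q}{r+1}/\binom{q}{r}=q-r$.
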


\begin{proof}
	Put $r=k+2$. By the proof of
	Lemma~\ref{lem:failure-multiplicity}, for each
	$r$-subset $B$ and $z\notin B$,
	\[
	B\cup\{z\}\text{ is bad}
	\iff z\notin T_B.
	\]
	Every bad $(r+1)$-subset arises from exactly
	$r+1$ such pairs $(B,z)$. Double-counting gives
	\begin{align}
		\frac1{\binom qr}\sum_{|B|=r}(q-|T_B|)
		&=\frac{(r+1)\binom q{r+1}}{\binom qr}
		b_{r+1}\notag\\
		&=(q-r)b_{r+1}.
		\label{eq:failure-average}
	\end{align}
	
	By the definition of $Q$ in \eqref{eq:Q},
	\[
	Q(q,r)=\frac1{\binom qr}\sum_{|B|=r}N(B).
	\]
	Averaging \eqref{eq:failure-multiplicity} and
	applying \eqref{eq:failure-average} and
	Lemma~\ref{lem:bad-set-density}, we obtain
	\[
	Q(q,r)\ge(q-r)b_{r+1}
	\ge\frac{(q-r)(r+1)}q,
	\]
	which proves \eqref{eq:compatibility-criterion}.
	If $|T_B|\le M$ for every $B$, then
	\[
	Q(q,r)\ge
	\frac1{\binom qr}\sum_{|B|=r}(q-|T_B|)
	\ge q-M,
	\]
	proving \eqref{eq:compatibility-strong}.
\end{proof}

We apply this criterion to odd fields of order at least $41$.
The exact expression \eqref{eq:Q} admits the required uniform
bound, proved by a recurrence in $q$ in Appendix~\ref{app:quadratic}.

\begin{lemma}\label{lem:Q-cover-bound}
For every odd integer $q\ge41$, put $w_0=\ceil{q/3}+3$. Then
\begin{equation}\label{eq:uniform-Q41}
 Q(q,w_0)<9,
\end{equation}
and for $w_0\le r\le(q+3)/2$,
\begin{equation}\label{eq:uniform-density41}
 \frac{(q-r)(r+1)}q>10.
\end{equation}
\end{lemma}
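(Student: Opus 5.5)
Lemma~\ref{lem:Q-cover-bound} makes two independent claims; I treat them separately.

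The density bound \eqref{eq:uniform-density41} is elementary. The function $r\mapsto(q-r)(r+1)$ is a concave quadratic, so on the interval $w_0\le r\le(q+3)/2$ its minimum is attained at an endpoint. At $r=w_0\le q/3+11/3$ we get roughly $(2q/3-4)(q/3+4)/q$. At $r=(q+3)/2$ we get $(q-3)(q+5)/(4q)$. Both exceed $10$ once $q\ge41$; for instance, at $q=41$ the right endpoint gives $38\cdot46/164\approx10.66$. I will check both endpoints explicitly, using $w_0\ge q/3+3$ and $w_0\le q/3+11/3$. If a small $q$ such as $41$ or $43$ turns out tight, I will handle it by direct computation.

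The main task is \eqref{eq:uniform-Q41}. I will write $Q$ in a tractable form. Put $n_i=(q-i)/2$ and use $N_1=q(q+1)/2$ and $N_3=q(q-1)/2$. The dominant term is
\[
\frac{N_1\,2^w\binom{n_1}{w}}{\binom qw}\approx \frac{q^2}{2}\prod_{t=0}^{w-1}\frac{q-1-2t}{q-t},
\]
since $2^w\binom{n_1}{w}=\prod_{t=0}^{w-1}(q-1-2t)$, and similarly for $n_3$. Each factor $(q-1-2t)/(q-t)\le 1-t/q$. With $w=w_0\approx q/3$ the product is therefore at most about $\exp(-\sum t/q)\approx e^{-q/18}$, and more precisely it behaves like $\binom{q/2}{w}2^w/\binom qw$, which decays exponentially at rate $c^q$ for some $c<1$. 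So for large $q$ the polynomial factor $q^2$ is killed. The terms with $j\ge1$ isolated vertices are bounded by comparable products with one or two fewer factors, contributing only constant multiples.

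To make this uniform for all odd $q\ge41$, as the paper indicates, I will use a recurrence in $q$. I will define $F(q)=Q(q,\lceil q/3\rceil+3)$ and compare $F(q+6)$ with $F(q)$. Raising $q$ by $6$ raises $w_0$ by exactly $2$, so the residue class of $q$ modulo $6$ is preserved. I aim to show a ratio bound $F(q+6)/F(q)\le\rho<1$, by bounding each ratio of terms via telescoping the explicit products. This reduces the claim to the three base cases $q=41,43,45$ (odd residues mod $6$), where I will evaluate $Q$ exactly from \eqref{eq:Ai} and \eqref{eq:Q}. It may be safer to verify a slightly larger initial segment numerically, say up to $q\approx 80$, and run the ratio argument beyond it, where the ratio bound has more slack.

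The main obstacle is making the ratio estimate rigorous and clean. The three $j$-terms in $A_3$ have different shapes, and the factor $(q+6)^2/q^2$ must be absorbed. I will first bound $Q\le q^2\cdot C\cdot\prod_{t<w}(q-1-2t)/(q-t)\cdot(\text{explicit rational correction})$ with a single product. I will then show that this product shrinks by a factor of at most about $e^{-1/3}$ per step $q\to q+6$, which comfortably beats $(1+6/q)^2\le(47/41)^2$.
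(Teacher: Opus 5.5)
\textbf{Verdict: there is a genuine gap.} Your density argument for \eqref{eq:uniform-density41} is the paper's (concavity, both endpoints, worst case $q=41$). Your plan for \eqref{eq:uniform-Q41} also has the paper's architecture: the shift $(q,w)\mapsto(q+6,w+2)$, one induction per odd residue class mod $6$, and base cases $q=41,43,45$. The gap is that the one inequality carrying all the content, $Q(q+6,w_0+2)<Q(q,w_0)$, is never established. Neither mechanism you sketch reduces the lemma to those three base cases.

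\emph{Termwise ratio bounds fail at the first step.} The $j=0$ summand of $N_3A_3$, namely $N_3\,2^w\binom{(q-3)/2}{w}/\binom qw$, is multiplied by $1820/1763>1$ in passing from $(q,w)=(41,17)$ to $(47,19)$. It is multiplied by $10556/10105>1$ from $(43,18)$ to $(49,20)$. The reason is that at $q=41,43$ the value $w_0$ is only two below $(q-3)/2$, so $\binom{19}{17}\to\binom{22}{19}$ grows about ninefold. Hence no termwise $\rho<1$ exists from these bases.

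\emph{The majorant route is logically incomplete.} If $F\le\tilde F$ and $\tilde F$ decreases, you must verify $\tilde F<9$ at the base; exact values of $F$ there prove nothing. The room is small, since $Q(45,18)=476872704/58642669\approx 8.13$. Also, the summand you call dominant is only about one eighth of $Q(41,17)$, because the $j\ge1$ summands dominate near the base. Write $\Pi(q,w)=\prod_{t<w}(q-1-2t)/(q-t)$ and use your normalization $q^2C\,\Pi$. A $q$-independent $C$ must then be at least $Q(41,17)/(41^2\Pi(41,17))\approx 4.16$, and with that $C$ the majorant at $q=45$ is about $9.42>9$.

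The paper can start at $41,43,45$ because it never separates the summands. From $A_i(q,w)=[t^w](1+t)^i(1+2t)^{(q-i)/2}$, all six combine into one coefficient extraction:
\[
Q(q,w)=\frac{q}{2\binom qw}[t^w](1+2t)^h\bigl(2q+6qt+(5q-1)t^2+(q-1)t^3\bigr),\qquad h=\frac{q-3}2 .
\]
Rewriting the four resulting binomials in terms of $\binom h{w-3}$ gives the exact closed form \eqref{eq:Q-closed}, with polynomial factor $(q-1)(q-w+1)P(q,w)$. The shift ratio is then an exact rational function. At $w=w_0$ it becomes the explicit $R_1,R_3,R_5$, which are below $1$ from $q=43,45,41$ respectively. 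The cancellation among summands is exactly what a termwise or lossy bound throws away.

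To finish your version you need one of two things. The first is this exact combination. The second is an explicitly verified enlarged base (at least through $q=47$ and $q=49$), together with proofs of all termwise ratio inequalities, six per residue class, for every larger $q$. Any majorant you use instead must itself be checked below $9$ at its base points.
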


\begin{theorem}\label{thm:odd-large}
Let $q\ge41$ be an odd prime power and $3\le k\le(q-1)/2$. Every deep hole of $\RS_q(k)$ has interpolation degree $k$.
\end{theorem}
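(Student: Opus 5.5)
The proof splits into two ranges of $k$. Let $U$ be the canonical representative of a deep hole of $\RS_q(k)$, with $q\ge41$ odd and $3\le k\le(q-1)/2$. Suppose first that $q>3(k-1)$. Then Theorem~\ref{thm:odd-low} applies directly and gives $\deg U=k$. It remains to treat the range $q\le3(k-1)$. Here we argue by contradiction, assuming that $U$ is a nonexpected deep hole, so that Theorem~\ref{thm:compatibility-criterion} is available.

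In this range, $q\le 3(k-1)$ means $k\ge q/3+1$, and since $k$ is an integer, $k\ge\ceil{q/3}+1$. Put $r=k+2$. Then
\[
r\ge\ceil{q/3}+3=w_0,
\]
with $w_0$ as in Lemma~\ref{lem:Q-cover-bound}. From $k\le(q-1)/2$ we also get $r\le(q+3)/2$. So $r$ lies in the interval $w_0\le r\le(q+3)/2$ covered by Lemma~\ref{lem:Q-cover-bound}.

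We now compare two bounds on $Q(q,r)$.
\begin{enumerate}
\item Upper bound. By Lemma~\ref{lem:Q-monotone}, $Q(q,\cdot)$ is nonincreasing, so \eqref{eq:uniform-Q41} gives $Q(q,k+2)\le Q(q,w_0)<9$.
\item Lower bound. By \eqref{eq:compatibility-criterion} and then \eqref{eq:uniform-density41},
\[
Q(q,k+2)\ge\frac{(q-k-2)(k+3)}q=\frac{(q-r)(r+1)}q>10.
\]
\end{enumerate}
These are incompatible, so no nonexpected deep hole exists. Since polynomials of degree less than $k$ represent codewords, $\deg U=k$.

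The substantive work lies in the two ingredients already proved: the noncollision and density argument behind Theorem~\ref{thm:compatibility-criterion}, and the numerical estimate of Lemma~\ref{lem:Q-cover-bound} (deferred to Appendix~\ref{app:quadratic}). The only step needing care here is the integer bookkeeping that places $r=k+2$ in $[w_0,(q+3)/2]$: $q\le3(k-1)$ must give $k+2\ge\ceil{q/3}+3$, and the two ranges must together cover every $3\le k\le(q-1)/2$. Both checks are immediate from the inequalities above.
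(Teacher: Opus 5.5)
Your proof is correct and follows the paper's argument. You split at $q>3(k-1)$, which is equivalent to the paper's $k\le\lceil q/3\rceil$, and apply Theorem~\ref{thm:odd-low} there; otherwise you place $r=k+2$ in $[w_0,(q+3)/2]$ and set the monotone upper bound $Q(q,r)\le Q(q,w_0)<9$ against the lower bound $(q-r)(r+1)/q>10$ from Theorem~\ref{thm:compatibility-criterion}, with the integer bookkeeping done correctly.
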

\begin{proof}
For $k\le\ceil{q/3}$ we have $q>3(k-1)$, so Theorem~\ref{thm:odd-low} applies. Otherwise put $r=k+2$ and $w_0=\ceil{q/3}+3$. Then $w_0\le r\le(q+3)/2$, and Lemma~\ref{lem:Q-monotone} yields
\[
 Q(q,r)\le Q(q,w_0)<9.
\]
But a nonexpected deep hole would require
\[
 Q(q,r)\ge\frac{(q-r)(r+1)}q>10
\]
by Theorem~\ref{thm:compatibility-criterion} and Lemma~\ref{lem:Q-cover-bound}, a contradiction.
\end{proof}

\section{Completion of the Deep-Hole Classification}\label{sec:completion}
In this section, we settle the remaining odd extension-field cases
$q\in\{9,25,27\}$ and complete the proof of
Theorem~\ref{thm:main}.

For $(q,k)=(27,10)$, we use the following consequence of a
restricted-sumset theorem for finite abelian groups.

\begin{lemma}[Restricted sums and products over $\F_{27}$]\label{lem:four27}
Let $A\subseteq\F_{27}$ have size $14$. Every element of
$\F_{27}$ is a sum of four distinct elements of $A$, and also
of ten distinct elements of $A$. If $A\subseteq\F_{27}^\times$,
every nonzero field element is a product of four distinct
elements of $A$, and also of ten distinct elements of $A$.
\end{lemma}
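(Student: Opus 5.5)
The additive statements will follow from a restricted-sumset theorem for elementary abelian $p$-groups, and the multiplicative ones from its analogue for cyclic groups via a discrete logarithm. The underlying group facts are the Dias da Silva--Hamidoune theorem (the Erd\H{o}s--Heilbronn conjecture) and its extensions to arbitrary finite abelian groups, for instance the bound $|h^\wedge A|\ge\min\{p(G),\,h|A|-h^2+1\}$ when the smallest prime divisor $p(G)$ is large enough. It is not applicable verbatim here, since $p(G)=3$ for $(\F_{27},+)$ and $p(G)=2$ for $\F_{27}^\times\cong\mathbb Z/26$. The cleanest route is therefore a complementation trick together with a direct structural argument.

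\emph{Reduction by complementation.} Sums of ten distinct elements of $A$ are exactly $\sigma(A)-s$, where $\sigma(A)=\sum_{a\in A}a$ and $s$ runs over sums of four distinct elements of $A$. Hence $10^\wedge A=\sigma(A)-4^\wedge A$, and it suffices to show that $4^\wedge A=\F_{27}$. The same holds multiplicatively: products of ten distinct elements are $\pi(A)\cdot(\text{product of four})^{-1}$. So we need to prove only that $4^\wedge A$ is everything, in each group.

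\emph{Additive case.} Since $|A|=14>27/2$, we have $A+A=\F_{27}$ by pigeonhole: $x-A$ and $A$ must meet. To get restricted sums, I would write $4^\wedge A\supseteq (a_1+a_2)+2^\wedge(A\setminus\{a_1,a_2\})$ and bound $|2^\wedge A'|$ for $|A'|=12$. A Kneser-type argument for restricted sums in $\mathbb Z_3^3$ gives $|2^\wedge A'|\ge \min(27,\,2|A'|-3)=21$, because the only stabilizers available are subgroups of order $3$ or $9$, and a coset-union obstruction can be ruled out by counting. The set $4^\wedge A$ then contains $2^\wedge B+2^\wedge C$ for a partition $A=B\sqcup C$ into two $7$-sets, each with $|2^\wedge|\ge 11$. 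Two sets of size $\ge 14$ in a group of order $27$ sum to the whole group. So I would choose $B$ and $C$ so that $|2^\wedge B|,|2^\wedge C|\ge14$. For $7$-sets in $\mathbb Z_3^3$, $|2^\wedge B|\ge 14$ fails only when $B$ is concentrated in a coset of a subgroup of order $9$. Since $14>9$, $A$ is not contained in one such coset, so the partition can be chosen to spread across cosets.

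\emph{Multiplicative case.} Here $A\subseteq\F_{27}^\times\cong\mathbb Z/26$ with $|A|=14$. The same split applies: partition $A$ into $B\sqcup C$ with $|B|=|C|=7$ and aim for $|2^\wedge B|,|2^\wedge C|\ge14$ in $\mathbb Z/26$, so that $|2^\wedge B|+|2^\wedge C|>26$ forces the product set to cover the group. The obstruction now comes from the subgroup $\{\pm1\}$ and the index-$2$ subgroup of squares (order $13$), since restricted doubling in $\mathbb Z/13$ is $\ge\min(13,2\cdot7-3)=11$. I would handle the parity issue by placing elements of both square classes into each of $B$ and $C$. Because $|A|=14>13$, $A$ contains a nonsquare and also elements outside any coset of order $13$. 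I would then run a case check on how $A$ meets the two cosets of the squares, using Dias da Silva--Hamidoune in $\mathbb Z/13$ within each coset.

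\emph{Main obstacle.} The hardest step is the unbalanced configurations in which $A$ is heavily concentrated in one coset of an index-$2$ or index-$3$ subgroup, where the partition trick gives barely insufficient bounds. If the case analysis grows unwieldy, the fallback is a finite verification: up to automorphisms (translations, $\mathrm{GL}_3(\F_3)$, and the Frobenius action) the number of $14$-subsets is manageable, but I would prefer the structural argument.
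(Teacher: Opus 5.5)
Your complementation step is correct and matches how the paper passes from four to ten elements. For the four-element statement, though, what you have is a plan rather than a proof. The splitting strategy $4^{\wedge}A\supseteq 2^{\wedge}B+2^{\wedge}C$ for a partition $A=B\sqcup C$ into $7$-sets is a legitimate alternative to the paper's argument, which runs as follows:
\begin{itemize}
\item Guo's theorem gives $|A^{(2)}|\ge|G|-2$ for both groups at once, since $2|A|=|G|+|\{g:2g=0\}|=28$.
\item The count $\mu(g)\le 7$, $\sum_g\mu(g)=91$ yields at least $8$ elements with three disjoint pair representations.
\item Pigeonhole on $\mathcal P$ and $g-A^{(2)}$ finishes.
\end{itemize}
In your version both decisive estimates are left open. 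In the additive case everything rests on the claim that a $7$-subset of $\F_3^3$ not contained in an affine plane has $|2^{\wedge}B|\ge 14$, and you state this without proof. (The ``Kneser-type'' bound $21$ is also unproved, but you never use it.) The claim is in fact true. Establishing it, however, requires a genuine case analysis on how $B$ meets the three cosets $K_0,K_1,K_2$ of an order-$9$ subgroup. That analysis uses $K_i+K_j=K_{i+j}$ and the minimal restricted-doubling sizes $0,1,3,5,7,9$ of $n$-subsets of $\F_3^2$ for $n=1,\dots,6$. You also need to show that a partition exists in which neither half lies in a plane.

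In the multiplicative case the gap is more serious, because the heuristic you propose fails. Putting both square classes into each half does not force $|2^{\wedge}B|\ge 14$. In exponent coordinates $\mathbb Z/26$, the interval $B=\{0,\dots,6\}$ meets both classes, yet $2^{\wedge}B=\{1,\dots,11\}$ has only $11$ elements. For $A=\{0,\dots,13\}$, splitting into consecutive halves gives $2^{\wedge}B+2^{\wedge}C=\{16,\dots,36\}$, which is only $21$ residues.

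The missing idea is a lopsided split with coset-by-coset counting. Suppose $A$ has at least $7$ elements in the subgroup $E\cong\mathbb Z/13$ of even exponents (the odd case is symmetric). Let $B$ consist of six of them and one element of the odd coset $O$.
\begin{itemize}
\item Dias da Silva--Hamidoune in $\mathbb Z/13$ gives $|2^{\wedge}B\cap E|\ge 9$, and Cauchy--Davenport gives $|2^{\wedge}B\cap O|\ge 6$.
\item For $C=A\setminus B$, the larger of $C\cap E$ and $C\cap O$ has at least $4$ elements, so $|2^{\wedge}C\cap E|\ge 5$.
\item Moreover $|2^{\wedge}C\cap O|\ge 6$, unless $C$ lies in a single coset, in which case $|2^{\wedge}C\cap E|\ge 11$.
\end{itemize}
Cauchy--Davenport in each coset then shows that $2^{\wedge}B+2^{\wedge}C$ covers both $E$ and $O$. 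Note that the target $|2^{\wedge}B|,|2^{\wedge}C|\ge 14$ is the wrong criterion here: $C$ may have only $11$ restricted sums. Coverage has to be checked in each coset separately rather than through $|X|+|Y|>|G|$.
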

\begin{proof}
Let $G$ be either $(\F_{27},+)$ or $(\F_{27}^\times,\cdot)$,
using additive notation for both groups in this proof. Their
orders are $27$ and $26$, and the numbers of solutions of
$2g=0$ are $1$ and $2$, respectively. Thus
$2|A|=|G|+|\{g\in G:2g=0\}|=28$ in either case.
By Guo's restricted-sumset theorem~\cite[Theorem~1]{GuoRestricted},
\[
 |A^{(2)}|\ge |G|-2,
 \qquad
 A^{(2)}=\{a+b:a,b\in A,\ a\ne b\}.
\]

For $g\in G$, let $\mu(g)$ count unordered pairs of distinct
elements of $A$ whose sum is $g$. Two distinct pairs with the
same sum cannot share an element, by cancellation. Hence
\[
 \mu(g)\le7,\qquad
 \sum_{g\in G}\mu(g)=\binom{14}{2}=91.
\]
Put $\mathcal P=\{g\in G:\mu(g)\ge3\}$. Then
\[
 91\le2(|G|-|\mathcal P|)+7|\mathcal P|,
\]
so $|\mathcal P|\ge8$. In particular,
\[
 |\mathcal P|+|A^{(2)}|\ge |G|+6>|G|.
\]

For any $g\in G$, the sets $\mathcal P$ and $g-A^{(2)}$
therefore intersect. Choose $p\in\mathcal P$ with
$g-p=a_1+a_2$ for distinct $a_1,a_2\in A$. The element $p$
has at least three pairwise disjoint two-element representations.
At most two of these meet $\{a_1,a_2\}$, so we can choose a representation
$p=a_3+a_4$ with $a_1,a_2,a_3,a_4$ all distinct.
Thus every element of $G$ is a sum of four distinct elements
of $A$.

Taking complements of four-element subsets in $A$, we obtain
\[
 \left\{\sum_{a\in T}a:T\subseteq A,\ |T|=10\right\}
 =\sum_{a\in A}a-G=G.
\]
Interpreting the group law as field addition or multiplication
proves the asserted sum and product statements.
\end{proof}

\begin{lemma}\label{lem:agreement27}
Let $U$ be the canonical representative of a nonexpected deep
hole of $\RS_{27}(10)$. Then every template agrees with $U$
on at most $13$ points of its domain.
\end{lemma}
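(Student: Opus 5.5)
We argue by contradiction. Suppose a template $R$ agrees with $U$ on a set $T\subseteq\dom R$ with $|T|\ge14$, and fix a $14$-subset $T_0\subseteq T$. The plan is to take a suitable $11$-set $V=S\cup\{x\}$, with $S\in\binom{T_0}{10}$ and $x$ a single point outside $T$. We then express $[X^{10}]I_V$ through the template, and use Lemma~\ref{lem:four27} to choose $S$ so that this coefficient vanishes. This contradicts \eqref{eq:delete-roots}, equivalently \eqref{eq:global-ZF-criterion}, which says $\sum_{v\in V}U(v)/P_V'(v)\ne0$ for every $11$-set $V$.

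\emph{Polynomial template.} Let $R=G$ with $\deg G=10$ and leading coefficient $g\ne0$. Since $U$ is nonexpected, $U\ne G$ as functions, so there is some $x\notin T$. Put $e=U(x)-G(x)\ne0$. For $V=S\cup\{x\}$, write $U|_V=G|_V+e\,\mathbf 1_x$. Then \eqref{eq:lagrange} gives
\[
\sum_{v\in V}\frac{U(v)}{P_V'(v)}=g+\frac{e}{P_S(x)}.
\]
Let $A=\{x-a:a\in T_0\}\subseteq\F_{27}^\times$, a set of size $14$. By Lemma~\ref{lem:four27}, the nonzero element $-e/g$ is a product of ten distinct elements of $A$, that is, $-e/g=P_S(x)$ for some $S\in\binom{T_0}{10}$. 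For this $S$ the coefficient above vanishes, which is the desired contradiction.

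\emph{Rational template.} Let $R=f+c/(X-\zeta)$; then $\zeta\notin T$. Take $x=\zeta$, so $V=S\cup\{\zeta\}$, and put $e=U(\zeta)-f(\zeta)$. We compute $\sum_{v\in V}U(v)/P_V'(v)$ piece by piece.
\begin{itemize}
\item The $f$-part over all of $V$ vanishes by \eqref{eq:lagrange}. Hence the total contribution of $f$ together with $U(\zeta)$ is $e/P_V'(\zeta)=e/P_S(\zeta)$.
\item Since $P_V'(s)=(s-\zeta)P_S'(s)$, the pole part equals $c\sum_{s\in S}1/\bigl((s-\zeta)^2P_S'(s)\bigr)$.
\item Differentiating the identity $\sum_{s\in S}1/\bigl((\zeta-s)P_S'(s)\bigr)=1/P_S(\zeta)$ (the Lagrange formula used for \eqref{eq:pole-evaluation}) with respect to $\zeta$ gives $\sum_{s}1/\bigl((\zeta-s)^2P_S'(s)\bigr)=P_S'(\zeta)/P_S(\zeta)^2$.
\end{itemize}
Combining these,
\[
\sum_{v\in V}\frac{U(v)}{P_V'(v)}=\frac1{P_S(\zeta)}\Bigl(e+c\sum_{s\in S}\frac1{\zeta-s}\Bigr).
\]
Now let $A=\{1/(\zeta-a):a\in T_0\}$, a set of $14$ distinct elements. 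By Lemma~\ref{lem:four27}, the element $-e/c$ is a sum of ten distinct elements of $A$. This gives an $S\in\binom{T_0}{10}$ for which the coefficient vanishes, again a contradiction.

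The main obstacle is the rational case: choosing the right extra point ($x=\zeta$) and performing the derivative computation that turns the syndrome into an additive condition $\sum_{s\in S}1/(\zeta-s)$. The polynomial case only needs the multiplicative half of Lemma~\ref{lem:four27}. One point to check is that the rational case works whatever the value of $e$, including $e=0$, which is covered because the additive statement reaches every element of $\F_{27}$.
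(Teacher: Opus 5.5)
Your proposal is correct and follows the paper's argument: you apply Lemma~\ref{lem:four27} multiplicatively to $\{x-a\}$ for a degree-$10$ template and additively to $\{1/(\zeta-a)\}$ for a pole template, with the same extra point ($x$ with $U(x)\ne G(x)$, respectively $\zeta$), which yields the same $10$-subset condition. The only difference is that you certify the contradiction by showing $[X^{10}]I_V=0$ via \eqref{eq:lagrange} and the partial-fraction derivative identity (valid as a formal identity in $\F_q(X)$), whereas the paper explicitly writes down the degree-$<10$ interpolant ($G-cP_T$, respectively $h$) agreeing with $U$ on $11$ points.
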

\begin{proof}
Suppose a polynomial template $G$ of degree $10$ agrees
with $U$ on a $14$-subset $A$. Let $c\ne0$ be the leading
coefficient of $G$. Since $U$ is nonexpected, there exists
$z\notin A$ such that $G(z)\ne U(z)$. The fourteen elements
$z-x$, $x\in A$, are distinct and nonzero. By
Lemma~\ref{lem:four27}, there is a $10$-subset $T\subseteq A$
for which
\[
 P_T(z)=\prod_{x\in T}(z-x)=\frac{G(z)-U(z)}{c}.
\]
Then $G-cP_T$ has degree less than $10$ and agrees with
$U$ on the $11$ points of $T\cup\{z\}$, contrary to
\eqref{eq:deep-hole-criterion}.

Now suppose a pole template
\[
 R(X)=f(X)+\frac{c}{X-\zeta},
 \qquad \deg f<10,\quad c\ne0,
\]
agrees with $U$ on a $14$-subset
$A\subseteq\F_{27}\setminus\{\zeta\}$. Applying
Lemma~\ref{lem:four27} to the distinct nonzero elements
$(\zeta-x)^{-1}$, $x\in A$, yields a $10$-subset $T\subseteq A$
such that
\[
 \sum_{x\in T}\frac1{\zeta-x}
 =\frac{f(\zeta)-U(\zeta)}c.
\]
The expression
\[
 h(X)=f(X)+c\frac{1-P_T(X)/P_T(\zeta)}{X-\zeta}
\]
is a polynomial of degree less than $10$: the numerator
vanishes at $X=\zeta$. Since $P_T(x)=0$ for $x\in T$,
it agrees with $R=U$ on $T$. Moreover,
\[
 \begin{aligned}
 h(\zeta)
 &=f(\zeta)-c\frac{P_T'(\zeta)}{P_T(\zeta)}\\
 &=f(\zeta)-c\sum_{x\in T}\frac1{\zeta-x}
 =U(\zeta).
 \end{aligned}
\]
Thus $h$ agrees with $U$ at the $11$ distinct points of
$T\cup\{\zeta\}$, again contradicting
\eqref{eq:deep-hole-criterion}.
\end{proof}

\begin{corollary}\label{cor:2710}
Every deep hole of $\RS_{27}(10)$ has interpolation degree $10$.
\end{corollary}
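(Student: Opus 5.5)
We argue by contradiction, feeding Lemma~\ref{lem:agreement27} into the strong form \eqref{eq:compatibility-strong} of Theorem~\ref{thm:compatibility-criterion}. Suppose $U$ is the canonical representative of a nonexpected deep hole of $\RS_{27}(10)$. The hypotheses of Theorem~\ref{thm:compatibility-criterion} hold, since $q=27$ is odd and $3\le 10\le 13=(q-1)/2$. Lemma~\ref{lem:agreement27} shows that every template agrees with $U$ on at most $M=13$ points of its domain. Then \eqref{eq:compatibility-strong} gives
\[
Q(27,12)\ge 27-13=14 .
\]

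It remains to show that $Q(27,12)<14$. We evaluate \eqref{eq:Ai} and \eqref{eq:Q} directly with $q=27$ and $w=12$, using
\[
N_1=\tfrac{27\cdot 28}{2}=378,\qquad N_3=\tfrac{27\cdot 26}{2}=351 .
\]
For the matching with one isolated vertex and $13$ edges,
\[
A_1=2^{12}\binom{13}{12}+2^{11}\binom{13}{11}=53248+159744=212992 .
\]
For the matching with three isolated vertices and $12$ edges,
\[
A_3=2^{12}\binom{12}{12}+3\cdot 2^{11}\binom{12}{11}+3\cdot 2^{10}\binom{12}{10}+2^{9}\binom{12}{9}
=4096+73728+202752+112640=393216 .
\]
The numerator is then
\[
378\cdot 212992+351\cdot 393216=80510976+138018816=218529792 .
\]
The denominator is $\binom{27}{12}=17383860$, so
\[
Q(27,12)=\frac{218529792}{17383860}\approx 12.6<14 .
\]
This contradicts the lower bound, so no nonexpected deep hole exists.

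Finally, every deep hole has interpolation degree at least $10=k$, because polynomials of degree less than $k$ represent codewords. Hence every deep hole has interpolation degree exactly $10$. The only delicate step is this numerical comparison. I would check the binomial arithmetic carefully, because the margin between $12.6$ and $14$ is modest. If the weaker density bound \eqref{eq:compatibility-criterion} were used instead of \eqref{eq:compatibility-strong}, it would give only $(15\cdot 13)/27\approx 7.2$, which is not enough. That is why the argument relies on Lemma~\ref{lem:agreement27}.
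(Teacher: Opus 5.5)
Your proposal is correct and follows the paper's proof exactly: Lemma~\ref{lem:agreement27} with $M=13$ fed into \eqref{eq:compatibility-strong} gives $Q(27,12)\ge14$. This is contradicted by the exact value from \eqref{eq:Q}, and your arithmetic reproduces it: $218529792/17383860$ reduces to the paper's $24576/1955\approx12.57$.
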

\begin{proof}
Otherwise Lemma~\ref{lem:agreement27} and
\eqref{eq:compatibility-strong} imply $Q(27,12)\ge14$, whereas
\eqref{eq:Q} yields
\[
 Q(27,12)=\frac{24576}{1955}<14,
\]
a contradiction.
\end{proof}

We now combine the preceding results to treat the remaining
small extension fields.

\begin{theorem}\label{thm:small-completion}
For $q\in\{9,25,27\}$ and $3\le k\le(q-1)/2$, every
deep hole of $\RS_q(k)$ has interpolation degree $k$.
\end{theorem}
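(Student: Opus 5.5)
We go through the finitely many pairs $(q,k)$ and remove all but a few of them with earlier results. Theorem~\ref{thm:odd-low} handles every pair with $q>3(k-1)$, that is $k\le\lceil q/3\rceil$. Kaipa's result handles $k\ge\lfloor(q-1)/2\rfloor$, but we only use it when it is actually needed, since the statement asks us to cover $3\le k\le(q-1)/2$.

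For $q=9$ we have $k\in\{3,4\}$. The pair $k=3$ satisfies $9>6$, so Theorem~\ref{thm:odd-low} applies, and $k=4=(q-1)/2$ falls under Kaipa. For $q=25$, Theorem~\ref{thm:odd-low} covers $k\le 9$ and Kaipa covers $k=12$, which leaves $k\in\{10,11\}$. For $q=27$, Theorem~\ref{thm:odd-low} covers $k\le 9$, Corollary~\ref{cor:2710} covers $k=10$, and Kaipa covers $k=13$, which leaves $k\in\{11,12\}$.

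For each remaining pair we first try the general criterion \eqref{eq:compatibility-criterion}. We compute $Q(q,k+2)$ exactly from \eqref{eq:Ai} and \eqref{eq:Q} and compare it with $(q-k-2)(k+3)/q$. For larger $k$ the independent-set counts $A_i(q,k+2)$ collapse quickly, because a $(k+2)$-subset must pick at most one endpoint from each of roughly $q/2$ edges. When $k+2>(q+3)/2$ both $A_1$ and $A_3$ vanish, so $Q=0$. This already settles $q=27$, $k=12$, since $14\le 15$ still needs a check but $\binom{13}{14}$-type terms vanish and only $j\ge 1$ terms survive. It also settles $q=25$, $k=11$, where $w=13$ and $(q-i)/2\le 12$ forces $j\ge1$.

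When the weak bound is not strong enough (we expect this around $q=25$, $k=10$ and $q=27$, $k=11$), we instead prove an agreement bound $|T_B|\le M$ and use \eqref{eq:compatibility-strong}, following Lemma~\ref{lem:agreement27}. Suppose a template of degree $k$, or a pole template, agrees with $U$ on a set $A$ of size $M+1$. We then want every target value in $\F_q$ (respectively in $\F_q^\times$) to be realized as a sum (respectively product) of $k$ distinct elements drawn from a set of $M+1$ distinct elements. Such a realization produces $k+1$ agreement points with a polynomial of degree less than $k$, which contradicts \eqref{eq:deep-hole-criterion}.

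By complementation this reduces to subset sums of size $M+1-k$. When $M+1-k\le 2$, the restricted sumset bounds ($|A^{(2)}|\ge|G|-2$ from Guo, or Erd\H{o}s--Heilbronn/Dias da Silva--Hamidoune in the prime-order additive case) combined with a pigeonhole argument like the one in Lemma~\ref{lem:four27} should give full coverage once $|A|$ is slightly above $|G|/2$. Here $\F_{25}^\times$ is cyclic of order $24$ and $(\F_{25},+)\cong\mathbb Z_5^2$.

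The main obstacle is choosing $M$ so that two requirements hold together. First, the covering statement must be provable by hand for these specific groups, including the elementary abelian $\mathbb Z_5^2$, where Erd\H{o}s--Heilbronn is not directly available and we must rely on Guo's theorem or an ad hoc count. Second, $Q(q,k+2)<q-M$ must hold numerically. If the covering fails for the smallest useful $M$, the fallback is to combine the monotonicity of Lemma~\ref{lem:Q-monotone} with the density bound of Lemma~\ref{lem:bad-set-density} at a larger $w$.
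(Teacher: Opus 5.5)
Your reduction is the same as the paper's: Theorem~\ref{thm:odd-low} for $k\le\lceil q/3\rceil$, Kaipa for $k\ge\lfloor (q-1)/2\rfloor$, and Corollary~\ref{cor:2710} for $(27,10)$. That leaves $(25,10),(25,11),(27,11),(27,12)$, and testing these against \eqref{eq:compatibility-criterion} is the right plan. But you never carry out that test, and what you say about its outcome is partly wrong.

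Your vanishing observation is true but never applies here. Since $k\le(q-1)/2$, every remaining case has $k+2\le(q+3)/2$. For example, $A_1(27,14)=2^{13}$ and $A_3(27,14)=36864$, so $Q(27,14)\neq 0$. The cases $(27,12)$ and $(25,11)$ are not ``already settled''; they close only once you compute $Q(27,14)\approx0.80<65/9$ and $Q(25,13)\approx1.26<168/25$.

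More seriously, you predict that the weak criterion fails for $(25,10)$ and $(27,11)$. You then send those cases to an agreement-bound argument whose covering statements you do not prove: sums of distinct elements in $(\F_{25},+)\cong\mathbb Z_5^2$, products in $\F_{25}^\times$, with an unspecified $M$. As written, those two cases remain open.

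The prediction is mistaken, and the fix is just the computation you proposed. From \eqref{eq:Ai} and \eqref{eq:Q}, $Q(25,12)=42496/7429<6$, while $(25-r)(r+1)/25\ge168/25>6$ for $r\in\{12,13\}$. Likewise $Q(27,13)=142848/37145<4$, while $(27-r)(r+1)/27\ge65/9>4$ for $r\in\{13,14\}$. By Lemma~\ref{lem:Q-monotone}, $Q(25,13)\le Q(25,12)$ and $Q(27,14)\le Q(27,13)$. So \eqref{eq:compatibility-criterion} excludes nonexpected deep holes in all four cases, which is exactly the paper's argument.

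The strong criterion \eqref{eq:compatibility-strong}, with its restricted sum/product covering, is needed only for $(27,10)$. There $Q(27,12)=24576/1955\approx12.6$ exceeds $195/27$, and you already take that case from Corollary~\ref{cor:2710}. The fallback machinery should be dropped and replaced by these explicit comparisons.
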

\begin{proof}
	Theorem~\ref{thm:odd-low} covers
	$k\le\ceil{q/3}$, while
	Kaipa~\cite{Kaipa} covers
	$k\ge(q-1)/2$.
	In particular, no case remains for $q=9$. The only
	remaining cases are
	\[
	(q,k)\in
	\{(25,10),(25,11),(27,10),(27,11),(27,12)\}.
	\]
	
	For $q=25$, put $r=k+2\in\{12,13\}$.
	By the monotonicity of $Q$,
	\[
	Q(25,r)\le Q(25,12)
	=\frac{42496}{7429}<6,
	\]
	whereas
	\[
	\frac{(25-r)(r+1)}{25}
	\ge\frac{168}{25}>6.
	\]
	This contradicts
	\eqref{eq:compatibility-criterion}
	for a nonexpected deep hole.
	
	For $q=27$, Corollary~\ref{cor:2710}
	settles $k=10$.
	For $k\in\{11,12\}$, put
	$r=k+2\in\{13,14\}$. Then
	\[
	Q(27,r)\le Q(27,13)
	=\frac{142848}{37145}<4,
	\]
	while
	\[
	\frac{(27-r)(r+1)}{27}
	\ge\frac{65}{9}>4.
	\]
	Again \eqref{eq:compatibility-criterion}
	excludes nonexpected deep holes.
\end{proof}

\begin{proof}[Proof of Theorem~\ref{thm:main}]
The root bound shows that every degree-$k$ canonical polynomial
represents a deep hole. Conversely, Theorem~\ref{thm:even} covers
even $q$, and Theorem~\ref{thm:odd-large} covers odd $q\ge41$.
The odd prime fields are settled by Zhuang, Cheng, and
Li~\cite{ZCL}. The only odd proper extension fields below $41$
are $9,25,27$, which are covered by
Theorem~\ref{thm:small-completion}.
\end{proof}

\section{Concluding Remarks}
In this paper, we have classified the deep holes of full-length Reed--Solomon codes
for $3\le k\le\lfloor q/2\rfloor$ over every finite field. Together
with the earlier high-dimensional classifications, this proves the
corrected Cheng--Murray conjecture.
The proof uses the full evaluation set in two complementary ways.
The marked point determines a boundary section of the tangent
envelope, and its first transverse coefficient constrains the
interpolation polynomial. Beyond the range of these differential
constraints, overlapping local syndromes still impose restrictions:
distinct failed template extensions correspond to distinct quadratic
obstructions. Compatibility bounds and exact matching counts rule out nonexpected deep holes in the remaining range, while a restricted sum--product argument handles the last small-field case.

A natural next step concerns the primitive Reed--Solomon code
$\RS(\F_q^\times,k)$, evaluated at the nonzero field elements.
For $2\le k\le q-3$, Wu and Hong~\cite{WuHong} showed that,
besides the expected degree-$k$ deep holes, the words
represented by $aX^{q-2}+f(X)$, where $a\in\F_q^\times$
and $\deg f<k$, are also deep holes. They conjectured
that these two families are exhaustive when $q$ is odd.
The analogous classification fails in even characteristic:
for $q\ge8$, restricting the $k=2$ example
$U(X)=X^{q/2}$ from Section~\ref{sec:interpolation}
to $\F_q^\times$ gives an additional deep hole, while
for $k=q-4$ the words represented by $aX^{q-3}+f(X)$
form another family~\cite{ZFL,Kaipa}. A complete classification
in all characteristics remains a natural problem.
The tangent-envelope and local-syndrome methods developed
here may offer a route toward its resolution.

More generally, let $D\subseteq\F_q$ have size $n$, and write
$\RS(D,k)$ for the corresponding evaluation code.
By the equivalence between deep holes and one-coordinate
MDS extensions of the dual code~\cite[Proposition~1]{Kaipa},
the MDS-extension theorem of Seroussi and Roth~\cite{SR}
implies that, if $2\le n-k$ and
$k\ge\lfloor(q-1)/2\rfloor$, excluding $n-k=3$
when $q$ is even, every deep hole of $\RS(D,k)$
is represented on $D$ by one of the functions
\[
aX^k+f(X),\qquad \frac{a}{X-\beta}+f(X),
\]
where $a\in\F_q^\times$, $\deg f<k$, and
$\beta\in\F_q\setminus D$ in the second case.
Classifying deep holes for general $D$ when
$k<\lfloor(q-1)/2\rfloor$ remains an interesting problem,
for which the structure of the evaluation set may play
an essential role.

Another natural problem is to classify the deep holes of
projective Reed--Solomon codes in general dimensions.
These codes have length $q+1$, with evaluation at infinity
defined as the coefficient of $X^{k-1}$. Their covering-radius
and deep-hole problems are closely related to normal
rational curves. Although the deep holes are classified
for redundancy at most four~\cite{Kaipa,ZhangWanKaipaPRS},
a complete classification in general dimensions remains open.
It would be interesting to investigate whether the
geometric and local-syndrome methods used here extend to
the projective setting.

\appendix
\section{A Uniform Bound for Quadratic Matching Counts}\label{app:quadratic}
\begin{proof}[Proof of Lemma~\ref{lem:Q-cover-bound}]
Let $q\ge41$ be an odd integer and put
$w_0(q)=\ceil{q/3}+3$. We first derive a closed form for
$Q(q,w)$ that is suitable for comparing successive values of $q$.

For $i\in\{1,3\}$, the graph counted by $A_i(q,w)$ has
$i$ isolated vertices and $(q-i)/2$ disjoint edges.
An isolated vertex contributes a factor $1+t$, whereas
an edge contributes $1+2t$ to the independence polynomial.
Thus, by \eqref{eq:Ai},
\[
A_i(q,w)=[t^w](1+t)^i(1+2t)^{(q-i)/2},
\]
where $[t^w]$ denotes the coefficient of $t^w$.
Setting $h=(q-3)/2$, and using
$N_1=q(q+1)/2$ and $N_3=q(q-1)/2$ in
\eqref{eq:Q}, we obtain
\begin{equation}\label{eq:Q-generating}
\begin{aligned}
Q(q,w)=\frac{q}{2\binom qw}[t^w](1+2t)^h
\bigl(2q+6qt+(5q-1)t^2+(q-1)t^3\bigr).
\end{aligned}
\end{equation}

Assume that
$3\le w\le(q+3)/2$ and put $d=h-w+3$.
By the binomial theorem, $[t^m](1+2t)^h=2^m\binom hm,$
where $\binom hm=0$ for $m>h$. Hence
\[
\begin{aligned}
&[t^w](1+2t)^h
 \bigl(2q+6qt+(5q-1)t^2+(q-1)t^3\bigr)\\
=&2q[t^w](1+2t)^h+6q[t^{w-1}](1+2t)^h +(5q-1)[t^{w-2}](1+2t)^h
 +(q-1)[t^{w-3}](1+2t)^h\\
=&2q\,2^w\binom hw+6q\,2^{w-1}\binom h{w-1} +(5q-1)2^{w-2}\binom h{w-2}
 +(q-1)2^{w-3}\binom h{w-3}\\
=&2^{w-3}\biggl(16q\binom hw+24q\binom h{w-1}+2(5q-1)\binom h{w-2}
 +(q-1)\binom h{w-3}\biggr).
\end{aligned}
\]
The four binomial coefficients can be expressed relative
to $\binom h{w-3}$ using
\[\frac{\binom h{w-2}}{\binom h{w-3}}
 =\frac{d}{w-2}, \qquad
\frac{\binom h{w-1}}{\binom h{w-3}}
 =\frac{d(d-1)}{(w-1)(w-2)}, \qquad
\frac{\binom hw}{\binom h{w-3}}
 =\frac{d(d-1)(d-2)}{w(w-1)(w-2)}.
\]
Upon substitution, collecting terms over the denominator
$w(w-1)(w-2)$ and using $d=(q-2w+3)/2$
yields the factor
$(q-1)(q-w+1)P(q,w)$, where
\[
P(q,w)=2q^2-4qw+6q+w^2-w.
\]
Consequently,
\begin{equation}\label{eq:Q-closed}
Q(q,w)=
\frac{2^{w-4}q(q-1)(q-w+1)P(q,w)}{w(w-1)(w-2)}
\frac{\binom{(q-3)/2}{w-3}}{\binom qw}.
\end{equation}

The choice of the shift $(q,w)\mapsto(q+6,w+2)$
is dictated by $w_0(q+6)=w_0(q)+2$.
For this shift, the required binomial ratios are
\[
\begin{aligned}
\frac{\binom{h+3}{w-1}}{\binom h{w-3}}
 &=\frac{(h+1)(h+2)(h+3)}
 {(w-1)(w-2)(h-w+4)},\\
\frac{\binom qw}{\binom{q+6}{w+2}}
 &=\frac{(w+1)(w+2)\prod_{i=1}^4(q-w+i)}
 {\prod_{i=1}^6(q+i)}.
\end{aligned}
\]
Substituting these into the ratio of \eqref{eq:Q-closed},
 we obtain
\begin{equation}\label{eq:Q-recurrence}
\frac{Q(q+6,w+2)}{Q(q,w)}=
\frac{\prod_{i=2}^{5}(q-w+i)}
     {q(q+2)(q+4)(q-2w+5)}
\frac{P(q+6,w+2)}{P(q,w)}.
\end{equation}
All denominators are positive in the parameter range
used below.

We now prove \eqref{eq:uniform-Q41}.
Since $q$ is odd, write $q=6n+j$ with
$j\in\{1,3,5\}$, and define
\[
R_j(n)=\frac{Q(6n+j+6,w_0(6n+j)+2)}
              {Q(6n+j,w_0(6n+j))}.
\]
For $j=1,3,5$, the corresponding values of $w_0(6n+j)$
are $2n+4,2n+4,2n+5$, respectively.
Substitution into \eqref{eq:Q-recurrence} gives
\begin{align*}
R_1(n)&=\frac{4n(4n-1)(4n+1)}
 {3(n-1)(6n+1)(6n+5)}
 \frac{14n^2+13n+1}{14n^2-15n+2},\\
R_3(n)&=\frac{4(n+1)^2(4n+1)(4n+3)}
 {3n^2(6n+5)(6n+7)}
 \frac{14n+15}{14n+1},\\
R_5(n)&=\frac{4(n+1)^2(4n+3)(4n+5)}
 {3n^2(6n+5)(6n+7)}.
\end{align*}
We show that each ratio is less than one in its
admissible range.

For $R_1$, we have $n\ge7$ and
\[
\frac{n}{n-1}\le\frac76,\qquad
\frac{14n^2+13n+1}{14n^2-15n+2}\le\frac75.
\]
Since $(4n-1)/(6n+1)$ and $(4n+1)/(6n+5)$
are both less than $2/3$, we obtain
\[
R_1(n)<\frac43\cdot\frac76\cdot
\left(\frac23\right)^2\cdot\frac75
=\frac{392}{405}<1.
\]
For $R_3$, again $n\ge7$, and
\[
\frac{n+1}{n}\le\frac87,\qquad
\frac{14n+15}{14n+1}<\frac{n+1}{n}.
\]
 The two remaining linear ratios
are less than $2/3$. Hence
\[
R_3(n)<\frac43\left(\frac87\right)^3
\left(\frac23\right)^2
=\frac{8192}{9261}<1.
\]
Finally, $R_5$ has $n\ge6$. Then
\[
\frac{n+1}{n}\le\frac76,\qquad
\frac{4n+3}{6n+5}<\frac23,\qquad
\frac{4n+5}{6n+7}<\frac34.
\]
Hence
\[
R_5(n)<\frac43\left(\frac76\right)^2
\cdot\frac23\cdot\frac34
=\frac{49}{54}<1.
\]
Thus $Q(q,w_0(q))$ decreases when $q$ is increased
by six within any odd residue class modulo six.
The first admissible values are $q=43,45,41$ for
$j=1,3,5$, respectively, and by \eqref{eq:Q-closed}, we have
\[
\begin{aligned}
Q(43,18)&=\frac{19103744}{4091349}<5,\\
Q(45,18)&=\frac{476872704}{58642669}<9,\\
Q(41,17)&=\frac{229376}{33263}<7.
\end{aligned}
\]
It follows that $Q(q,w_0(q))<9$ for every odd
$q\ge41$, proving \eqref{eq:uniform-Q41}.

It remains to prove \eqref{eq:uniform-density41}.
For fixed $q$, set
\[
T(r)=\frac{(q-r)(r+1)}q.
\]
Since $T''(r)=-2/q<0$, the minimum of $T$ on
$[w_0(q),(q+3)/2]$ occurs at an endpoint.
Moreover, $T'(r)=(q-1-2r)/q$, so $T$ is increasing
for $r\le(q-1)/2$. As
\[
\frac q3+3\le w_0(q)<\frac q3+4
\le\frac{q-1}{2}\qquad(q\ge41),
\]
we have
\[
\begin{aligned}
T(w_0(q))
&\ge T\left(\frac q3+3\right)
 =\frac{2q}{9}+\frac53-\frac{12}{q}
 \ge\frac{3869}{369}>10,\\
T\left(\frac{q+3}{2}\right)
&=\frac q4+\frac12-\frac{15}{4q}
 \ge\frac{437}{41}>10.
\end{aligned}
\]
Both expressions used in the lower bounds are increasing
in $q$, and their displayed constants are obtained at
$q=41$. Therefore $T(r)>10$ throughout the interval,
which proves \eqref{eq:uniform-density41}.
\end{proof}

\section*{Acknowledgments}
This work was supported in part by the National Key Research and Development
Program of China under Grant 2022YFA1004900, in part by the National
Natural Science Foundation of China under Grant 62571301, and in part by the Shandong Provincial Natural Science Foundation under Grant ZR2025QA05.

\section*{AI Declaration}
During the preparation of this manuscript, the author used
OpenAI's ChatGPT (GPT-6) to assist with
exploratory discussions, checking some of the main
proofs, and improving the exposition. The author independently
verified the mathematical content and references and
takes full responsibility for the manuscript.

\end{document}